\def\arcr{\@arraycr}
\begin{document}
\title{Context-Dependent Effects and Concurrency in Guarded Interaction Trees}

\author{Sergei Stepanenko}
\email{serst@itu.dk}
\orcid{0000-0002-7322-5644}
\affiliation{\institution{IT University of Copenhagen}
  \city{Copenhagen}
  \country{Denmark}
}
\author{Emma Nardino}
\email{emma.nardino@ens-lyon.fr}
\orcid{0009-0004-2942-8761}
\affiliation{\institution{Univ Lyon, ENS de Lyon, UCBL, CNRS, Inria, LIP, UMR 5668, F-69342, Lyon CEDEX 07}
  \city{Lyon} 
  \country{France}
}
\author{Virgil Marionneau}
\email{virgil.marionneau@ens-rennes.fr}
\orcid{0009-0005-9568-4592}
\affiliation{\institution{ENS Rennes}
  \city{Bruz}
  \country{France}
}
\author{Dan Frumin}
\email{d.frumin@rug.nl}
\orcid{0000-0001-5864-7278}
\affiliation{\institution{University of Groningen}
  \city{Groningen}
  \country{The Netherlands}
}
\author{Amin Timany}
\email{timany@cs.au.dk}
\orcid{0000-0002-2237-851X}
\affiliation{\institution{Aarhus University}
  \city{Aarhus}
  \country{Denmark}
}
\author{Lars Birkedal}
\email{birkedal@cs.au.dk}
\orcid{0000-0003-1320-0098}
\affiliation{\institution{Aarhus University}
  \city{Aarhus}
  \country{Denmark}
}

\renewcommand{\shortauthors}{S. Stepanenko et al.}
\renewcommand{\app}{\mathbin{+\mkern-8mu+}}

\begin{abstract}
  Guarded Interaction Trees are a structure and a fully formalized framework for
  representing higher-order computations with higher-order effects in Rocq.
We present an extension of Guarded Interaction Trees to support formal reasoning about context-dependent effects.
  That is, effects whose behaviors depend on the evaluation context, \eg{}, \texttt{call/cc}, \texttt{shift} and \texttt{reset}.
  Using and reasoning about such effects is challenging since certain compositionality principles no longer hold in the presence of such effects.
  For example, the so-called ``bind rule'' in modern program logics (which allows one to reason modularly about a term inside an evaluation context) is no longer valid.
  The goal of our extension is to support representation and reasoning about context-dependent effects in the most painless way possible.
  To that end, our extension is conservative: the reasoning principles (and the Rocq implementation) for context-independent effects remain the same.
We show that our implementation of context-dependent effects is viable and powerful.
  We use it to give direct-style denotational semantics for higher-order programming languages with \texttt{call/cc} and with delimited continuations.
We extend the program logic for Guarded Interaction Trees to account for context-dependent effects, and we use the program logic to prove that the
  denotational semantics is adequate with respect to the operational semantics.
  This is achieved by constructing logical relations between syntax and semantics inside the program logic.
Additionally, we retain the ability to combine multiple effects in a modular way, which we demonstrate
  by showing type soundness for safe interoperability of a programming language with delimited continuations and a programming language with higher-order store.
Furthermore, as another contribution, in addition to context-dependent effects, we show how to extend Guarded Interaction Trees with preemptive concurrency.
  To support implementation and verification of concurrent data structures and algorithms in the presence of preemptive concurrency one requires atomic state modification operations, \eg{}, compare-and-exchange.
  We demonstrate how the ability to pass lambda functions as effect inputs allows us to define a single generic atomic state modification effect from which various atomic operations can be derived.

\end{abstract}
\keywords{Rocq, Iris, denotational semantics, logical relations, control flow operators, continuations, delimited continuations, concurrency}

\begin{CCSXML}
  <ccs2012>
     <concept>
         <concept_id>10003752.10003790.10002990</concept_id>
         <concept_desc>Theory of computation~Logic and verification</concept_desc>
         <concept_significance>300</concept_significance>
         </concept>
     <concept>
         <concept_id>10003752.10010124.10010131</concept_id>
         <concept_desc>Theory of computation~Program semantics</concept_desc>
         <concept_significance>500</concept_significance>
         </concept>
   </ccs2012>
\end{CCSXML}
\ccsdesc[300]{Theory of computation~Logic and verification}
\ccsdesc[500]{Theory of computation~Program semantics}

\maketitle
\section{Introduction}
\label{global:sec:intro}
Despite a lot of recent progress, representing and reasoning about programming languages in proof assistants, such as Rocq, is still considered a major challenge.
The design space is wide and many approaches have been considered.
Recently, research on a novel point in the design space was initiated with the introduction of Interaction Trees~\cite{DBLP:journals/pacmpl/XiaZHHMPZ20}, or ITrees for short.
ITrees were introduced to simplify representation and reasoning about possibly non-terminating programs with side effects in Rocq.
In a sense, ITrees provide a target for denotational semantics of
programming languages, which allows one to abstract from syntactic
details often found in models based on operational semantics.
ITrees specifically allow one to easily represent and reason about various effects and their combinations in a modular way.
A wide range of subsequent applications of ITrees (see, \eg{}, \citet{KohEtAl:2019,ZhangEtAl:2021,ZakowskiEtAl:2021,SilverEtAl:2023}, among others) show that they indeed work very well for representing and reasoning about first-order programs with first-order effects.
As part of the trade-offs, ITrees could not support higher-order representations and higher-order effects.
To address this challenge, Guarded Interaction Trees (or \gitrees for short) were introduced \cite{DBLP:journals/pacmpl/FruminTB24}.

While \gitrees support \emph{higher-order effects}, \eg{}, higher-order store which is known to be particularly challenging, they are limited to
effects that do not alter the control flow of the program, \ie{},
effects that do no modify \emph{the continuation}. As a consequence, one cannot use \gitrees to
give direct-style denotational semantics of programming languages with
context-dependent effects such as \texttt{call/cc}, exceptions, or delimited
continuations. It is of course possible to give semantics to, \eg{},
\texttt{call/cc} using a CPS translation, but this would require a global
transformation which complicates representation and reasoning, especially in
combination with other effects present. In this paper we extend \gitrees to support direct-style representation and reasoning about
higher-order programs with higher-order context-dependent effects in Rocq, and
evaluate its modularity.

We want to stress that our extension to context-dependent effects is not only
theoretically interesting, but also important for scalability, since many real
mainstream programming languages include context-dependent effects.
Indeed, exceptions are now a standard feature in many languages, and while other
context-dependent effects such as delimited continuations are not as widespread in
mainstream programming languages, they are present in the core calculi used for
some such languages: for example, the Glasgow Haskell Compiler core language was
recently extended with delimited continuations to support the introduction of
effect systems, which can be efficiently developed on top of delimited
continuations~\cite{GHC/Cont}. Moreover, effect handlers, which rely on control
flow operators, have recently been introduced in OCaml 5.0, and as a design
feature in Helium~\cite{10.1145/3290319}, Koka~\cite{DBLP:conf/icfp/Leijen17,DBLP:journals/corr/Leijen14}, and other languages.
Note that these languages do not only include forms of delimited continuations, but also other effects, which underscores the importance of considering delimited control in combination with other effects.

\paragraph{Overview of the technical development and key challenges.}
Similarly to how ITrees are defined as a coinductive type in Rocq,
\gitrees are defined as a guarded recursive type.
(This is to support function spaces in GITrees, because in the presence of function
spaces there are negative occurrences of the recursive type and hence one cannot
simply define the type of GITrees using coinduction \citep{DBLP:journals/pacmpl/FruminTB24}).
Rocq does not directly support guarded recursive types, so GITrees are
defined using a fragment of guarded type theory implemented in Rocq as part of the Iris framework \cite{irisWWW}.
To work efficiently with GITrees we make use of other Iris features like separation logic and Iris Proof Mode \cite{DBLP:conf/popl/KrebbersTB17}, which we use to define custom program logics for different (combinations of) effects.
This enables us to reason about \gitrees smoothly in the Iris logic in Rocq, in much the same way as one works directly in Rocq.
We recall the precise definition of GITrees in \Cref{global:sec:gitrees}.

Broadly speaking, \gitrees model effects in the following way.
The type of GITrees is parameterized over a set of effectful operations.
Each operation is given meaning by a \emph{reifier} function, using a form of state monad.
From this, we define the \emph{reduction} relation of GITrees, which gives semantics to computations represented by GITrees.

To support context-dependent effects, we extend (see \Cref{global:sec:callcc-lang}) the notion of a reifier so that reification of effects can also depend on the context; technically, the reifier operation becomes parameterized by a suitable GITree continuation.
This extension allows us to give semantics to context-dependent effects, but it comes at a price.
In particular, following the change in semantics, we need to reformulate the program logic for GITrees:
in the presence of context-dependent effects (like \texttt{call/cc}), the so-called ``bind''-rule becomes unsound.
Of course, we do not want this reformulation to complicate reasoning about computations that do \emph{not} include context-dependent effects.
To that end, we parameterize the GITrees (and the program logic) by a flag, which allows us to recover the original proof rules and make sure that all of the original GITrees framework still works with our extension.

To motivate the extension to context-dependent effects, we give direct-style denotational semantics to a higher-order programming language, $\iocclang$ with \texttt{call/cc} (see \Cref{global:sec:callcc-lang} for details).
Furthermore, we use the derived program logic to construct a logical relation between the denotational and the operational semantics to prove computational adequacy of our model.
To provide guidance on modeling context-dependent effects, we present exceptions as a pedagogical example (\Cref{global:exceptions}), showing step-by-step how to design effects, maintain appropriate state (a handler stack), and derive program logic rules.

Our main interest, however, lies in the treatment of delimited continuations.
In \Cref{global:sec:delim-lang} we show how to represent delimited continuations as effects in \gitrees,
and we use them to define a novel denotational semantics for a
programming language with \texttt{shift} and \texttt{reset} operators.
We prove that our denotational semantics is sound with respect to the operational semantics (given by an extension of the CEK abstract machine \citep{DBLP:conf/ifip2/FelleisenF87}).
We additionally use the program logic to define a logical relation, and prove computational adequacy and \emph{semantic} type soundness.
We recall that semantic type soundness is interesting because it allows one to combine syntactically well-formed programs with syntactically ill-typed, but semantically well-behaved programs \cite{logical-approach-to-type-soundness}.

As we mentioned, it is important to consider delimited continuations not only on their own, but in combinations with other effects.
And indeed, one of the key points of ITrees, and therefore also of \gitrees, is that they support reasoning about effecting and language interoperability by establishing a common unifying semantic framework.
In this paper, we consider (in \Cref{global:sec:ffi}) an example of such interaction: we show a type-safe embedding of $\delimlang$, a language with delimited continuations into a language $\embedstlang$, a language with with higher-order store.
We allow $\delimlang$ expressions to be embedded into $\embedstlang$ by surrounding them by simple glue code, and use a type system to ensure type safety of the combined language.
To define the semantics of the combined language we rely on the modularity of GITrees, and combine reifers for delimited continuations with reifiers for higher-order store.
We prove type safety of the combined language by constructing a logical relation and use the program logic both to define the logical relation and to verify the glue code between the two languages.
The type system for the combined language naturally requires that the embedded code is well-typed according to the type system for $\delimlang$ and thus we can rely on the type soundness of $\delimlang$ (proved in the earlier Section \ref{global:sec:delim-lang}) when proving type safety for the combined language. At the end of \Cref{global:sec:ffi}, we give an example of how to verify a more involved interaction of effects, albeit without the type system.

Additionally, we demonstrate the flexibility of \gitrees by presenting another extension of \gitrees, preemptive concurrency (\Cref{global:concurrency}), which is orthogonal to the extension with context-dependent effects.
The key insight is that \gitrees' support for higher-order effects, specifically, effects that take lambda functions as inputs, enables a generic atomic state modification effect.
This is essential for correctly implementing shared mutable state in a concurrent setting, as we demonstrate by presenting denotational semantics for a language featuring fork and atomic exchange operations.

\paragraph{Summary of Contributions.}
In summary, we present:
\begin{itemize}
\item A conservative (with respect to the old results) extension to
  \gitrees for representing and reasoning about context-dependent effects (\Cref{global:sec:callcc-lang}).
\item A sound and adequate model of a calculus with \texttt{call/cc} and \texttt{throw},
  implemented in a direct style (\Cref{global:sec:callcc-sem}).
\item A tutorial-style presentation of modeling exceptions, providing step-by-step guidance on designing context-dependent effects in \gitrees (\Cref{global:exceptions}).
\item A sound and adequate model of a calculus with delimited
  continuations, with operations \texttt{shift} and \texttt{reset}, implemented in a direct style (\Cref{{global:sec:delim-lang}}).
\item A type system for interoperability between a programming language with delimited continuations and a programming language with higher-order store, with a semantic type safety proof (\Cref{{global:sec:ffi}}).
\item An extension of \gitrees with preemptive concurrency, demonstrating how lambda functions in effects enable generic atomic operations, with a sound model of an affine language with fork and and runtime linearity checks using atomic exchange (\Cref{global:concurrency}).
\end{itemize}
All results in the paper have been formalized in Rocq as a modification to the GITrees library and the previously proved results have been ported to our extension.
We conclude and discuss related work in \Cref{global:sec:conc}.
Before we go on with the main part of the paper, we recall some background material on \gitrees.

\paragraph{Contributions Compared to the ESOP'25 \citep{DBLP:conf/esop/StepanenkoNFTB25} Version of this Paper.}
Most of this publication is an almost verbatim repetition of the earlier publication \citep{DBLP:conf/esop/StepanenkoNFTB25} with minor adjustment and updates, \eg{}, fixing typos, making the text, including this \nameref{global:sec:intro} section, consistent with the additions enumerated below.
Compared to this earlier version \citep{DBLP:conf/esop/StepanenkoNFTB25} the present publication contains two additional sections: 
\begin{itemize}
  \item \Cref{global:exceptions} contains an example of providing semantics for a language with exceptions
  \item \Cref{global:concurrency} contains details of extending Guarded Interaction Trees with support for preemptive concurrency with atomic effects
\end{itemize}
\Cref{global:sec:intro,global:sec:gitrees,global:sec:callcc-lang,global:sec:delim-lang,global:sec:ffi,global:sec:conc} appear in the first author's PhD dissertation.

\section{Guarded Interaction Trees}
\label{global:sec:gitrees}

In this section we provide an introduction to guarded interaction trees.
Our treatment is brief, and we refer the reader to the original paper for details~\cite{DBLP:journals/pacmpl/FruminTB24}.

\paragraph{Iris and Guarded Type Theory.}
Guarded Interaction trees (\gitrees{}) are defined in Iris logic. Here
we briefly touch Iris, and refer the reader to the literature on
Iris~\cite{DBLP:journals/jfp/JungKJBBD18} and guarded type
theory~\cite{DBLP:journals/corr/abs-1208-3596} for more in-depth
details.
Iris is a separation logic framework built on top of a model of \emph{guarded type theory}, the main use of which is to solve recursive equations and define guarded recursive types, such as the type of \gitrees described below.
Moreover, Iris has a specialized proof mode~\cite{DBLP:conf/popl/KrebbersTB17}, implemented in Rocq.
This allows the users of Iris to carry out formal reasoning in separation logic as if they are proving things normally Rocq, as we have done in the formalization of this work.
For this reason, in the paper we will work with Iris and its type theory informally.
Still, we need to say a few things about the foundations.

The syntax of Iris, shown in~\Cref{fig:gt:grammar}, contains types, terms, and propositions.
The grammar is standard for higher-order logic, with the exception of the guarded types fragment, and separation logic connectives.
The type of propositions is denoted $\Prop$.
The \emph{guarded} part of guarded type theory is the ``later'' modality $\latert$.
Intuitively, we view all types as indexed by a natural number, where $\tau_{n}$ contains elements of $\tau$ ``at time'' $n$.
Then $\latert \tau$ contains elements of $\tau$ at a later time; that is, $(\latert \tau)_{n}=\tau_{n-1}$.
There is an embedding $\Next : \tau \to \latert \tau$, and there is a \emph{guarded} fixed point combinator $\fix_{\tau}: (\latert \tau \to \tau) \to \tau$, similar to the unguarded version in PCF.
We can also lift functions to $\latert$: given $f : A \to B$, we have $\latert f : \latert A \to \latert B$.

For the proposition, Iris contains the usual separation logic connectives, and the two modalities: ``later'' $\later$ and ``persistently'' $\Box$.
The propositional $\later$ modality reflects the type-level later modality $\latert$ on the level of propositions, as justified by the following rule: $\later (\alpha =_\tau \beta) \dashv \vdash \Next(\alpha) =_{\latert \tau} \Next(\beta)$.
The persistence modality $\Box P$ states that the proposition $P$ is available without claiming any resources (as it normally is the case in separation logic); crucially it makes the proposition duplicable: $\Box P \vdash (\Box P)\ast (\Box P)$. An example of a persistent proposition is the
invariant proposition $\knowInv{}{P}$, which satisfies $\knowInv{}{P} \vdash \Box \knowInv{}{P}$.

\paragraph{Guarded Interaction Trees.}
\begin{figure}[t]
  \begin{align*}
    \type \bnfdef{}&
                     \Prop \mid
                     0 \mid
                     \Tunit \mid
                     \Tbool \mid \Tnat \mid
                     \type + \type \mid
                     \type \times \type \mid
                     \type \to \type \mid \latert \type \mid
                     I \mid
                     \Sigma_{\idx\in \mathtt{I}} \type_\idx \mid
                     \Pi_{\idx\in \mathtt{I}} \type_\idx \mid
                     \dots
    \\[0.4em]
    \term \bnfdef{}&
                     \var \mid
                     \sigfn(\term_1, \dots, \term_n) \mid
                     \textlog{abort}\; \term \mid
                     () \mid
                     (\term, \term) \mid
                     \pi_i\; \term \mid
                     \Lam \var:\type.\term \mid \\
    & \textlog{inj}_i\; \term \mid
    \textlog{match}\; \term \;\textlog{with}\; \overline{\textlog{inj}_i\; \var. \term} \;\textlog{end} \mid
    \Next(\term) \mid
    \fix_\type \mid \dots
    \\[0.4em]
    \prop \bnfdef{}&
                     \FALSE \mid
\term =_\type \term \mid
\prop \lor \prop \mid
                     \prop \to \prop \mid
\All \var:\type. \prop \mid
    \prop \ast \prop \mid
    \prop \wand \prop \mid
\Box \prop \mid
    \knowInv{}{P} \mid
\later \prop \mid \dots
\end{align*}
  \caption{Grammar for the Iris base logic.}
  \label{fig:gt:grammar}
\end{figure}
\begin{figure}[t]
\begin{align*}
    & \mathsf{guarded\ type}\ \IT_E(A) {}= \Rret : A \to \IT_E(A) \\
    &\ \ALT \Fun : \latert (\IT_E(A) \to \IT_E(A)) \to \IT_E(A)\\
    &\ \ALT \Err : \Error \to \IT_E(A)\\
    &\ \ALT \Tau : \latert \IT_E(A) \to \IT_E(A) \\
    &\ \ALT \Vis : \prod_{\idx \in \mathtt{I}} \big( \Ins_{\idx}(\IT_E(A)) \times (\Outs_{\idx}(\IT_E(A)) \to \latert \IT_E(A))\big) \to \IT_E(A)
  \end{align*}
  \caption{Guarded datatype of interaction trees.}
  \label{fig:gt:gitrees_def}
\end{figure}
Guarded recursive datatypes are datatypes obtained from recursive equations of the form $X = F(\latert X)$.
In other words, guarded recursive datatypes are similar to the regular datatypes you see in normal programming languages, but every recursive occurrence of the type must be guarded by the $\latert$ modality.
The datatype we are concerned with here is the type of \gitrees, shown in \Cref{fig:gt:gitrees_def}.
It is parameterized over two types: the ground type $A$ and the effect signature $E$ (more on it below).

Guarded Interaction Trees represent computational trees in which the leaves are of the ground type ($\Rret(a)$), error states ($\Err(e)$), and functions ($\Fun(f)$).
The leaves $\Rret(a)$ and $\Fun(f)$ are also called \emph{values}, and we write $\ITv_{E}(A)$ for the type of $\IT_{E}(A)$-values.

The nodes of the computation trees are of the two kinds.
The first one is a ``silent step'' constructor $\Tau(\alpha)$.
It represents an unobservable internal step of the computation.
For convenience, we use the function $\Tick \eqdef \Tau \circ \Next \col \IT_E(A) \to \IT_E(A)$
that ``delays'' its argument.
This function satisfies the following equation: $\Tick(\alpha) = \Tick(\beta) \dashv \vdash \later (\alpha = \beta)$.

The second kind of nodes are effects given by $\Vis_{i}(x,k)$.
The parameters $I$, $\Ins$ and $\Outs$ are part of the effect signature $E$.
The set $I$ is the set of \emph{names} of operations. The \emph{arities} of an operation $i\in I$ are given by functors $\Ins_{i}$ and $\Outs_{i}$.
Let us give an example.

Consider the following signature for store effects.
The signature $E_{\mathsf{state}}$ consists of effects $\{\mathtt{write}, \mathtt{read}, \mathtt{alloc}\}$ with the following input/output arities:
\begin{align*}
  \Ins_{\mathtt{write}}(X) &\eqdef \Loc \times \latert X & \Ins_{\mathtt{read}}(X) &\eqdef \Loc & \Ins_{\mathtt{alloc}}(X) & \eqdef \latert X \\
  \Outs_{\mathtt{write}}(X) &\eqdef \Tunit & \Outs_{\mathtt{read}}(X) &\eqdef \latert X & \Outs_{\mathtt{alloc}}(X) &\eqdef \Loc
\end{align*}
For example, $\mathtt{write}$ expect a location and a new \gitree as its input, and simply returns the unit value as an output.
We write $\Vis_{\mathtt{write}}((\ell, \alpha), \Lam \_. \beta)$ for the computation that invokes the $\mathtt{write}$ effect with arguments $\ell$ and $\alpha$, waits for it to return, and proceeds as $\beta$.
Thus, the first argument for $\Vis_{i}$ is the input, and the second one is the continuation dependent on the output. This continuation determines the branching in (G)Itrees.

For effects like above, it is usually convenient to provide wrappers:
\begin{align*}
  \ALLOC(\alpha \col \IT, k \col \Loc \to \IT) &\eqdef \Vis_{\mathtt{alloc}}(\Next(\alpha), \Next \circ k) \\
  \READ(\ell \col \Loc) &\eqdef \Vis_{\mathtt{read}}(\ell, \lambda x. x) \\
  \WRITE(\ell \col \Loc, \alpha \col \IT) &\eqdef \Vis_{\mathtt{write}}((\ell, \Next(\alpha)), \lambda x . \Next(\Rret(\inj())))
\end{align*}

When the signature and the return type are clear from the context, we simply write $\IT$ and $\ITv$ for the \gitrees and \gitree-values.
\paragraph{Equational theory.}
\gitrees come with a number of operations (defined using the recursion principle) that are used for writing and composing computations.
Here we list some of those operations which we will be using.
\begin{figure*}[t]
\begin{align*}
      \getval(\Rret(a),f) &= f(\Rret(a)) & \getval(\Tau(t),f) &= \Tau(\latert \getval(t,f)) \\
      \getval(\Fun(g),f) &= f(\Fun(g)) & \getval(\Tick(\alpha),f) &= \Tick(\getval(\alpha,f)) \\
      \getval(\Err(e),f) &= \Err(e) & \getval(\Vis_i(x,k),f) &= \Vis_i(x, \latert \getval(-,f) \circ k)
    \end{align*}
\caption{Example function on Guarded Interaction Trees.}
  \label{fig:gt:functions}
\end{figure*}
The function $\getval(\alpha, f : \ITv \to \IT)$ are used for sequencing computations, and its corresponding equations are shown in \Cref{fig:gt:functions}.
Intuitively, $\getval(\alpha, f)$ first tries to compute $\alpha$ to a value (a $\Rret(a)$ or a $\Fun(g)$), and then calls $f$ on that value.
Similarly, $\getfun(\alpha: \IT, f : \latert(\IT \to \IT) \to \IT)$ and $\getret(\alpha \col \IT_E(A), f : A \to \IT_E(A))$ first compute $\alpha$ to a value; if that value is a function $\Fun(g)$ (resp., $\Rret(a)$), then it proceeds with $f(g)$ (resp., $f(a)$).
Otherwise it results in an runtime error.

Crucially, to work with higher-order computations, \gitrees provide the ``call-by-value'' application $\APPs{\alpha}{\beta}$ satisfying the following equations:
\begin{align*}
  \APPs{\alpha}{\Tick(\beta)} &= \Tick(\APPs{\alpha}{\beta})
  & \APPs{\alpha}{\Vis_i(x,k)} &= \Vis_i(x, \Lam y. \APPsl{\Next(\alpha)}{k\ y})\\
  \APPs{\Tick(\alpha)}{\beta_v} &= \Tick(\APPs{\alpha}{\beta_v})
  & \APPs{\Vis_i(x,k)}{\beta_v} &= \Vis_i(x, \Lam y. \APPsl{k\ y}{\Next(\beta_v)})\\
  \APPs{\Fun(\Next(g))}{\beta_v} &= \Tick(g(\beta_v))
  & \APPs{\alpha}{\beta} &= \Err(\RunTime) \mbox{ in other cases }
\end{align*}
where $\APPsl{-}{-}$ is defined as the lifting of $\APPs{-}{-}$ to $\latert \IT_{E}(A) \to \latert \IT_{E}(A) \to \latert \IT_{E}(A)$, and $\beta_v \in \ITv_{E}(A)$ is either $\Rret(a)$ or $\Fun(g)$.

The application function $\APPs{\alpha}{\beta}$ simulates strict function application.
It first tries to evaluate $\beta$ to a value $\beta_{v}$.
Then it tires to evaluate $\alpha$ to a function $f$.
If it succeeds, then it invokes $f(\beta_{v})$.
If at any point it fails, application results in a runtime error.

For the often-used case of \gitrees where the ground type includes natural numbers, we use the function $\NATOP : (\mathbb{N} \to \mathbb{N} \to \mathbb{N}) \to IT \to IT \to IT$ which lifts binary functions on natural numbers to binary functions on \gitrees.
That is, $\NATOP_{f}(\alpha, \beta)$ first evaluate \gitrees $\beta$ and $\alpha$ to values.
If those values are natural numbers, then it computes $f$ of those numbers and returns the result as a \gitree.
Otherwise,  it returns a runtime error $\Err(\RunTime)$.

\paragraph{Reification and reduction relation.}
The semantics for effects are given in terms of \emph{reifiers}.
A reifier for the signature $E$ is a tuple $(\State, r)$, where
$\State$ is a type representing the internal state needed to reify the effects, and $r$ is a reifier function of the type given in \Cref{fig:gt:reifier_sig}.
The idea is that $r_{i}$ uses the internal state $\State$ to compute the output of the effect $i$ based on its input.
\begin{figure}[t]
  \[
    r : \prod_{\idx \in E} \Ins_{\idx}(\IT_E) \times \stateO \to \optionO(\Outs_{\idx}(\IT_E) \times \stateO)
  \]
  \begin{mathpar}
    \infer
    {r_i(x,\sigma) = \Some(y, \sigma') \and k\ y = \Next(\beta)}
    {\reify(\Vis_i(x, k), \sigma) = (\Tick(\beta), \sigma')}
    \and
    \infer
    {r_i(x,\sigma) = \None}
    {\reify(\Vis_i(x, k), \sigma) = (\Err(\RunTime), \sigma)}
  \end{mathpar}
  \caption{Signature of reifiers and the reification function}
  \label{fig:gt:reifier_sig}
\end{figure}

For example, for the store effects we take $\State$ to be a map from locations to $\latert \IT$ (representing the heap); and we define the following reifier functions:
\begin{align*}
  r_{\mathtt{write}}((\ell, \alpha),\sigma) &= \Some((), \sigma[\ell \mapsto \alpha]) \tag{where $\ell \in \sigma$, and $\None$ otherwise} \\
  r_{\mathtt{read}}(\ell, \sigma) &= \Some(\alpha, \sigma) \tag{where $\sigma(\ell) = \alpha$, and $\None$ otherwise} \\
  r_{\mathtt{alloc}}(\alpha, \sigma) &= \Some(\ell, \sigma[\ell \mapsto \alpha]) \tag{where $\ell \notin \sigma$}
\end{align*}

Given reifiers for all the effects, we
define a function $\reify : \IT \times \stateO \to \IT \times \stateO$ (as in \Cref{fig:gt:reifier_sig})
that, given $(\alpha, \sigma)$ reifies the top-level effect in $\alpha$ using the state $\sigma$, and returns the reified \gitree and the updated state.

The $\reify$ function is then used to give reduction semantics for \gitrees{}.
We write $(\alpha,\sigma) \istep (\beta, \sigma')$ for such a reduction step.
The definition of $\istep$ is given internally in the logic:
\begin{align*}
  (\alpha,\sigma) \istep (\beta,\sigma') &\eqdef \big(\alpha = \Tick(\beta) \wedge \sigma = \sigma' \big) \\ &\vee \big(\Exists i\,x\,k. \alpha = \Vis_i(x,k)
   \wedge \reify(\alpha,\sigma) = (\Tick(\beta),\sigma')\big)
\end{align*}
That is, either $\alpha$ is a ``delayed'' computation $\Tick(\beta)$, which then reduces to $\beta$; or it is an effect that can be reified.
Recall that we write $\Tick$ for the composition $\Tau \circ \Next$.

Note that the $\reify$ function operates on the top-level effect of the \gitree.
But what if the top-level constructor is not $\Vis$, \eg{} if we have an effect inside an ``evaluation context''?
The role of evaluation contexts in \gitrees is played by \emph{homomorphisms}, which also allow us to bubble up necessary effects to the top of the \gitree.
\begin{definition}[Homomorphism]
  \label{def:hom}
  A map $f : \IT \to \IT$ is a homomorphism, written $f \in \Hom$, if it satisfies:
  \begin{align*}
  f(\Err(e)) = \Err(e) \quad
  f(\Tick(\alpha)) = \Tick(f(\alpha)) \quad
  f(\Vis_i(x,k)) = \Vis_i(x, \latert f \circ k)
  \end{align*}
\end{definition}
For example, $\Lam x. \APPs{\alpha}{x}$ is a homomorphism, and so is $\Lam x. \getval(x, f)$. On the other hand, $\Lam x. \Vis_{\mathtt{alloc}}(\Next(x), k)$ (for some fixed $k$) is \emph{not} a homomorphism.

\paragraph{Program logic.}
In order to reason about \gitrees, we employ the full power of the Iris separation logic framework.
The program logic operates on the propositions of the form $\wpre{\alpha}{\Phi}$.
This weakest precondition proposition intuitively states that the \gitree $\alpha$ is safe to reduce, and when it fully reduces, the resulting value satisfies the predicate $\Phi$.
Another important predicate is $\hasstate(\sigma)$, which signifies ownership of the current state $\sigma$.
\begin{figure}[t]
  \begin{mathpar}
    \inferH{wp-reify}
    {
      \begin{array}[c]{c}
      \hasstate(\sigma) \arcr
      \reify(\Vis_i(x,k), \sigma) = (\Tick(\beta), \sigma')
      \arcr
        \later\big(\hasstate(\sigma') \wand \wpre{\beta}{\Phi} \big)
      \end{array}
        }
    {\wpre{\Vis_i(x, k)}{\Phi}}
    \and
    \inferH{wp-hom}
    {f \in \Hom \quad \wpre{\alpha}{\Ret \beta_v. \wpre{f(\beta_v)}{\Phi}}}
    {\wpre{f(\alpha)}{\Phi}}
\end{mathpar}
  \caption{Selected weakest precondition rules.}
  \label{fig:gt:wp_rules}
\end{figure}

In \Cref{fig:gt:wp_rules} we show the rules, on which we focus in this work. Let us describe their meaning. The rule \ruleref{wp-reify} allows us to symbolically execute effects in \gitrees{}.
It is given in a general form, and is used to derive domain-specific rules for concrete effects.
Another important rule is \ruleref{wp-hom} which allows one to separate the reasoning about the computation from the reasoning about the context.
The reason why \ruleref{wp-hom} is sound (this is going to be important in the next section when we make it unsound), is because the reduction $\istep$ of \gitrees satisfies the following properties which allow one disentangle a homomorphism from the \gitree it's applied to:
\begin{lemma}
  \label{lem:r:hom_istep}
  Let $f$ be a homomorphism.
  Then,
  \begin{itemize}
  \item $(\alpha,\sigma)\istep(\beta,\sigma')$ implies
    $(f(\alpha),\sigma)\istep(f(\beta),\sigma')$;
  \item   If $(f(\alpha), \sigma)\istep(\beta',\sigma')$ then either
    $\alpha$ is a \gitree-value, or
    there exists $\beta$ such that
    $(\alpha,\sigma)\istep(\beta,\sigma')$ and $\later(f(\beta) = \beta')$.
  \end{itemize}
\end{lemma}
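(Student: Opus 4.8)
The plan is to prove both implications by structural case analysis, using only the defining equations of homomorphisms (\Cref{def:hom}), the definitions of $\istep$ and $\reify$, and three bookkeeping facts about the guarded structure: the naturality of lifting $\latert f \circ \Next = \Next \circ f$, the $\Tick$-equation $\Tick(\alpha) = \Tick(\beta) \dashv\vdash \later(\alpha = \beta)$, and the later/equality correspondence $\later(\alpha = \beta) \dashv\vdash \Next(\alpha) = \Next(\beta)$. I will also use freely that the constructors of $\IT$ are pairwise disjoint and injective (so that, e.g., $\Err(e)$ is neither a $\Tick$ nor a $\Vis$ node), that the reifier output being a $\Tick$ rather than $\Err(\RunTime)$ pins down which reification rule fired, and that every element of $\latert\IT$ arises as $\Next$ of its delayed content.

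For the forward direction I would unfold $(\alpha,\sigma)\istep(\beta,\sigma')$ into its two disjuncts. If $\alpha = \Tick(\beta)$ with $\sigma=\sigma'$, then $f(\alpha) = \Tick(f(\beta))$ by the homomorphism law, so $(f(\alpha),\sigma)\istep(f(\beta),\sigma)$ via the first disjunct. If instead $\alpha = \Vis_i(x,k)$ and $\reify(\alpha,\sigma) = (\Tick(\beta),\sigma')$, then, since the output is a $\Tick$ and not $\Err(\RunTime)$, the reifier fired its first rule, giving $r_i(x,\sigma) = \Some(y,\sigma')$ and $k\,y = \Next(\beta)$. Now $f(\alpha) = \Vis_i(x, \latert f\circ k)$ and $(\latert f\circ k)\,y = \latert f(\Next(\beta)) = \Next(f(\beta))$ by naturality. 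Because $r_i$ depends only on $x$ and $\sigma$ and not on the continuation, the same $\Some(y,\sigma')$ is produced, so $\reify(f(\alpha),\sigma) = (\Tick(f(\beta)),\sigma')$ and hence $(f(\alpha),\sigma)\istep(f(\beta),\sigma')$.

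For the backward direction I would do an exhaustive case analysis on $\alpha$ via the recursion principle for $\IT$. If $\alpha$ is a value ($\Rret(a)$ or $\Fun(g)$) we land immediately in the first alternative of the conclusion. If $\alpha = \Err(e)$, then $f(\alpha) = \Err(e)$ is neither a $\Tick$ nor a $\Vis$ node, so no step out of $f(\alpha)$ exists and the hypothesis is absurd. The substantive cases are $\alpha = \Tick(\gamma)$ and $\alpha = \Vis_i(x,k)$. In the $\Tick$ case, $f(\alpha) = \Tick(f(\gamma))$ is a $\Tau$-node, so the step comes from the first disjunct: $\Tick(f(\gamma)) = \Tick(\beta')$ and $\sigma=\sigma'$, whence $\later(f(\gamma)=\beta')$ by the $\Tick$-equation; taking $\beta := \gamma$ yields $(\alpha,\sigma)\istep(\gamma,\sigma)$ and $\later(f(\beta)=\beta')$. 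In the $\Vis$ case, $f(\alpha) = \Vis_i(x,\latert f\circ k)$, so the step comes from the second disjunct with $\reify(f(\alpha),\sigma) = (\Tick(\beta'),\sigma')$; the $\Tick$ output forces the first reifier rule, giving $r_i(x,\sigma) = \Some(y,\sigma')$ and $\latert f(k\,y) = \Next(\beta')$. Writing $k\,y = \Next(\beta)$ for its delayed content and applying naturality gives $\Next(f(\beta)) = \Next(\beta')$, hence $\later(f(\beta)=\beta')$; and since the same $r_i(x,\sigma) = \Some(y,\sigma')$ applies to $k$, we obtain $\reify(\alpha,\sigma) = (\Tick(\beta),\sigma')$, i.e. $(\alpha,\sigma)\istep(\beta,\sigma')$.

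The part that needs care, and where I expect the only real friction, is that all the relevant equalities between subtrees live one step of the later modality down: the conclusion is deliberately stated with $\later(f(\beta)=\beta')$ rather than a strict equality, and the backward direction must reflect the step of $f(\alpha)$ back to a step of $\alpha$ \emph{without} ever requiring a non-later equation. Concretely, from $\latert f(k\,y) = \Next(\beta')$ (and likewise from $\Tick(f(\gamma)) = \Tick(\beta')$) one must extract a witnessing subtree of $\alpha$ together with only the $\later$-weakened equation. This goes through precisely because $\latert\IT$-elements are uniformly $\Next$ of their content and because $r_i$ is insensitive to the continuation, so the successful reifications of $f(\alpha)$ and of $\alpha$ share the same $\Some(y,\sigma')$ and differ only by the post-composed homomorphism. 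Once this guarded bookkeeping is set up, and the $\Err$ and value cases are dispatched by disjointness of the $\IT$ constructors, no genuinely hard step remains.
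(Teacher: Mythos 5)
Your proposal is correct, and it follows what is essentially the canonical argument: the paper itself recalls \Cref{lem:r:hom_istep} as background from the original \gitrees{} work \cite{DBLP:journals/pacmpl/FruminTB24} without reproducing a proof, and the proof there (and in the Rocq formalization, via the lemmas corresponding to your two bullet points) proceeds exactly as you describe — case analysis on the constructor form of $\alpha$, the homomorphism equations, the two $\reify$ rules, disjointness of constructors, and the $\Next$/$\later$ bookkeeping, with the crucial observation that context-independent reifiers never see the continuation, so reification of $\Vis_i(x,k)$ and of $\Vis_i(x,\latert f \circ k)$ agree on the $\Some(y,\sigma')$ part. You also correctly isolate the point that fails for context-dependent reifiers, which is precisely why the paper's extension must give up \ruleref{wp-hom}.
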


Finally, as usual in Iris, the program logic satisfies an adequacy property, which allows one to relate propositions proved in the logic to the actual semantics:
\begin{theorem}
  \label{thm:wp_adequacy}
  Let $\alpha$ be an interaction tree and
  $\sigma$ be a state such that
  \[\hasstate(\sigma) \vdash \wpre{\alpha}{\Phi}
  \] is derivable for some meta-level predicate $\Phi$ (containing
  only intuitionistic logic connectives). Then for any $\beta$ and
  $\sigma'$ such that $(\alpha, \sigma) \istep^{\ast} (\beta,\sigma')$,
  one of the following two things hold:
  \begin{itemize}
  \item \emph{(adequacy)} either $\beta \in \ITv$, and $\Phi(\beta)$
    holds in the meta-logic;
  \item \emph{(safety)} or there are $\beta_1$ and $\sigma_1$ such
    that $(\beta,\sigma')\istep(\beta_1,\sigma_1)$
  \end{itemize} In particular, safety implies that $\beta \neq
  \Err(e)$ for any error $e \in \Error$.
\end{theorem}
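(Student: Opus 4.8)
The plan is to follow the standard two-layer recipe for Iris-style adequacy: first prove an internal invariant that is preserved along $\istep$ and that accumulates exactly one $\later$ per reduction step, and then extract the meta-level dichotomy using the soundness of the guarded base logic, which permits stripping finitely many $\later$ modalities in front of a pure proposition. The starting point is the characteristic unfolding of the weakest precondition. Since $\wpre{-}{-}$ is a guarded fixed point, it satisfies
\[
\wpre{\alpha}{\Phi} \dashv\vdash \big(\alpha \in \ITv \wedge \Phi(\alpha)\big) \vee \Big(\All \sigma.\, \hasstate(\sigma) \wand \big(\big(\Exists \beta\,\sigma'.\,(\alpha,\sigma)\istep(\beta,\sigma')\big) \wedge \All \beta\,\sigma'.\, (\alpha,\sigma)\istep(\beta,\sigma') \wand \later(\hasstate(\sigma') \ast \wpre{\beta}{\Phi})\big)\Big).
\]
The two facts I will use from this unfolding are that a non-value satisfies the progress clause (it is reducible in every owned state), and that each reduction step is paid for by one $\later$.

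First I would prove single-step preservation: for any $\alpha,\sigma,\beta,\sigma'$ with $(\alpha,\sigma)\istep(\beta,\sigma')$,
\[
\hasstate(\sigma) \ast \wpre{\alpha}{\Phi} \vdash \later\big(\hasstate(\sigma') \ast \wpre{\beta}{\Phi}\big).
\]
By the definition of $\istep$, a stepping term has the form $\Tick(\beta)$ or $\Vis_i(x,k)$ and is therefore not a value, so the value disjunct of the unfolding is excluded and we land in the progress disjunct; feeding it the owned $\hasstate(\sigma)$ and the specific step yields the $\later$-guarded conclusion. Iterating this by induction on the number of steps $n$ in $(\alpha,\sigma)\istep^{n}(\beta,\sigma')$ gives
\[
\hasstate(\sigma) \ast \wpre{\alpha}{\Phi} \vdash \later^{n}\big(\hasstate(\sigma') \ast \wpre{\beta}{\Phi}\big),
\]
using that $\later$ is monotone and commutes with the separating conjunction so that the induction hypothesis can be applied underneath the accumulated modalities.

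Next, under the $\later^{n}$ I would unfold $\wpre{\beta}{\Phi}$ once more and discard the state ownership $\hasstate(\sigma')$ that is no longer needed: either $\beta \in \ITv$ and $\Phi(\beta)$ holds, or the progress clause supplies $\Exists \beta_1\,\sigma_1.\,(\beta,\sigma')\istep(\beta_1,\sigma_1)$. This produces the internal judgement
\[
\hasstate(\sigma) \ast \wpre{\alpha}{\Phi} \vdash \later^{n}\Big(\big(\beta \in \ITv \wedge \Phi(\beta)\big) \vee \Exists \beta_1\,\sigma_1.\,(\beta,\sigma')\istep(\beta_1,\sigma_1)\Big).
\]
Combining this with the hypothesis $\hasstate(\sigma)\vdash\wpre{\alpha}{\Phi}$ and the ability to allocate the initial state resource $\hasstate(\sigma)$ turns it into a closed entailment $\vdash \later^{n}\,\phi$ whose body $\phi$ is built from intuitionistic connectives over pure facts ($\Phi$ is intuitionistic by assumption, and both value-membership and $\istep$ are decidable meta-level relations). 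The base-logic soundness theorem then lets me strip the $n$ laters and read off $\phi$ in the meta-logic, which is exactly the stated dichotomy.

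Finally, the safety corollary is immediate: if $\beta = \Err(e)$ then $\beta\notin\ITv$, and an error cannot step since $\Err(e)$ is neither of the form $\Tick(-)$ nor $\Vis_i(-,-)$ and $\reify(\Err(e),\sigma)$ never has shape $(\Tick(-),-)$; hence neither disjunct can hold, so $\beta\neq\Err(e)$. I expect the main obstacle to be the bookkeeping of the $\later$ modalities and, above all, the legitimacy of the final extraction: it rests on the step-indexed model validating $\later^{n}\phi \Rightarrow \phi$ only for pure $\phi$, so the crux is to confirm that $\phi$ really is resource-free and pure, which is guaranteed precisely by the assumption that $\Phi$ uses only intuitionistic connectives.
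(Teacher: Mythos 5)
Your overall recipe is exactly the one behind this theorem: the paper itself does not prove \Cref{thm:wp_adequacy} (it is recalled as background from the original \gitrees{} paper, to which the text defers), and the proof there is precisely the standard Iris adequacy argument you outline --- unfold the guarded fixed point defining $\wpre{-}{-}$, establish single-step preservation in the form $\hasstate(\sigma) \ast \wpre{\alpha}{\Phi} \vdash \later(\hasstate(\sigma') \ast \wpre{\beta}{\Phi})$, iterate along the reduction sequence, unfold once more to obtain the value/progress dichotomy, and finish by soundness of the base logic, which strips the accumulated modalities. (In full detail the argument also threads the update modalities that your unfolding elides, since allocating $\hasstate(\sigma)$ and opening invariants require them; that is bookkeeping rather than a new idea.)

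However, your justification of the step you yourself identify as the crux --- the final extraction --- contains a genuine error. You claim the body $\phi$ under $\later^{n}$ is pure because ``value-membership and $\istep$ are decidable meta-level relations.'' They are not: as the paper states explicitly, the definition of $\istep$ is given \emph{internally} in the logic, because it quantifies over \gitrees{} and uses their step-indexed equality (it asserts, e.g., $\alpha = \Tick(\beta)$ or $\reify(\alpha,\sigma) = (\Tick(\beta),\sigma')$), and the same applies to $\beta \in \ITv$. Consequently the safety disjunct $\Exists \beta_1\, \sigma_1.\, (\beta,\sigma') \istep (\beta_1,\sigma_1)$ is not a pure proposition, and ``strip laters in front of pure facts'' soundness does not apply to it as stated. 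To close the argument one must either read the reduction hypothesis and the safety conclusion as derivability of the corresponding internal statements (which is also how each step of $(\alpha,\sigma) \istep^{\ast} (\beta,\sigma')$ has to be consumed inside the logic during your induction), or invoke a transfer lemma saying that validity of internal reducibility at all step indices yields a meta-level witness --- which does hold for \gitrees{}, because constructor shape ($\Tick$ versus $\Vis$) and successful reification are determined by equality at all indices, but this is a fact about the specific model that must be proved separately; it is not a consequence of $\Phi$ being intuitionistic. The assumption on $\Phi$ only takes care of the adequacy disjunct, not of the safety disjunct.
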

The role of meta-logic is played by the Rocq system; thus, the adequacy theorem allows us to relate proofs inside the program logic (Iris) to the proofs on the level of Rocq.
This aspect is important in Iris and \gitrees in general, but it is orthogonal to the work that we present in this paper.
See \cite{DBLP:journals/pacmpl/FruminTB24} for more details.

\section{Context-Dependent Reification}
\label{global:sec:callcc-lang}
In this section we extend reification to handle context-dependent effects, using a language $\iocclang$ with \texttt{call/cc} as a concrete example.
In \Cref{sec:opsem:iocclang} we present $\iocclang$'s syntax and operational semantics (in the usual style with evaluation contexts).
We then show why the current \gitrees framework cannot be used as a denotational model for $\iocclang$ directly.
In \Cref{global:sec:str-reifiers} we introduce our generalization of reification for context-dependent effects and corresponding extensions to the \gitrees program logic.
In \Cref{global:sec:callcc-sem} we demonstrate that our extension works as intended: we give a denotational semantics for $\iocclang$, and we show how the general program logic for \gitrees specializes to a logic for reasoning about \texttt{call/cc}.
We prove soundness and computational adequacy of denotational semantics using a logical relation defined within our program logic.

\subsection{Operational Semantics and Type System for $\iocclang$}
\label{sec:opsem:iocclang}
By $\iocclang$ we denote a simply-typed $\lambda$-calculus with natural numbers, recursive functions, and \texttt{call/cc}.
The relevant pieces of syntax, the type system and the operational semantics are given in \Cref{fig:cc:lang}. The type system includes natural numbers, function types, and the type $\tcont{\tau}$ of continuations.
The $\callcc{\var}{\expr}$ expression takes the current evaluation context and binds it to $\var$ in $\expr$.
The  $\throw{\expr}{\expr'}$ expression evaluates passes the first argument (value) to the second argument (continuation, represented as an evaluation context).

The operational semantics of $\iocclang$ is separated into two layers.
The first layer consists of local reductions of primitive expressions ($\contrA{e}{e'}{K}$), and the second layer
lifts local reductions to reductions among complete programs ($\contr{\plug{K}{\expr}}{\plug{K'}{e'}}$).
The local reductions are parameterized by an
evaluation context $K$, which allows $\callcc{\var}{\expr}$ to capture the evaluation context.
\begin{figure}[t]
    \begin{grammar}
    \text{types} & \Tys \ni \tau & \tnat \mid \tarr{\tau_1}{\tau_2} \mid \tcont{\tau} \\
    \text{expressions} & \Expr \ni \expr & \val \mid \var \in \Var \mid \eapp{\expr_1}{\expr_2} \mid \natop{\expr_1}{\expr_2}
\mid \If \expr_1 then \expr_2 \Else \expr_3 \\
    \GrmContinue & \mid \callcc{\var}{\expr} \mid \throw{\expr_1}{\expr_2} \\
    \text{values} & \Val \ni \val & n \mid \Rec f \var = \expr \mid \cont{K} \\
    \text{eval. cont.} & \Ectx \ni K & \emptyK \mid \If K then \expr_1 \Else \expr_2 \mid \eapp{K}{\val}
\mid \eapp{\expr}{K} \mid \natop{\expr}{K} \mid \natop{K}{\val} \\
    \GrmContinue & \mid \throw{K}{\expr} \mid \throw{\val}{K}
  \end{grammar}
  \vspace{-0.6cm}
  \begin{mathpar}
\inferrule{}{\contrA{\callcc{\var}{\expr}}{\subst{\expr}{\var}{\cont{K}}}{K}}
    \and
    \inferrule{}{\contr{\plug{K}{\throw{\val}{\cont{K'}}}}{\plug{K'}{\val}}}
    \and
    \inferrule{\contrA{\expr_1}{\expr_2}{K}}{\contr{\plug{K}{\expr_1}}{\plug{K}{\expr_2}}}
  \end{mathpar}
  \vspace{-0.3cm}
    \begin{mathpar}
    \inferrule{\typeA{\Gamma, \var \col \tcont{\tau}}{\expr}{\tau}}{\typeA{\Gamma}{\callcc{\var}{\expr}}{\tau}}
    \and
    \inferrule{\typeA{\Gamma}{\expr_1}{\tau} \\ \typeA{\Gamma}{\expr_2}{\tcont{\tau}}}{\typeA{\Gamma}{\throw{\expr_1}{\expr_2}}{\tau'}}
  \end{mathpar}
  \caption{Syntax and fragments of type system and operational semantics of $\iocclang$.}
  \label{fig:cc:lang}
\end{figure}
Let us now consider what happens if we try to give a direct-style denotational
semantics of $\iocclang$ into GITrees. By direct we mean that we wish to give a
direct interpretation of types and expressions, rather than going through a
global CPS conversion. To define the semantics, we first need to provide an
effect signature, state, and reifiers for each effect, and then we can define the
interpretation of the expressions of the language.

The effect signature, shown in~\Cref{fig:cc:io_constructors}, contains two
effects $\mathtt{callcc}$ and $\mathtt{throw}$.
Since \texttt{call/cc} binds a continuation, it is natural to let the
input arity for $\mathtt{callcc}$ be a callback $(\latert \IT \to \latert \IT) \to \latert \IT$.
The output arity is simply $\latert \IT$.

The input arity for $\mathtt{throw}$ signifies that \texttt{throw}
takes as input an expression and a continuation,
which are represented respectively as $\latert \IT$ and
$\latert (\IT \to \IT)$.
The output arity of $\mathtt{throw}$ is simply the empty type $0$, because
\texttt{throw} never returns.

Note that the input types of $\mathtt{callcc}$ and $\mathtt{throw}$
have slightly different arities. However, we can always transform $f
\col (\latert X \rightarrow \latert X)$ into an element of type
$\latert (X \rightarrow X)$ by performing a silent step in the
function's body: $f' \eqdef \Next(\Lam x. \Tau(f (\Next(x))))$. And we
can always transform $f \col \latert (X \rightarrow X)$ into an
element of type $\latert X \rightarrow \latert X$ using the applicative
structure of the later modality.

For convenience, we will use the abbreviations $\callccGT(f)$ and
$\throwGT(e)$, defined in \Cref{fig:cc:io_constructors}, for
representing denotations of \texttt{throw} and \texttt{call/cc} as
effects in GITrees.
\begin{figure}[t]
\begin{align*}
    \Ins_{\mathtt{callcc}}(X) &\eqdef ((\latert X \rightarrow \latert X) \rightarrow \latert X)
    & \Outs_{\mathtt{callcc}}(X) &\eqdef \latert X
\\
     \Ins_{\mathtt{throw}}(X) &\eqdef \latert X \times \latert (X \rightarrow X)
    & \Outs_{\mathtt{throw}}(X) &\eqdef 0 \\
    \callccGT(f) &\eqdef \Vis_{\mathtt{callcc}}{(f, \idfun)} & \throwGT(e, f) &\eqdef \Vis_{\mathtt{throw}}{(e, f, \Lam x. \elimF x)}
  \end{align*}
\caption{Signatures and opertaions on \gitrees{} with \texttt{call/cc}.}
  \label{fig:cc:io_constructors}
\end{figure}

To complete all the ingredients for the denotational semantics, we need reifiers for the $\mathtt{callcc}$ and $\mathtt{throw}$ effects.
Given our operational understanding of continuations, the natural choice for the local state type $\State$ is $\Tunit$ (since we do not have any state).
However, the current reifier signature (\Cref{fig:gt:reifier_sig}) poses a problem.
Reifiers, as they are now, cannot access their current continuation, which is essential for both effects.
$\callccGT(f)$ needs to pass the current continuation to $f$, while $\throwGT$ must redirect control to a provided continuation instead of returning normally.
The current reifiers lacks this capability, and in the next subsection we show how to generalize the notion of reification to context-dependent effects.

\subsection{Context-dependent Reifiers}
\label{global:sec:str-reifiers}
This section presents our extension to context-dependent reification, and the limitations it imposes on the program logic.
In order to allow reifiers to manage continuations, we change the type of reifiers to accept continuations as an extra parameter, as shown in \Cref{fig:r:different_reifiers}.
Continuations for a given effect are functions from the effect's outputs to \gitrees{}: $\Outs_i(\IT_E) \to \latert \IT_E$.
\begin{figure}[t]
  \begin{align*}
    & r \col \prod_{\idx \in E} \Ins_{\idx}(\IT_E) \times \stateO \times (\Outs_{\idx}(\IT_E) \rightarrow \latert \IT_E) \to \optionO(\latert \IT_E \times \stateO)
  \end{align*}
  \vspace{-0.2cm}
  \begin{mathpar}
    \infer
    {r_i(x,\sigma,\kappa) = \Some(\beta, \sigma')}
    {\reify(\Vis_i(x, \kappa), \sigma) = (\Tau(\beta), \sigma')}
    \and
    \infer
    {r_i(x,\sigma,\kappa) = \None}
    {\reify(\Vis_i(x, \kappa), \sigma) = (\Err(\RunTime), \sigma)}
  \end{mathpar}
  \caption{Type of context-dependent reifiers and the context-dependent reify function}
  \label{fig:r:different_reifiers}
\end{figure}
Given a set of context-dependent reifiers, we define a context-dependent $\mathtt{reify}$ function, also shown in \Cref{fig:r:different_reifiers}.
As before, $\mathtt{reify}$ dispatches to the correct individual reifier for the effect.
Note that now it is the user's responsibility to pass the output of an effect to the given continuation if the control flow is not supposed to be interrupted.
For example, since the evaluation of a $\callcc{\var}{\expr}$ expression does not modify the control flow itself, but simply passes the current continuation to its body, the context-dependent reifier for
$\mathtt{callcc}$ is simply $r_{\mathtt{callcc}}
(x, \sigma, \kappa) = \Some(\kappa~(x~\kappa), \sigma)$.

Before we move on to discussing the consequences of this for program logic, we would like to note that our treatment of continuation (with top-level reifiers dispatching them) parallels Cartwright and Felleisen's  ``extensible direct models''~\cite{cartwright:1994}, which also aimed to support extensible denotational semantics in classical domain theory.
We discuss this more in \Cref{global:sec:conc}.

\paragraph{Program logic for GITrees in the presence of context-dependent reifiers.}
To reflect the generalization to context-dependent reifiers in the program logic,
we replace the proof rule \ruleref{wp-reify} by \ruleref{wp-reify-ctx-dep}, shown in~\Cref{fig:callcc:wp}.
This is, however, not the only change we need to make.
In the presence of context-dependent effects,
\ruleref{wp-hom} is not sound! (A similar observation
was also made by \cite{DBLP:journals/pacmpl/TimanyB19}
in their development of a program logic for \texttt{call/cc}.)
The reason is that context-dependent
reification invalidates~\Cref{lem:r:hom_istep}.
Now, since \ruleref{wp-hom} is not sound anymore, one might expect that we
need to adapt all the other program logic rules to include a homomorphism
similarly to how the rules of \cite{DBLP:journals/pacmpl/TimanyB19}
were adapted to include an evaluation context. However, this is not
necessary, because our program logic is defined on \emph{denotations} on which
we have a non-trivial equational theory, which can be used to reason
about `pure' GITrees. Only for effectful operations, the proof rules will now have
to include a surrounding homomorphism. For instance, \ruleref{wp-write} from \cite{DBLP:journals/pacmpl/FruminTB24} is generalized to \ruleref{wp-write-ctx-dep}, and considers ambient homomorphisms explicitly.

\begin{figure}[t]
  \begin{mathpar}
    \inferH{wp-write}
    {
      \begin{array}[c]{c}
        \heapctx \quad
        \later \ell \mapsto \alpha \arcr
        \later (\ell \mapsto \beta \wand \arcr \quad \wpre{\Rret ()}{\Phi})
      \end{array}
    }
    {\wpre{\WRITE(\ell, \beta)}{\Phi}}
    \and
    \inferH{wp-write-ctx-dep}
    {
      \begin{array}[c]{c}
        \kappa \in \Hom \arcr
        \heapctx \quad
        \later \ell \mapsto \alpha \arcr
        \later (\ell \mapsto \beta \wand \arcr \quad \wpre{\eapp{\kappa}{(\Rret ())}}{\Phi})
      \end{array}
    }
    {\wpre{\kappa\ (\WRITE(\ell, \beta))}{\Phi}}
    \and
    \inferH{wp-reify-ctx-dep}
    {
      \begin{array}[c]{c}
        \hasstate(\sigma) \arcr
        r_i(x, \sigma, k) = \Some(\Next(\beta), \sigma')
        \arcr
        \later\big(\hasstate(\sigma') \wand \arcr \quad \wpre{\beta}{\Phi} \big)
      \end{array}
    }
    {\wpre{\Vis_i(x, k)}{\Phi}}
  \end{mathpar}
  \caption{Program logic in the presence of context-dependent reifiers.}
  \label{fig:callcc:wp}
\end{figure}

Our context-dependent reification extension, though simple, allows us
to build sound and adequate denotational models for languages with
control-flow operators, including $\iocclang$ (shown in the next
subsection). Moreover, our extension is conservative, and we recover
previous case studies (computational adequacy of $\iolangold$ and type
safety for $\afflang$~\cite{DBLP:journals/pacmpl/FruminTB24}) with
minimal modifications; see the accompanying Rocq formalization.

\subsection{Denotational Semantics of $\iocclang$}
\label{global:sec:callcc-sem}
In this section we show that context-dependent reifiers are sufficient for providing a sound and adequate semantic model of $\iocclang$.
We define context-dependent reifiers for $\mathtt{callcc}$ and $\mathtt{throw}$, then prove that this gives a sound interpretation w.r.t.
operational semantics.
To show adequacy, we define a logical relation, which relates the denotational and operational semantics.
The logical relation is defined in the (updated) program logic for \gitrees (following the approach in \cite{DBLP:journals/pacmpl/FruminTB24}), and validates the utility of the program logic.

\begin{figure}[t]
  \begin{align*}
    &\denotE{\var}_\rho = \rho(\var) \\
&\denotE{\callcc{\var}{\expr}}_{\rho} = \callccGT(\Lam (f \col \latert \IT \rightarrow \latert \IT). \denotE{\expr}_{\rho[\var \mapsto \Fun (\Next (\Lam y. \Tau (f (\Next (y)))))]}) \\
    &\denotE{\throw{\expr_1}{\expr_2}}_{\rho} = \getval (\denotE{\expr_1}_\rho, \Lam x. \getfun (\denotE{\expr_2}_\rho, \Lam f. \throwGT(x, f)))\\
    &\denotV{\cont K}_\rho = \Fun(\Next(\Lam x. \Tau(\APPsl{\denotK{K}_\rho}{\Next(x)})))
\\
&\denotK{\throw{K}{\expr}}_\rho = \Lam x. \getval (\denotK{K}_\rho\ x, \Lam y. \getfun (\denotE{\expr}_\rho, \Lam f. \throwGT(y, f)))\\
    &\denotK{\throw{\val}{K}}_\rho = \Lam x. \getval (\denotV{\val}_\rho, \Lam y. \getfun (\denotK{K}_\rho\ x, \Lam f. \throwGT(y, f)))
  \end{align*}
  \caption{Denotational semantics of $\iocclang$ (selected clauses).}
  \label{fig:cc:lang_sem}
\end{figure}

\paragraph{Interpretation of $\iocclang$.}
The denotational semantics of $\iocclang$ is shown in \Cref{fig:cc:lang_sem} (selected clauses only; see Rocq formalization for the complete definition).
The interpretation is split into three parts: $\denotE{-}$ for expressions, $\denotV{-}$ for values, and $\denotK{-}$ for contexts.
For the interpretation of $\throw{\expr_1}{\expr_2}$, the left-to-right evaluation order is enforced by the functions $\getval$ and $\getfun$. They first evaluate their argument to a \gitree value, and then pass it on (c.f. \Cref{fig:gt:functions}).

The context-dependent reifiers for the
effects $\mathtt{callcc}$ and $\mathtt{throw}$ are defined as follows:
\begin{align*}
r_{\mathtt{callcc}} (f, (), \kappa) &= \Some(\kappa~(f~\kappa), ()) &
    r_{\mathtt{throw}} ((\alpha, f), (), \kappa) &= \Some(f~\alpha, ())
  \end{align*}

To show that the denotational semantics is sound, we need the
following lemma that shows that interpretations of expressions in
evaluation contexts are decomposed into applications of
homomorphisms.
\begin{lemma}
  \label{lem:cc:ectx_hom}
  For any context $K$ and an environment $\rho$, we have
  $\denotK{K}_\rho \in \Hom$.
For any context $K$, expression $\expr$, and an environment $\rho$,
  $\denotE{\plug{K}{\expr}}_\rho = \denotK{K}_\rho(\denotE{\expr}_\rho)$.
\end{lemma}
With these results at hand, we can show soundness of our interpretation:
\begin{lemma}{Soundness.}
  \label{lem:cc:soundness}
  Suppose $\reds{\expr_1}{\expr_2}$.
  Then $(\denotE{\expr_1}_\rho, ()) \istep^* (\denotE{\expr_2}_\rho, ())$,
  where $() \col \Tunit$ is the unique element of the unit type, representing the (lack of) state.
\end{lemma}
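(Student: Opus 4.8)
The plan is to do a case analysis on the derivation of the top-level reduction $\reds{\expr_1}{\expr_2}$. Every case exhibits a head redex sitting inside an evaluation context $K$, so I would first use the decomposition part of \Cref{lem:cc:ectx_hom} to rewrite $\denotE{\plug{K}{\expr}}_\rho$ as $\denotK{K}_\rho(\denotE{\expr}_\rho)$, with $\denotK{K}_\rho \in \Hom$. The cases then separate into \emph{context-independent} head reductions (application, arithmetic, conditionals), whose contractum does not mention $K$, and the two \emph{context-dependent} reductions ($\mathtt{callcc}$ and $\mathtt{throw}$), in which the ambient context is captured or discarded and must be treated explicitly.

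For the context-independent reductions I would establish the head step $(\denotE{\expr_1'}_\rho, ()) \istep^* (\denotE{\expr_2'}_\rho, ())$ directly from the \gitree equational theory (e.g.\ $\APPs{\Fun(\Next(g))}{\beta_v} = \Tick(g(\beta_v))$ for application, and the analogous $\NATOP$ and conditional equations), each $\Tick$ contributing one $\istep$. Since these contracta are independent of the surrounding context, I would transport the head reduction through the homomorphism $\denotK{K}_\rho$ by iterating the first bullet of \Cref{lem:r:hom_istep} along the $\istep^*$ chain, and then fold back with \Cref{lem:cc:ectx_hom} to conclude.

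The genuinely interesting cases are $\mathtt{callcc}$ and $\mathtt{throw}$, where transport via \Cref{lem:r:hom_istep} is unavailable precisely because the contractum depends on $K$; these I would compute by hand. For $\mathtt{callcc}$, with $\expr_1 = \plug{K}{\callcc{\var}{\expr}}$, \Cref{lem:cc:ectx_hom} gives $\denotE{\expr_1}_\rho = \denotK{K}_\rho(\callccGT(f))$; unfolding $\callccGT(f) = \Vis_{\mathtt{callcc}}(f,\idfun)$ and pushing the homomorphism into the continuation via the $\Vis$-clause of \Cref{def:hom} turns this into $\Vis_{\mathtt{callcc}}(f,\latert\denotK{K}_\rho)$. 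Reifying with $\kappa = \latert\denotK{K}_\rho$ and $r_{\mathtt{callcc}}(f,(),\kappa) = \Some(\kappa~(f~\kappa),())$ yields a single $\istep$ to $\denotK{K}_\rho(\denotE{\expr}_{\rho[\var \mapsto v_K]})$, where $v_K = \Fun(\Next(\Lam y.\Tau(\kappa(\Next(y)))))$ is the value bound for $f$. The crux is the identity $v_K = \denotV{\cont{K}}_\rho$, which follows by computing the applicative action of $\latert$ so that $\Tau(\kappa(\Next(y))) = \Tick(\denotK{K}_\rho(y))$ matches the definition of $\denotV{\cont{K}}_\rho$; combined with a substitution lemma $\denotE{\subst{\expr}{\var}{\cont{K}}}_\rho = \denotE{\expr}_{\rho[\var \mapsto \denotV{\cont{K}}_\rho]}$, the target becomes exactly $\denotE{\plug{K}{\subst{\expr}{\var}{\cont{K}}}}_\rho$. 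The $\mathtt{throw}$ case is similar, but hinges on the fact that $r_{\mathtt{throw}}((\alpha,f),(),\kappa) = \Some(f~\alpha,())$ \emph{discards} $\kappa$: after $\getval$ and $\getfun$ collapse the two arguments (by the equations in \Cref{fig:gt:functions} and the analogous ones for $\getfun$) to $\denotV{\val}_\rho$ and the continuation value $\denotV{\cont{K'}}_\rho = \Fun(\Next(\psi))$, pushing $\denotK{K}_\rho$ through the $\Vis_{\mathtt{throw}}$ node and reifying abandons $K$ and jumps to $\psi(\denotV{\val}_\rho)$, which unfolds in two $\istep$ steps to $\denotK{K'}_\rho(\denotV{\val}_\rho) = \denotE{\plug{K'}{\val}}_\rho$.

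I expect the main obstacle to be these two context-dependent cases: carefully tracking the $\latert$ modality and the interleaved $\Tau/\Tick$ steps, and in particular establishing the identity $v_K = \denotV{\cont{K}}_\rho$, which says that the continuation captured and passed by the reifier coincides with the denotation of the reified context value. The auxiliary substitution lemma is routine (by induction on $\expr$), but it must be in place before the $\mathtt{callcc}$ computation closes.
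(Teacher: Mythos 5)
Your strategy coincides with the paper's: the proof is exactly a case analysis on the reduction derivation, using \Cref{lem:cc:ectx_hom} to rewrite $\denotE{\plug{K}{\expr}}_\rho$ as $\denotK{K}_\rho(\denotE{\expr}_\rho)$, discharging the pure redexes by the equational theory, and computing the $\mathtt{callcc}$ and $\mathtt{throw}$ steps by reification. Your treatment of the two context-dependent cases is correct and isolates the real crux: pushing the homomorphism into the $\Vis$ node so that the reifier receives $\latert\denotK{K}_\rho$ as its continuation argument, the identity $v_K = \denotV{\cont{K}}_\rho$, and the substitution lemma.

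There is, however, one step whose stated justification is unavailable: transporting the pure head steps through $\denotK{K}_\rho$ by ``iterating the first bullet of \Cref{lem:r:hom_istep}''. That lemma does not hold in the present framework: as \Cref{global:sec:str-reifiers} states explicitly, context-dependent reification invalidates \Cref{lem:r:hom_istep} --- this is precisely why \ruleref{wp-hom} becomes unsound --- and the lemma fails \emph{globally} once any context-dependent reifier is in the signature, not merely in the cases where the contractum depends on $K$, as your parenthetical remark suggests. Indeed, your own $\mathtt{callcc}$ computation exhibits a counterexample to the first bullet: reifying $\Vis_{\mathtt{callcc}}(f,\idfun)$ yields $\Tau(f~\idfun)$, whereas reifying $\denotK{K}_\rho(\Vis_{\mathtt{callcc}}(f,\idfun)) = \Vis_{\mathtt{callcc}}(f,\latert\denotK{K}_\rho)$ yields $\Tau\big(\latert\denotK{K}_\rho\,(f~(\latert\denotK{K}_\rho))\big)$, which is in general not $\denotK{K}_\rho$ applied to the former, since $f$ may use its continuation argument. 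The gap is easy to close: in $\iocclang$'s signature the only effects are $\mathtt{callcc}$ and $\mathtt{throw}$, so every context-independent head reduction (beta, $\NATOP$, conditional) is a finite sequence of $\Tick$-steps involving no reification at all, and $\Tick$-steps do transport through any homomorphism, directly by the equation $f(\Tick(\alpha)) = \Tick(f(\alpha))$ of \Cref{def:hom}. Replace the appeal to \Cref{lem:r:hom_istep} by this restricted transport fact (proved once from that single equation) and your argument goes through; what you must not do is invoke the lemma itself, since it is false in this setting.
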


\paragraph{Program logic for $\iocclang$.}
We now specialize the general program logic rule
\ruleref{wp-reify-ctx-dep} using the
reifiers for $\mathtt{callcc}$ and $\mathtt{throw}$ to obtain
the following program logic rules:
\begin{mathpar}
  \inferH{wp-throw}
  {
    \begin{array}[c]{c}
      \kappa \in \Hom \quad \hasstate{(\sigma)} \arcr \later (\hasstate{(\sigma)} \wand \wpre{\eapp{f}{x}}{\Phi})
    \end{array}
  }
  {\wpre{\eapp{\kappa}{(\throwGT(\Next(x), \Next(f)))}}{\Phi}}
  \and
  \inferH{wp-callcc}
  {
    \begin{array}[c]{c}
      \kappa \in \Hom \quad \hasstate{(\sigma)} \arcr \later (\hasstate{(\sigma)} \wand \wpre{\eapp{\kappa}{(\eapp{f}{\kappa})}}{\Phi})
    \end{array}
  }
  {\wpre{\eapp{\kappa}{(\callccGT(\Next \circ f))}}{\Phi}}
\end{mathpar}
where $\kappa$ is a homomorphism representing the current evaluation context on the level of \gitrees.
The reader may wonder why these rules include the $\hasstate{(\sigma)}$ predicates, since it is just 'threaded around'.
The reason is that these rules also apply when there are other effects around and the state is composed of
different substates for different effects, cf. the discussion of modularity in \Cref{global:sec:gitrees}.

\paragraph{Adequacy and logical relation.}
Having established soundness, we now turn our attention to \emph{adequacy}, which is usually much more complicated to prove.
\begin{lemma}{Adequacy.}
  \label{lem:cc:adequacy}
  Suppose that $\typeA{\emptyset}{\expr}{\tnat}$ and
  $(\denotE{\expr}_{\emptyset}, \sigma_1) \istep^* (\Rret(n), \sigma_2)$,
  for a natural number $n$.
  Then $\contrStar{\expr}{n}$.
\end{lemma}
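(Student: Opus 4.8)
The plan is to prove adequacy through a biorthogonal ($\top\top$-closed) logical relation defined inside the \gitrees program logic, and then to discharge the resulting observation with the program-logic adequacy theorem (\Cref{thm:wp_adequacy}). I would define three mutually-referring relations indexed by types. A value relation $\mathcal{V}\llbracket\tau\rrbracket(\val,\beta_v)$ relates a source value to a \gitree value (with $\mathcal{V}\llbracket\tnat\rrbracket(m,\Rret(m))$, the usual clause at arrow types, and $\mathcal{V}\llbracket\tcont{\tau}\rrbracket(\cont{K},\beta_v)$ saying that $\beta_v$ behaves as a homomorphism $\kappa$ with $\mathcal{K}\llbracket\tau\rrbracket(K,\kappa)$). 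A continuation relation $\mathcal{K}\llbracket\tau\rrbracket(K,\kappa) \eqdef \kappa\in\Hom \wedge \forall \val\,\beta_v.\ \mathcal{V}\llbracket\tau\rrbracket(\val,\beta_v) \wand \mathcal{O}(\plug{K}{\val},\kappa(\beta_v))$. And an expression relation $\mathcal{E}\llbracket\tau\rrbracket(\expr,\alpha) \eqdef \forall K\,\kappa.\ \mathcal{K}\llbracket\tau\rrbracket(K,\kappa) \wand \mathcal{O}(\plug{K}{\expr},\kappa(\alpha))$, where the observation is $\mathcal{O}(e,\gamma) \eqdef \hasstate(()) \wand \wpre{\gamma}{\Ret\beta_v.\ \exists m.\ \beta_v = \Rret(m) \wedge \contrStar{e}{m}}$. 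The universal quantification over continuations in $\mathcal{E}\llbracket-\rrbracket$ is exactly what replaces the now-unsound \ruleref{wp-hom}: instead of peeling a homomorphism off a weakest precondition, I carry the ambient evaluation context explicitly and only ever consume it through the context-aware rules of \Cref{fig:callcc:wp}.

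The core is the fundamental theorem: if $\typeA{\Gamma}{\expr}{\tau}$ then for every closing substitution $\gamma$ and environment $\rho$ related pointwise by $\mathcal{V}\llbracket-\rrbracket$ we have $\mathcal{E}\llbracket\tau\rrbracket(\gamma(\expr),\denotE{\expr}_\rho)$. I would prove this by induction on the typing derivation, with one compatibility lemma per rule. The purely functional cases (variables, numerals, $\natop{-}{-}$, conditionals, recursion, and application) are routine: to handle a compound term I grow the ambient continuation, using \Cref{lem:cc:ectx_hom} to observe that the denotation of the enclosing context is itself a homomorphism so that the enlarged $\kappa$ still lies in $\Hom$, and then invoke the induction hypothesis for each subterm under the extended continuation. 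These cases never touch the control operators and so go through with the standard (context-independent) reasoning.

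The main obstacle, and the only place where context-dependence genuinely bites, is the two compatibility lemmas for $\mathtt{callcc}$ and $\mathtt{throw}$. For $\callcc{\var}{\expr}$ I take related $\mathcal{K}\llbracket\tau\rrbracket(K,\kappa)$ and apply \ruleref{wp-callcc} to rewrite the goal $\mathcal{O}(\plug{K}{\callcc{\var}{\expr}},\kappa(\callccGT(\ldots)))$ into a weakest precondition of $\kappa(f\ \kappa)$; this matches the operational step $\plug{K}{\callcc{\var}{\expr}} \contr \plug{K}{\subst{\expr}{\var}{\cont{K}}}$, and closing the goal requires knowing that the captured value $\cont{K}$ is related to its denotation at type $\tcont{\tau}$ --- which is precisely $\mathcal{V}\llbracket\tcont{\tau}\rrbracket$ unfolding to $\mathcal{K}\llbracket\tau\rrbracket(K,\kappa)$, so the induction hypothesis for $\expr$ applies after extending the environment. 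For $\throw{\expr_1}{\expr_2}$ I first grow the continuation to evaluate both arguments to related values (a thrown value and a continuation value $\cont{K'}$); the crucial step is that \ruleref{wp-throw} \emph{discards} the current homomorphism $\kappa$ and proceeds as $f\ x$, exactly mirroring how $\plug{K}{\throw{\val}{\cont{K'}}} \contr \plug{K'}{\val}$ abandons $K$ and jumps into $K'$. Relatedness then follows from the continuation relation carried inside $\mathcal{V}\llbracket\tcont{\tau}\rrbracket(\cont{K'},f)$, applied to the related argument values. This discarding of the ambient continuation is the heart of the biorthogonal construction and is what lets the argument succeed without a bind rule.

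Finally, the lemma follows by instantiating the fundamental theorem at the empty context. For closed $\typeA{\emptyset}{\expr}{\tnat}$ it yields $\mathcal{E}\llbracket\tnat\rrbracket(\expr,\denotE{\expr}_\emptyset)$. I check $\mathcal{K}\llbracket\tnat\rrbracket(\emptyK,\idfun)$: the identity is a homomorphism, and for related ground values $m$ and $\Rret(m)$ the goal $\mathcal{O}(m,\Rret(m))$ is a weakest precondition of the value $\Rret(m)$, discharged by reflexivity of $\contrStar{-}{-}$. Instantiating gives $\mathcal{O}(\expr,\denotE{\expr}_\emptyset)$, that is $\hasstate(()) \vdash \wpre{\denotE{\expr}_\emptyset}{\Ret\beta_v.\ \exists m.\ \beta_v = \Rret(m) \wedge \contrStar{\expr}{m}}$ (recall the only state for $\iocclang$ is the unit, so $\sigma_1 = ()$). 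Applying \Cref{thm:wp_adequacy} with this postcondition to the hypothesis $(\denotE{\expr}_\emptyset,\sigma_1) \istep^*(\Rret(n),\sigma_2)$, and noting that $\Rret(n)$ is a value so the adequacy (not the safety) disjunct fires, yields $\exists m.\ \Rret(n) = \Rret(m) \wedge \contrStar{\expr}{m}$; injectivity of $\Rret$ forces $m = n$, giving $\contrStar{\expr}{n}$ as required.
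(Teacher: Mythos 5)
Your proposal is correct and follows essentially the same route as the paper: a biorthogonal logical relation ($\logrelO$, $\logrelC$, $\logrelE$ plus a type-indexed value relation in which $\tcont{\tau}$ carries the context relation) defined inside the extended program logic, a fundamental lemma proved by induction on typing with \ruleref{wp-callcc} and \ruleref{wp-throw} handling the two control cases exactly as you describe, and a final instantiation at the empty context/identity homomorphism discharged by \Cref{thm:wp_adequacy}. The remaining differences are cosmetic bookkeeping: the paper's observation relation also returns $\hasstate(())$ in the postcondition and guards $\logrelC$ and the semantic typing judgment with the persistence modality $\Box$ (needed in Iris because \texttt{call/cc} lets a captured continuation be used repeatedly), and it states the instantiation step as an explicit ``limited bind rule'' (\Cref{lem:cc:obs_bind}) rather than leaving it implicit in the definition of the expression relation.
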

To prove \Cref{lem:cc:adequacy}, we define a logical relation between
syntax ($\iocclang$ programs) and semantics (\gitrees denotations)
using the program logic from \Cref{fig:cc:logrel}. To handle control
effects, we use a \emph{biorthogonal} logical
relation~\cite{Pierce2004Advanced}, adapted
from~\cite{DBLP:journals/pacmpl/TimanyB19} for adequacy, following the
Iris approach~\cite{logical-approach-to-type-soundness}.

The core observational refinement $\logrelO(\alpha, \expr)$ ensures
that if $\alpha$ reduces to a \gitree value $\ITv$, then this value is a natural number,
and $\expr$ also reduces to the same number. The evaluation context
relation $\logrelC(P)(\kappa, K)$ relates homomorphisms and evaluation
contexts when they map related arguments to expressions satisfying
$\logrelO$. The expression relation $\logrelE(P)(\alpha, \expr)$
connects related $\IT$'s and expressions in related evaluation contexts.
Types are inductively interpreted: functions relate if they map
related arguments to related results, and continuations relate via the
context relation. For open terms, the validity judgment $\Gamma \vDash
\expr \col \tau$ uses closing substitutions, with $\expr[\gamma]$
denoting applying a substitution $\gamma$ to $\expr$.

\begin{figure*}[t]
    \begin{align*}
    \logrelO(\alpha, \expr) &\eqdef \hasstate(()) \wand {\wpre{\alpha}{\beta. \Exists \val. (\contrStar{\expr}{\val}) \ast \sem{\tnat}(\beta, \val) \ast \hasstate(())}}
    \tag*{\mbox{\fbox{$\logrelO \col \erel$}}}
    \\
    \logrelC({R})(\kappa, K) &\eqdef \Box~ {\All (\beta, \val). R(\beta, \val) \wand \logrelO(\eapp{\kappa}{\beta}, \plug{K}{\val})}
    \tag*{\mbox{\fbox{$\logrelC \col \vrel \to \ectxrel$}}}
\\
    \logrelE({R})(\alpha, \expr) &\eqdef {\All (\kappa, K). \logrelC({R})(\kappa, K) \wand \logrelO(\eapp{\kappa}{\alpha}, \plug{K}{\expr})}
    \tag*{\mbox{\fbox{$\logrelE \col \vrel \to \erel$}}}
    \end{align*}
    \vspace{-0.5cm}
  \begin{align*}
    \sem{\tnat}(\alpha, \val) &\eqdef \Exists n:\mathbb{N}. \alpha = \Rret(n) \land \val = n
     \tag*{\mbox{\fbox{$\sem{\tau} \col \vrel$}}} \\
    \sem{\tarr{\tau_1}{\tau_2}}(\beta, \val) &\eqdef \Exists f. \beta = \Fun(f) \land \Box~ {\All (\alpha, \val'). \sem{\tau_1}(\alpha, \val') \wand \logrelE({\sem{\tau_2}})(\APPs{\Fun(f)}{\alpha}, \eapp{\val}{\val'})}\\
    \sem{\tcont{\tau}}(\beta, \val) &\eqdef \Exists \kappa\; K. \beta = \Fun(\Next(\Lam x. \Tick (\eapp{\kappa}{x}))) \land \val = \cont{K} \land {\logrelC({\sem{\tau}})(\kappa, K)}
  \end{align*}
  \begin{flushleft}
  \mbox{$\begin{aligned}
    \sem{\Gamma}(\rho, \gamma) &\eqdef {\All (\var \col \tau) \in \Gamma. {\sem{\tau}}(\eapp{\rho}{\var}, \eapp{\gamma}{\var})}\\
    \Gamma \vDash \expr \col \tau &\eqdef \Box~ \All (\rho, \gamma). \sem{\Gamma}(\rho, \gamma) \wand \logrelE({\sem{\tau}})(\denotE{\expr}_\rho, \plug{\expr}{\gamma})
         \end{aligned}
         $}
     \end{flushleft}
    \vspace{-1.2cm}
  \begin{flushright}
    \fbox{$
  \begin{aligned}
      \ectxrel &\eqdef {\Hom \times \Ectx} \to \iProp \\
      \vrel &\eqdef {\ITv \times \Val} \to \iProp \\
      \erel &\eqdef {\IT \times \Expr} \to \iProp
  \end{aligned}
  $}
  \end{flushright}
  \caption{Logical relation for $\iocclang$.}
  \label{fig:cc:logrel}
\end{figure*}

The proof of adequacy relies on the fact that the interpretation of
evaluation contexts are homomorphisms, which allows us to use a
limited version of the bind rule:
\begin{lemma}{Limited bind rule.}
  \label{lem:cc:obs_bind}
  If $\logrelE(P)(\alpha, \expr)$ and $\logrelC(P)(\kappa, K)$,
  then $\logrelO(\eapp{\kappa}{\alpha}, \plug{K}{\expr})$.
\end{lemma}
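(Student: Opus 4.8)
The claim follows essentially by unfolding the definition of $\logrelE$ and instantiating a universal quantifier, so the work has already been done in the \emph{design} of the biorthogonal relation. Recall that
\[
  \logrelE(P)(\alpha, \expr) = \All (\kappa', K').\; \logrelC(P)(\kappa', K') \wand \logrelO(\eapp{\kappa'}{\alpha}, \plug{K'}{\expr}).
\]
The first step is therefore to introduce this definition for the first hypothesis $\logrelE(P)(\alpha,\expr)$, obtaining a proof of a statement that is universally quantified over all pairs $(\kappa', K')$ of a homomorphism and an evaluation context related by $\logrelC(P)$.

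The second step is to instantiate that universal quantifier at the particular pair $(\kappa, K)$ supplied in the statement. This leaves us with the implication $\logrelC(P)(\kappa, K) \wand \logrelO(\eapp{\kappa}{\alpha}, \plug{K}{\expr})$. The third and final step is to discharge its premise using the second hypothesis $\logrelC(P)(\kappa, K)$ directly, which yields exactly the goal $\logrelO(\eapp{\kappa}{\alpha}, \plug{K}{\expr})$. No induction, no appeal to the equational theory of \gitrees, and no manipulation of $\wpre{-}{-}$ is required: the conclusion is a literal instance of the body of $\logrelE$.

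\textbf{Where the content lives.} There is no real obstacle in this proof; the point worth emphasizing is \emph{why} such a clean statement is available at all. Since \ruleref{wp-hom} is unsound in the presence of context-dependent effects, one cannot freely pull a homomorphism out of a weakest precondition. The biorthogonal relation sidesteps this by baking the quantification over \emph{related} context pairs into $\logrelE$ itself, so that the only bind principle one ever uses is the restricted one above, applicable solely to pairs $(\kappa, K)$ lying in $\logrelC(P)$. Thus the lemma records that the definition of $\logrelE$ was chosen precisely to make this limited bind rule hold by construction, and the genuine difficulty has been relocated to the other lemmas of the adequacy proof (for example, establishing $\logrelE$ for compound expressions, where one must produce related context extensions and invoke \ruleref{wp-reify-ctx-dep} for the control effects).
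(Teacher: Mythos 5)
Your proof is correct and coincides with the paper's own (implicit) argument: the paper states \Cref{lem:cc:obs_bind} without further proof precisely because, with $\logrelE$ defined as $\All (\kappa', K').\, \logrelC(P)(\kappa', K') \wand \logrelO(\eapp{\kappa'}{\alpha}, \plug{K'}{\expr})$, the conclusion is obtained by instantiating at $(\kappa, K)$ and discharging the wand with the second hypothesis (which, being $\Box$-guarded, is persistent, so no resource accounting issues arise). Your accompanying remark about where the real work lives --- in establishing $\logrelC$/$\logrelE$ instances via \Cref{lem:cc:ectx_hom} and the fundamental lemma, rather than in this definitional unfolding --- also matches the paper's framing.
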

With this in mind we show the fundamental lemma, stating that every well-typed expression is related to its own interpretation:
\begin{lemma}{Fundamental lemma.}
  \label{lem:cc:fundamental}
  Let $\typeA{\Gamma}{\expr}{\tau}$
  then $\Gamma \vDash \expr \col \tau$.
\end{lemma}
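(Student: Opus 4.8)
The plan is to prove the fundamental lemma by induction on the typing derivation $\typeA{\Gamma}{\expr}{\tau}$. For each typing rule, I unfold the definition of $\Gamma \vDash \expr \col \tau$, introduce the closing substitutions $(\rho,\gamma)$ together with the hypothesis $\sem{\Gamma}(\rho,\gamma)$, and then introduce an arbitrary related evaluation context $(\kappa,K)$ with $\logrelC(\sem{\tau})(\kappa,K)$. The goal thereby reduces to establishing the observational refinement $\logrelO(\eapp{\kappa}{\denotE{\expr}_\rho}, \plug{K}{\expr[\gamma]})$, which is a weakest-precondition statement about a concrete \gitree sitting inside a homomorphism $\kappa$. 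Throughout I will lean on \Cref{lem:cc:ectx_hom} (so that context denotations are homomorphisms and commute with plugging) and on the limited bind rule \Cref{lem:cc:obs_bind}, which is the only compositional tool available now that \ruleref{wp-hom} is unsound.

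For the structural cases I proceed as follows. The variable case is immediate from $\sem{\Gamma}(\rho,\gamma)$. For application $\eapp{\expr_1}{\expr_2}$, I use \Cref{lem:cc:obs_bind} twice to reduce evaluation of the two subexpressions, appealing to the induction hypotheses with suitably extended context relations; once both have reduced to related values, I unfold $\sem{\tarr{\tau_1}{\tau_2}}$ to obtain a related function and apply it. The $\natop{}{}$ and conditional cases are analogous, relying on the equational theory of $\NATOP$ and the fact that $\Rret(n)$ values are related at $\tnat$. The recursive-function value case requires a \emph{guarded} (L\"ob) induction: I must show the denotation $\Fun(f)$ is related at the arrow type, and since the body may recursively invoke the function, I take a $\later$ of the statement and use the $\Box$-persistence built into $\sem{\tarr{\tau_1}{\tau_2}}$ to reuse the hypothesis across the step taken by $\APPs{\Fun(f)}{\alpha} = \Tick(\dots)$.

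The interesting cases are $\mathtt{callcc}$ and $\mathtt{throw}$, which is where the biorthogonal design pays off. For $\callcc{\var}{\expr}$, after introducing the ambient $(\kappa,K)$ I apply \ruleref{wp-callcc}: this symbolically executes the effect, passing $\kappa$ both as the captured continuation and as the surrounding context, leaving me to prove $\wpre{\eapp{\kappa}{(\eapp{f}{\kappa})}}{\dots}$. I then invoke the induction hypothesis on $\expr$ in the environment extended with the denotation of $\cont{K}$; the crux is to check that the captured-continuation denotation $\Fun(\Next(\Lam x.\Tick(\eapp{\kappa}{x})))$ is related to $\cont{K}$ at type $\tcont{\tau}$, which follows by unfolding $\sem{\tcont{\tau}}$ and supplying the witness $(\kappa,K)$ together with $\logrelC(\sem{\tau})(\kappa,K)$. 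For $\throw{\expr_1}{\expr_2}$, I reduce the two subterms via \Cref{lem:cc:obs_bind}, extract a related continuation from $\sem{\tcont{\tau}}$, and apply \ruleref{wp-throw} to redirect control into that continuation, discharging the goal using the $\logrelC$ witness stored in the continuation's interpretation.

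I expect the main obstacle to be the $\mathtt{callcc}$ case, specifically the bookkeeping around the homomorphism $\kappa$ playing three simultaneous roles: it is the ambient context being reasoned about, the value captured and injected into the environment, and the target of the symbolic \ruleref{wp-callcc} step. Getting the $\later$-modalities, the $\Tick$/$\Tau$ silent steps introduced by the reifier, and the persistence annotations on $\logrelC$ to line up so that the induction hypothesis applies cleanly will require care; in particular one must verify the equational identity relating $\Next(\Lam x.\Tau(f(\Next x)))$-style massaging (from \Cref{fig:cc:io_constructors}) to the continuation witness demanded by $\sem{\tcont{\tau}}$. The remaining structural cases are routine given the equational theory and \Cref{lem:cc:obs_bind}.
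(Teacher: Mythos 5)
Your proposal is correct and takes essentially the same approach as the paper: the paper's proof (detailed in the Rocq formalization) proceeds by exactly the induction on typing derivations you describe, using the infrastructure it sets up for this purpose — \Cref{lem:cc:ectx_hom}, the limited bind rule \Cref{lem:cc:obs_bind} in place of the unsound \ruleref{wp-hom}, \ruleref{wp-callcc}/\ruleref{wp-throw} for the control cases, and L\"ob induction for recursive functions. Your identification of the $\mathtt{callcc}$ case as the crux — with the captured continuation's denotation witnessing $\sem{\tcont{\tau}}$ via the ambient pair $(\kappa, K)$ and its $\logrelC$ witness — is precisely the intended argument, and the $\Tau$/$\Tick$ massaging you flag is indeed the main bookkeeping burden there.
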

Computational adequacy now follows easily from the fundamental lemma.
\begin{proof}[of \Cref{lem:cc:adequacy}]
  By~\Cref{lem:cc:fundamental}, we have that $\typeA{\emptyset}{\expr}{\tnat}$
  implies that $\emptyset \vDash \expr \col \tnat$. Now, the statement follows
  from~\Cref{thm:wp_adequacy} and the assumption that
    $\denotE{\expr}_{\emptyset}, \sigma_1 \istep^* \Rret~n, \sigma_2$.
\end{proof}

\section{Exceptions}
\label{global:exceptions}
Having seen how to model undelimited continuations (\texttt{call/cc} and \texttt{throw}) in \Cref{global:sec:callcc-lang}, we now turn to a more structured form of control flow: exceptions.
Exceptions represent an important middle ground between the full generality of \texttt{call/cc} and the more restricted delimited continuations we will study in \Cref{global:sec:delim-lang}.
While \texttt{call/cc} allows capturing and invoking arbitrary continuations, exceptions provide a more disciplined mechanism for non-local control transfer, where handlers are explicitly registered and exceptions can only jump to their nearest enclosing handler.

This section serves two purposes.
First, it demonstrates that our framework naturally handles exceptions, which are ubiquitous in mainstream programming languages.
Second, and more importantly, it provides a \emph{gentle introduction} to using context-dependent effects in \gitrees by walking through the design decisions involved in modeling exceptions.
We show how to decompose the operational behavior of exceptions into primitive effects, how to maintain appropriate state, and how the program logic rules follow naturally from the effect signatures.

The key insight is that exceptions require managing a \emph{stack of handlers}.
Unlike \texttt{call/cc}, which captures the entire continuation as a first-class value, exceptions maintain a stack of handler contexts that must be consulted when an exception is thrown.

We proceed as follows.
In \Cref{global:sec:exc-lang} we present $\exclang$, a lambda calculus with exceptions. 
In \Cref{global:sec:exc-sem} we show how to model exceptions in \gitrees, explaining the design choices step by step.
We conclude by deriving program logic rules for reasoning about exceptions.

\subsection{Lambda calculus with first-order exceptions, $\exclang$}
\label{global:sec:exc-lang}

We begin by defining the language we wish to model.
The language $\exclang$, shown in~\Cref{fig:exc:lang}, extends the simply-typed lambda calculus with recursive functions with two constructs for exception handling:
\begin{itemize}
  \item $\throwE{\exc}{\expr}$ raises an exception named $\exc$ with payload $\expr$;
  \item $\catch{\exc}{\expr_1}{h}{\expr_2}$ evaluates $\expr_1$ in a context where exceptions named $\exc$ are caught by the handler $h \synds{\expr_2}$.
\end{itemize}

We fix a countable set $\Exc$ of exception names with decidable equality.
Exception names allow scoping: a $\mathtt{catch}$ block only handles exceptions with a matching name, and different exception names can be handled by different handlers.
This is similar to typed exceptions in languages like Java or OCaml, where different exception types can be caught separately.

\paragraph{Operational Semantics.}
The operational semantics, shown in~\Cref{fig:exc:opsem}, uses a CEK-style abstract machine with four configurations:
\begin{itemize}
  \item $\configs{\expr}{term}$: initial configuration, about to execute the term $\expr$;
  \item $\configd{\expr}{K}{eval}$: evaluation mode, decomposing $\expr$ and building continuation stack $K$;
  \item $\configd{K}{\val}{cont}$: continuation mode, processing the stack $K$ with value $\val$;
  \item $\configs{\val}{ret}$: final configuration, returning value $\val$.
\end{itemize}

The key rules for exceptions are:
\begin{enumerate}
\item When evaluating $\catch{\exc}{\expr_1}{h}{\expr_2}$, we push a handler evaluation context $\catch{\exc}{\emptyK}{h}{\expr_2}$ onto the stack and proceed to evaluate $\expr_1$
\item If $\expr_1$ evaluates to a value without raising an exception, the handler evaluation context is simply discarded:
\[
  {\configd{\plug{K}{\catch{\exc}{\emptyK}{h}{\expr}}}{\val}{cont}} \mapsto {\configd{\val}{K}{eval}}
\]
\item If an exception $\exc$ is thrown, we search the continuation stack for the \emph{nearest} matching handler, discarding all intermediate evaluation contexts:
\[
  {\configd{\plug{K'}{\catch{\exc}{\plug{K}{\throwE{\exc}{\emptyK}}}{h}{\expr}}}{\val}{cont}} \mapsto {\configd{\subst{\expr}{h}{\val}}{K'}{eval}}
\]
where $K$ contains no handler for $\exc$.
\end{enumerate}

This operational semantics captures the essential behavior of exceptions: handlers are installed on a stack, and throwing an exception unwinds the stack to the nearest matching handler.

\begin{figure}[t]

  \begin{grammar}
\text{expressions} & \Expr \ni \expr & \val \mid \var \in \Var \mid \eapp{\expr_1}{\expr_2} \\
    \GrmContinue & \mid \catch{\exc}{\expr_1}{h}{\expr_2} \mid \throwE{\exc}{\expr} \\
    \text{values} & \Val \ni \val & \Rec f \var = \expr \\
    \text{eval. cont.} & \Ectx \ni K & \emptyK \mid \eapp{K}{\val} \mid \eapp{\expr}{K} \\
    \GrmContinue & \mid \catch{\exc}{K}{h}{\expr} \mid \throwE{\exc}{K}
  \end{grammar}
\caption{Syntax of $\exclang$.}
  \label{fig:exc:lang}
\end{figure}

\begin{figure}[t]
  \begin{align*}
    \text{abstract machine config.} \qquad \Cfg ::= \configd{\expr}{K}{eval} \mid
    \configd{K}{v}{cont} \mid
    \configs{\expr}{term} \mid
    \configs{\val}{ret}
  \end{align*}
  \begin{align*}
    {\configs{\expr}{term}} &\mapsto {\configd{\expr}{\emptyK}{eval}} \\
    {\configd{\emptyK}{\val}{cont}} &\mapsto {\configs{\val}{ret}} \\
    {\configd{\eapp{\expr_0}{\expr_1}}{K}{eval}} &\mapsto {\configd{\expr_1}{\plug{K}{\eapp{\expr_0}{\emptyK}}}{eval}} \\
    {\configd{\val}{K}{eval}} &\mapsto {\configd{K}{\val}{cont}} \\
    {\configd{\plug{K}{\eapp{\expr}{\emptyK}}}{\val}{cont}} &\mapsto {\configd{\expr}{\plug{K}{\eapp{\emptyK}{\val}}}{eval}} \\
    {\configd{\plug{K}{\eapp{\emptyK}{\expr}}}{\val}{cont}} &\mapsto {\configd{\expr}{\plug{K}{\eapp{\val}{\emptyK}}}{eval}} \\
    {\configd{\plug{K}{\eapp{(\Rec f \var = \expr)}{\emptyK}}}{\val}{cont}} &\mapsto {\configd{\subst{\subst{e}{\var}{\val}}{f}{\Rec f \var = \expr}}{K}{eval}} \\
    {\configd{\catch{\exc}{\expr_1}{h}{\expr_2}}{K}{eval}} &\mapsto {\configd{\expr_1}{\plug{K}{\catch{\exc}{\emptyK}{h}{\expr_2}}}{eval}} \\
    {\configd{\throwE{\exc}{\expr}}{K}{eval}} &\mapsto {\configd{e}{\plug{K}{\throwE{\exc}{\emptyK}}}{eval}} \\
    {\configd{\plug{K}{\catch{\exc}{\emptyK}{h}{\expr}}}{\val}{cont}} &\mapsto {\configd{\val}{K}{eval}} \\
    {\configd{\plug{K'}{\catch{\exc}{\plug{K}{\throwE{\exc}{\emptyK}}}{h}{\expr}}}{\val}{cont}} &\mapsto {\configd{\subst{\expr}{h}{\val}}{K'}{eval}} \\
    & \qquad\qquad (\langkw{try}\dots\langkw{catch}~\exc~\dots\langkw{with} \notin K)
  \end{align*}
  \caption{Operational semantics of $\exclang$ (selected rules).}
  \label{fig:exc:opsem}
\end{figure}
\subsection{Modeling Exceptions in Guarded Interaction Trees}
\label{global:sec:exc-sem}

We now show how to model $\exclang$ in \gitrees, providing step-by-step guidance on the design decisions involved.
This demonstrates the general methodology for modeling context-dependent effects.

\paragraph{Step 1: Identify the Key Operations.}
Looking at the operational semantics, we identify three key operations:
\begin{enumerate}
  \item \emph{Registering a handler}: When entering $\catch{\exc}{\expr_1}{h}{\expr_2}$, we must remember that $\exc$ can be caught by handler $h$;
  \item \emph{Normal exit}: When $\expr_1$ completes without throwing, we must remove the handler;
  \item \emph{Throwing an exception}: When $\throwE{\exc}{\expr}$ is executed, we must find the nearest handler for $\exc$ and transfer control.
\end{enumerate}

Each of these will become an effect in our model.

\paragraph{Step 2: Design the State.}
What state is needed to implement these operations?
We need to track the currently active exception handlers.
Since handlers can nest (try blocks can contain other try blocks), we use a \emph{stack} of handlers.

Each handler entry is a triple $(\exc, h, \kappa)$ where:
\begin{itemize}
  \item $\exc \in \Exc$ is the exception name this handler catches;
  \item $h : \latert \IT \to \latert \IT$ is the handler function (taking the exception payload);
  \item $\kappa : \latert \IT \to \latert \IT$ is the \emph{saved continuation} representing the computation outside the try block.
\end{itemize}

The stack is ordered with the most recent (innermost) handler at the head.
When we register a handler, we need to remember \emph{where to return} after the handler completes.
The continuation $\kappa$ represents all the computation that was ``paused'' when we entered the try block.
Without it, after handling an exception, we would not know how to resume the surrounding computation.

This is analogous to how the operational semantics uses continuation stacks: \\ the evaluation context $\catch{\exc}{K}{h}{\expr}$ remembers both the handler $h$ and the outer continuation $K$.

\paragraph{Step 3: Define the Effect Signatures.}
Based on our operations, we define three effects, shown in~\Cref{fig:exc:effects}:

\begin{itemize}
\item $\mathtt{register}$: Takes an exception name $\exc$, handler $h$, and the computation $x$ to run inside the try block.
  Returns the result of $x$.
  The reifier pushes $(\exc, h, \kappa)$ onto the handler stack, where $\kappa$ is the current continuation.
\item $\mathtt{pop}$: Takes an exception name $\exc$.
  Returns unit.
  The reifier removes the topmost handler for $\exc$ from the stack, restoring the saved continuation.
\item $\mathtt{throw}$: Takes an exception name $\exc$ and payload $x$.
  Never returns (output type is $0$, the empty type).
  The reifier searches the stack for the nearest handler for $\exc$, applies the handler to the payload, and continues with the saved continuation.
\end{itemize}

\paragraph{Step 4: Implement the Reifiers.}
The reifier functions, shown in~\Cref{fig:exc:effects}, directly implement the stack operations:

\begin{itemize}
  \item $r_{\mathtt{register}}$ pushes a new handler onto the stack and returns the inner computation;
  \item $r_{\mathtt{pop}}$ removes the handler from the stack if it matches the given exception name (failing otherwise), and continues with both the current and saved continuations; 
  \item $r_{\mathtt{throw}}$ searches for the first matching handler, applies it to the payload, and continues with the saved continuation, \emph{discarding} the current continuation.
\end{itemize}

The fact that $r_{\mathtt{throw}}$ searches the stack and discards intermediate continuations is what gives exceptions their non-local control transfer behavior.

\paragraph{Step 5: Define Convenient Wrapper Functions.}
We define high-level wrapper functions that combine the primitive effects:

\begin{itemize}
\item $\THROW(\exc, x)$: First evaluates $x$ to a value, then performs $\mathtt{throw}$;
\item $\CATCH(\exc, h, x)$: Performs $\mathtt{register}$, evaluates $x$, then performs $\mathtt{pop}$ to clean up the handler.
\end{itemize}

The $\CATCH$ function shows a key pattern: we register the handler, run the protected computation, and then pop the handler on normal exit.
If an exception is thrown during the computation, the $\mathtt{throw}$ effect will handle it, and the $\mathtt{pop}$ will never be reached.

\paragraph{Step 6: Define the Denotational Semantics.}
The interpretation of exception constructs, shown in~\Cref{fig:exc:denot}, is now straightforward.

Note how the handler $h \synds{\expr_2}$ becomes a lambda function that interprets $\expr_2$ with $h$ bound to the exception payload.

\begin{figure*}[t]
  \begin{flushleft}
    \mbox{\fbox{$\stateO \eqdef \mathsf{list}~(\Exc \times (\latert \IT \to \latert \IT) \times (\latert \IT \to \latert \IT))$}}
  \end{flushleft}
  \begin{align*}
    \Ins_{\mathtt{throw}}(X) & \eqdef \Exc \times \latert X & \Outs_{\mathtt{throw}}(X) & \eqdef 0 \\
    \Ins_{\mathtt{register}}(X) & \eqdef \Exc \times (\latert X \to \latert X) \times \latert X & \Outs_{\mathtt{register}}(X) & \eqdef \latert X \\
    \Ins_{\mathtt{pop}}(X) & \eqdef \Exc & \Outs_{\mathtt{pop}}(X) & \eqdef 1 \\
  \end{align*}
  \begin{align*}
    r_{\mathtt{register}}((\exc, h, x), \sigma, \kappa) &= \Some(x, (\exc, h, \kappa) \cc \sigma) \\
    r_{\mathtt{throw}}((\exc, x), \sigma, \kappa) &=
                                                 \begin{cases}
                                                   \Some(\kappa'~(h~x), \sigma_2), & \sigma = \sigma_1 \app [(\exc, h, \kappa')] \app \sigma_2 \\
                                                   \None, & \text{otherwise}
                                                 \end{cases}  \\
    r_{\mathtt{pop}}(\exc, \sigma, \kappa) &=
                                          \begin{cases}
                                            \None, & \sigma = [] \\
                                            \None, & \sigma = (\exc', h, \kappa') \cc \sigma' \text{ and } \exc \neq \exc' \\
                                            \Some(\kappa'~(\kappa~()), \sigma'), & \sigma = (\exc, h, \kappa') \cc \sigma'
                                          \end{cases}
  \end{align*}
  \begin{align*}
    \POP(\exc) & \eqdef \Vis_{\mathtt{pop}}(\exc, \lambda \_. \Next (\Rret())) \\
    \REG(\exc, h, x) & \eqdef \Vis_{\mathtt{register}}((\exc, h, x), \idfun) \\
    \THROW(\exc, x) & \eqdef \getval(x, \lambda r. \Vis_{\mathtt{throw}}((\exc, \Next(r)), \bot\text{-}\mathsf{elim})) \\
    \CATCH(\exc, h, x) & \eqdef \REG(\exc, h, \Next(\getval(x, \lambda r. \getval(\POP(\exc), \lambda \_. r))))
  \end{align*}
  \caption{Effects for $\exclang$.}
  \label{fig:exc:effects}
\end{figure*}

\begin{figure}[t]
  \begin{align*}
    &\denotE{\throwE{\exc}{\expr}}_{\rho} = \THROW(\exc, \denotE{\expr}_{\rho})
    \\
    &\denotE{\catch{\exc}{\expr_1}{h}{\expr_2}}_{\rho} = \CATCH(\exc, \lambda v. \denotE{\expr_2}_{\rho[h \mapsto v]}, \denotE{\expr_1}_{\rho})
  \end{align*}
  \caption{Denotational semantics for $\exclang$ (exception constructs).}
  \label{fig:exc:denot}
\end{figure}

We then proceed to lift $\denotE{-}$ to evaluation contexts, $\denotK{-}$, and the abstract machine's configurations, $\denotS{-}$, in~\Cref{fig:exc:denotAbs}.
\begin{figure}
  \begin{align*}
      &\denotK{\emptyK}_\rho = (\Lam \var. \var, []) \\ 
      &\denotK{\eapp{K}{\val}}_\rho = \Let (\kappa, \sigma) = \denotK{K}_\rho in \left(\kappa \circ (\Lam \var. \APPs{\var}{(\denotE{\val}_\rho)}), \sigma\right) \\ 
      &\denotK{\eapp{\expr}{K}}_\rho = \Let (\kappa, \sigma) = \denotK{K}_\rho in \left(\kappa \circ (\Lam \var. \APPs{(\denotE{\expr}_\rho)}{\var}), \sigma\right) \\ 
      &\denotK{\throwE{\exc}{K}}_\rho = \Let (\kappa, \sigma) = \denotK{K}_\rho in \left(\kappa \circ (\Lam \var. \THROW(\exc, \var)), \sigma\right) \\ 
      &\denotK{\catch{\exc}{K}{h}{\expr}}_\rho = \Let (\kappa, \sigma) = \denotK{K}_\rho in \\ 
      & \qquad\qquad \left(\Lam \var. \getval(\var, \Lam y. \getval(\POP(\exc), \Lam _. y)), (\exc, \latert(\Lam \var. \denotE{\expr}_{h \mapsto \var}), \latert \kappa) \cc \sigma\right)
  \end{align*}

  \begin{align*}
      &\denotS{\configd{\expr}{K}{eval}}_{\rho} = \Let (\kappa, \sigma) = \denotK{K}_\rho in \left(\kappa(\denotE{\expr}_{\rho}), \sigma\right) 
      \\
      &\denotS{\configd{K}{\val}{cont}}_{\rho} = \Let (\kappa, \sigma) = \denotK{K}_\rho in \left(\kappa(\denotE{\val}_{\rho}), \sigma\right) \\
      &\denotS{\configs{\expr}{term}}_{\rho} = \left(\denotE{\expr}_{\rho}, [] \right) \\
      &\denotS{\configs{\val}{ret}}_{\rho} = \left(\denotE{\val}_{\rho}, [] \right)
  \end{align*}
  \caption{Denotational semantics of the abstract machine's configurations (selected clauses).}
  \label{fig:exc:denotAbs}
\end{figure}

\paragraph{Step 7: Prove Soundness.}
We prove that the denotational semantics is sound with respect to the operational semantics:

\begin{theorem}[Soundness]\label{lem:exc:soundness}
  Let $c_0, c_1 \in \Cfg$ and suppose $\reds{c_0}{c_1}$.
  Then $\denotS{c_{0}}_{\rho} \istep^* \denotS{c_{1}}_{\rho}$ for all $\rho$.
\end{theorem}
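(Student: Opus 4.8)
The plan is to prove soundness by case analysis on the abstract-machine reduction rule used to derive $\reds{c_0}{c_1}$, showing for each rule that the two denotations $\denotS{c_0}_\rho$ and $\denotS{c_1}_\rho$ are connected by a (possibly empty) sequence of $\istep$ steps. The essential tool throughout is the context-dependent reification machinery: each abstract-machine step that touches an exception construct will be mirrored by a single application of the \ruleref{wp-reify-ctx-dep}-style reduction, using the concrete reifiers $r_{\mathtt{register}}$, $r_{\mathtt{pop}}$, and $r_{\mathtt{throw}}$ from \Cref{fig:exc:effects}. First I would prove the analogue of \Cref{lem:cc:ectx_hom} for $\exclang$: namely that the context-denotation $\denotK{K}_\rho$ produces a homomorphism in its first component, and that $\denotS{\configd{\plug{K}{\expr'}}{\val}{cont}}_\rho$ decomposes as the homomorphism $\kappa$ applied to the denotation of the plugged expression, with the matching handler stack $\sigma$ in the second component. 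This decomposition lemma is what lets us identify, in each configuration, the ``current continuation'' $\kappa$ that the reifiers consume as their third argument.

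The routine cases — the purely functional steps such as $\eapp{\expr_0}{\expr_1}$ decomposition, $\beta$-reduction for $\Rec$, and the value/continuation mode switches — reduce to the equational theory of $\APPs{-}{-}$ and $\getval$ from \Cref{fig:gt:functions}, together with the fact that $\Tick$ contributes exactly one $\istep$ step. These I would dispatch by unfolding the relevant $\denotS{-}$, $\denotK{-}$, and wrapper definitions and appealing to the defining equations; each such step produces either zero or one silent reductions on the semantic side, matching the single machine step. The three interesting cases are the exception rules. For the $\mathtt{catch}$-entry rule, I would unfold $\CATCH$ and $\REG$ and show that reifying $\Vis_{\mathtt{register}}$ against the current state fires $r_{\mathtt{register}}$, pushing the triple $(\exc, h, \kappa')$ and continuing with the inner computation — exactly matching the machine's push of a handler evaluation context. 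For normal exit through a handler, I would show that the $\POP$ emitted by $\CATCH$ reifies via $r_{\mathtt{pop}}$, whose $\Some(\kappa'~(\kappa~()), \sigma')$ branch splices the saved continuation $\kappa'$ back onto the restored stack, matching the ``discard handler context'' machine rule.

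The main obstacle will be the $\mathtt{throw}$ rule, where the machine unwinds the stack to the \emph{nearest} matching handler for $\exc$, discarding all intermediate evaluation contexts $K$ (with the side condition $\langkw{try}\dots\langkw{catch}~\exc~\dots\langkw{with} \notin K$). On the semantic side this non-local jump is realized entirely inside $r_{\mathtt{throw}}$, which splits the handler list as $\sigma = \sigma_1 \app [(\exc, h, \kappa')] \app \sigma_2$ and continues with $\kappa'~(h~x)$ on state $\sigma_2$. The crux is a matching lemma: the operational side condition that $K$ contains no $\exc$-handler must correspond exactly to the condition that $\sigma_1$ (the prefix discarded by the list-split) contains no entry tagged $\exc$, so that the reifier's first match coincides with the machine's nearest enclosing handler. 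I would establish this by an induction on the structure of $K$ relating the handler-free-prefix condition on evaluation contexts to the absence of an $\exc$-tagged triple in the corresponding prefix of $\sigma$, using the $\denotK{\catch{\exc}{K}{h}{\expr}}_\rho$ clause (which is the only clause that pushes an $\exc$ entry onto the stack). Once this correspondence is in place, the single reification step of $\mathtt{throw}$ lines up with the single machine transition, and the discarded current continuation on the semantic side corresponds precisely to the discarded contexts in $K$. I expect the bookkeeping in this alignment — keeping the split of $\sigma$ synchronized with the decomposition of the nested evaluation contexts — to be the delicate part, while all other cases are mechanical consequences of the equational theory and the concrete reifier definitions.
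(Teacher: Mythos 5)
Your proposal is correct and follows essentially the same route as the paper: the paper also proves soundness by induction on the operational-semantics derivation, matching each machine transition against the behaviour of the reifiers $r_{\mathtt{register}}$, $r_{\mathtt{pop}}$, and $r_{\mathtt{throw}}$, merely packaging the per-effect cases as the derived rules of \Cref{fig:exc:wpre} (note that \ruleref{wp-throw'} carries the side condition $\exc \notin \sigma'$, which is precisely the handler-alignment fact your ``crux lemma'' establishes). Your explicit decomposition/homomorphism lemma and the context--stack synchronization bookkeeping are the $\exclang$ analogues of \Cref{lem:cc:ectx_hom} that the paper's one-sentence proof sketch leaves implicit.
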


The proof proceeds by induction on the operational semantics derivation, using the program logic rules derived in the next paragraph.

\paragraph{Step 8: Derive Program Logic Rules.}
From the effect signatures and reifiers, we derive program logic rules for reasoning about exceptions, shown in~\Cref{fig:exc:wpre}.
These rules follow directly from the \ruleref{wp-reify} rule for general effects.
The key rules are:
\begin{itemize}
\item \ruleref{wp-reg}: Registering a handler requires the state $\sigma$ and produces state with the handler pushed: $(\exc, h, \latert \kappa) \cc \sigma$;
\item \ruleref{wp-throw'}: Throwing requires that a matching handler exists in the state, and transfers control to that handler;
\item \ruleref{wp-catch-throw}: A combined rule for the common case where an exception is immediately thrown and caught within the same $\CATCH$.
\end{itemize}

These rules enable modular reasoning about exception-using code within the program logic.

\begin{figure}[t]
  \begin{mathpar}
    \inferH{wp-reg}
    {
      \begin{array}[c]{c}
        \hasstate{(\sigma)} \arcr
        \later (\hasstate{((\exc, h, \latert \kappa) \cc \sigma)} \wand \wpre{\beta}{\Phi}) \arcr
\end{array}
    }
    {\wpre{\kappa(\REG(\exc, h~(\Next(\beta))))}{\Phi}}
    \and
    \inferH{wp-pop'}
    {
      \begin{array}[c]{c}
        \hasstate{((\exc, h, \kappa') \cc \sigma)} \arcr
        \later (\hasstate{(\sigma)} \wand \wpre{\beta}{\Phi}) \arcr
        \kappa' (\latert \kappa~(\Next(\Rret()))) = \Next (\beta)
      \end{array}
    }
    {\wpre{\kappa(\POP(\exc))}{\Phi}}
    \and
    \inferH{wp-catch}
    {
      \begin{array}[c]{c}
        \hasstate{(\sigma)} \arcr
        x \in \Val \arcr
        \later (\hasstate{(\sigma)} \wand \wpre{\kappa~x}{\Phi})
      \end{array}
    }
    {\wpre{\kappa(\CATCH(\exc, h, x))}{\Phi}}
    \and
    \inferH{wp-throw'}
    {
      \begin{array}[c]{c}
        \exc \notin \sigma' \arcr 
        \hasstate{(\sigma' \app [(\exc, h, \kappa')] \app \sigma)} \arcr
        \later (\hasstate{(\sigma)} \wand \wpre{\beta}{\Phi}) \arcr
        \kappa' (h~(\Next(x))) = \Next (\beta)
      \end{array}
    }
    {\wpre{\kappa(\THROW(\exc, x))}{\Phi}}
    \and
    \inferH{wp-catch-throw}
    {
      \begin{array}[c]{c}
        \hasstate{(\sigma)} \arcr
        \later (\hasstate{(\sigma)} \wand \wpre{\kappa~\beta}{\Phi}) \arcr
        h~(\Next(x)) = \Next (\beta)
      \end{array}
    }
    {\wpre{\kappa(\CATCH(\exc, h, (\kappa'(\THROW(\exc, x)))))}{\Phi}}
  \end{mathpar}
  \caption{Weakest precondition rules for exceptions.}
  \label{fig:exc:wpre}
\end{figure}

\paragraph{Summary.}
This completes our treatment of exceptions.
We have shown how to systematically design effects for exceptions by:
\begin{enumerate}
\item Identifying the key operations from the operational semantics;
\item Designing appropriate state (a handler stack with saved continuations);
\item Defining effect signatures that capture these operations;
\item Implementing reifiers that manipulate the state correctly;
\item Deriving high-level wrappers and program logic rules.
\end{enumerate}

This methodology generalizes to other context-dependent effects.
The key is understanding what state needs to be maintained and how the effects should manipulate it.
In the next section, we apply similar techniques to the more complex case of delimited continuations.

\section{Modeling Delimited Continuations}
\label{global:sec:delim-lang}
In this section we show how guarded interaction trees can be used to give semantics to delimited continuations which are a challenging example of context-dependent effects.
We provide a denotational semantics for a programming language $\delimlang$ with \texttt{shift} and \texttt{reset} effects.
Our approach here for giving denotational semantics to $\delimlang$ is similar to what we did for $\exclang$ in~\Cref{global:exceptions}.
However, in addition to soundness our denotational semantics (\Cref{lem:exc:soundness}), we also show computational adequacy for $\delimlang$ relative to an abstract machine semantics \citep{DBLP:journals/lmcs/BiernackaBD05}.
To the best of our knowledge, this represents the first formalized sound and adequate direct-style denotational semantics for delimited continuations.
\
\subsection{Syntax and Operational Semantics of $\delimlang$}
\begin{figure}[t]
  \begin{flushright}
    \fbox{$
      \begin{aligned}
        &\typeD{\Gamma}{\expr}{\tau}{\alpha}{\beta}\\
        &\typeDpure{\Gamma}{\expr}{\tau}
      \end{aligned}$}
  \end{flushright}
  \vspace{-1.0cm}
  \begin{grammar}
    \text{types} & \Tys \ni \tau, \sigma, \alpha, \beta, \delta, \gamma & \tnat \mid \tarr{\tau/\alpha}{\sigma/\beta} \mid \tcont{\tau, \alpha} \\
    \text{expressions} & \Expr \ni \expr & \val \mid \var \mid \eapp{\expr_1}{\expr_2} \mid \natop{\expr_1}{\expr_2} \\
    \GrmContinue & \mid \If \expr_1 then \expr_2 \Else \expr_3 \mid \control{\var}{\expr} \mid \delim{\expr} \mid \ekapp{\expr_1}{\expr_2} \\
    \text{values} & \Val \ni \val & n \mid \Rec f \var = \expr \mid \cont{K} \\
    \text{eval. contexts} & \Ectx \ni K & \emptyK \mid \plug{K}{\If \emptyK then \expr_1 \Else \expr_2}\mid \plug{K}{\eapp{\val}{\emptyK}} \mid \plug{K}{\eapp{\emptyK}{\expr}} \\
    \GrmContinue & \mid \plug{K}{\natop{\expr}{\emptyK}} \mid \plug{K}{\natop{\emptyK}{\val}} \mid \plug{K}{\ekapp{\emptyK}{\val}} \mid \plug{K}{\ekapp{\expr}{\emptyK}}
  \end{grammar}
  \vspace{-0.2cm}
  \begin{mathpar}
    \inferrule{\typeDpure{\Gamma}{\expr}{\tau}}{\typeD{\Gamma}{\expr}{\tau}{\alpha}{\alpha}}
    \and
    \inferrule{\typeD{\Gamma, \var \col \cont{(\tau, \alpha)}}{\expr}{\sigma}{\sigma}{\beta}}{\typeD{\Gamma}{\control{\var}{\expr}}{\tau}{\alpha}{\beta}}
    \and
    \inferrule{\typeD{\Gamma}{\expr}{\tau}{\tau}{\sigma}}{\typeDpure{\Gamma}{\delim{\expr}}{\sigma}}
    \and
    \inferrule{\var \col \tau \in \Gamma}{\typeDpure{\Gamma}{\var}{\tau}}
    \and
    \inferrule{\typeD{\Gamma, f \col \tarr{\sigma/\alpha}{\tau/\beta}, \var \col \sigma}{\expr}{\tau}{\alpha}{\beta}}{\typeDpure{\Gamma}{\Rec f \var = \expr}{\tarr{\sigma/\alpha}{\tau/\beta}}}

    \and
    \inferrule{\begin{array}[c]{c}\typeD{\Gamma}{\expr_1}{\tarr{\sigma/\alpha}{\tau/\beta}}{\gamma}{\delta} \arcr \typeD{\Gamma}{\expr_2}{\sigma}{\beta}{\gamma}\end{array}}{\typeD{\Gamma}{\eapp{\expr_1}{\expr_2}}{\tau}{\alpha}{\delta}}
    \and
    \inferrule{\typeD{\Gamma}{\expr_1}{\tnat}{\beta}{\alpha} \quad \typeD{\Gamma}{\expr_2}{\tau}{\sigma}{\beta} \quad \typeD{\Gamma}{\expr_3}{\tau}{\sigma}{\beta}}{\typeD{\Gamma}{\If \expr_1 then \expr_2 \Else \expr_3}{\tau}{\sigma}{\alpha}}
    \and
    \inferrule{\\}{\typeDpure{\Gamma}{n}{\tnat}}
    \and
    \inferrule{\typeD{\Gamma}{\expr_1}{\tnat}{\alpha}{\beta} \quad \typeD{\Gamma}{\expr_2}{\tnat}{\beta}{\sigma}}{\typeD{\Gamma}{\natop{\expr_1}{\expr_2}}{\tnat}{\alpha}{\sigma}}
    \and
    \inferrule{\typeD{\Gamma}{\expr_1}{\cont{(\tau, \alpha)}}{\sigma}{\delta} \quad \typeD{\Gamma}{\expr_2}{\tau}{\delta}{\beta}}{\typeD{\Gamma}{\ekapp{\expr_1}{\expr_2}}{\alpha}{\sigma}{\beta}}
  \end{mathpar}
  \caption{Syntax and typing rules of $\delimlang$.}
  \label{fig:d:syn_delim_control}
\end{figure}
\begin{figure}[t]

\begin{minipage}[t]{0.45\textwidth}
    \begin{align*}
      {\configs{\expr}{term}} &\mapsto {\config{\expr}{\emptyK}{[]}{eval}} \\
      {\configd{K \cc mk}{\val}{mcont}} &\mapsto {\config{K}{\val}{mk}{cont}} \\
      {\configd{[]}{\val}{mcont}} &\mapsto {\configs{\val}{ret}} \\
      {\config{\emptyK}{\val}{mk}{cont}} &\mapsto {\configd{mk}{\val}{mcont}} \\
      {\config{\plug{K}{\ekapp{\emptyK}{\val}}}{\cont{K'}}{mk}{cont}} &\mapsto {\config{K'}{\val}{K \cc mk}{cont}} \\
      {\config{\plug{K}{\ekapp{\expr}{\emptyK}}}{\val}{mk}{cont}} &\mapsto {\config{\expr}{\plug{K}{\ekapp{\emptyK}{\val}}}{mk}{eval}} \\
{\config{\val}{K}{mk}{eval}} &\mapsto {\config{K}{\val}{mk}{cont}} \\
{\config{\ekapp{\expr_0}{\expr_1}}{K}{mk}{eval}} &\mapsto {\config{\expr_1}{\plug{K}{\ekapp{\expr_0}{\emptyK}}}{mk}{eval}} \\
      {\config{\delim{\expr}}{K}{mk}{eval}} &\mapsto {\config{\expr}{\emptyK}{K \cc mk}{eval}} \\
      {\config{\control{k}{\expr}}{K}{mk}{eval}} &\mapsto {\config{\subst{e}{k}{K}}{\emptyK}{mk}{eval}}
    \end{align*}
  \end{minipage}
  \begin{minipage}[t]{0.3\textwidth}
    \begin{align*}
      \text{metacontinuations:}\\
      \Mcont \ni mk ::= [] \mid K \cc mk \\
      \text{abstract machine config.:}\\
      \Cfg ::= \config{\expr}{K}{mk}{eval} \mid\\
      \config{K}{v}{mk}{cont} \mid \\
      \configd{mk}{v}{mcont} \mid \\
      \configs{\expr}{term} \mid\\
      \configs{\val}{ret}
    \end{align*}
\end{minipage}
  \caption{Operational semantics of $\delimlang$ (excerpt).}
  \label{fig:d:dyn_sem_delim_control}
\end{figure}

The syntax and the type system of $\delimlang$ is given in \Cref{fig:d:syn_delim_control}.
It is similar to $\iocclang$, but instead of $\callcc{-}{-}$ there are operators $\delim{\expr}$ (delimit the current evaluation context, also known as \texttt{reset}) and $\control{\var}{\expr}$ (grab the current delimited continuation, and bind it to $\var$ in $\expr$, also known as \texttt{shift}).

The type system follows Danvy and Filinski \cite{danvy.filinski.1989}, extending simply-typed $\lambda$-calculus with answer types $\alpha$, $\beta$.
The main typing judgment $\typeD{\Gamma}{\expr}{\tau}{\alpha}{\beta}$ means: under the typing context $\Gamma$, expression $\expr$ can be plugged into a context expecting a value of type $\tau$ and producing a value of type $\alpha$; in that case the resulting program will have the type $\beta$.
You can think of it a computation $\Gamma \to (\tau \to \alpha) \to \beta$ under the CPS translation.
Thus, the type of the (delimited) continuation corresponds to $\tau \to \alpha$, while the type of the overall expression is $\beta$.
The pure typing judgment $\typeDpure{\Gamma}{\expr}{\tau}$ indicates $\expr$ does not depend on the surrounded context, and is context-independent for any answer types.
Expressions can change their context's answer type, as seen in the $\control{\var}{\expr}$ typing rule.

For example, suppose we extend the type system with booleans, $\Tbool$, and add a primitive function $\mathsf{isprime}$ that does not modify answer types.
That is $\typeD{\emptyset}{\mathsf{isprime}}{\tarr{\tnat/\beta}{\Tbool/\beta}}{\alpha}{\alpha}$.
The expression ${\delim{(\eapp{(\Rec f x = \eapp{\mathsf{isprime}}{(\control{k}{x - 1})})}{2})}}$ is well-typed in this type system as a $\tnat$, even though it changes the answer type from $\Tbool$ to $\tnat$.

Answer types appear in both judgments and type constructors.
Continuation type $\cont(\tau,\alpha)$ represents contexts expecting something of the type $\tau$ and producing something of the type $\alpha$.
Function type $\tarr{\sigma/\alpha}{\tau/\beta}$, in addition to the input type $\sigma$ and the output type $\tau$, record the typing of the surrounding context at the point of the function call.
See \cite{danvy.filinski.1989} for details.

The operational semantics for $\delimlang$, similarly to $\exclang$, uses a CEK machine, following \cite{DBLP:journals/lmcs/BiernackaBD05,DBLP:conf/ifip2/FelleisenF87,DBLP:conf/ppdp/BiernackaB09}.
Selected reduction rules appear in \Cref{fig:d:dyn_sem_delim_control} (see Rocq formalization for the full set of rules).
The abstract machine operates on various \emph{configurations}, which can be of several forms.
The first one is the initial configuration $\configs{\expr}{term}$, which is just a starting state for evaluating expressions.
Similarly, there is a terminal configuration $\configs{\val}{ret}$ signifying that the program has terminated with the value $\val$.

From the initial configuration, we go on to $\config{\expr}{K}{mk}{eval}$, which signifies that we are evaluating an expression $\expr$ inside the current delimited context $K$, with the metacontinuation $mk$ (a stack of continuations based on different delimiters).
It is this configuration type which takes care of delimited control operations.
The $\delim{}$ operator saves the current continuation on top of the metacontinuation, limiting the scope of $\control{\var}{\expr}$.
The $\control{\var}{\expr}$ operation behaves similarly to $\callcc{\var}{\expr}$, except that it prevents later control operators from capturing its evaluation context.

The last two configuration types are for dealing with continuations and metacontinuations.
A configuration $\config{K}{\val}{mk}{cont}$ signifies that we are trying to plug in the value $\val$ into the context $K$, with the metacontinuation $mk$.
A configuration $\configd{mk}{v}{mcont}$ signifies that we are done with the current continuation (ending with the value $v$), but we still have to unwind the continuation stack $mk$.

{
  \subsection{Denotational Semantics of $\delimlang$}
  \label{global:sec:delim-sem}
  \begin{figure*}[t]
\begin{align*}
  \Ins_{\mathtt{reset}}(X) & \eqdef \latert X & \Ins_{\mathtt{shift}}(X) & \eqdef (\latert X \rightarrow \latert X) \rightarrow \latert X \\
  \Ins_{\mathtt{pop}}(X) & \eqdef \latert X & \Ins_{\mathtt{appcont}}(X) & \eqdef \latert X \times \latert (X \rightarrow X) \\
  \Outs_{\mathtt{reset}}(X) & \eqdef \latert X & \Outs_{\mathtt{shift}}(X) & \eqdef \latert X \\ \Outs_{\mathtt{pop}}(X) & \eqdef 0 & \Outs_{\mathtt{appcont}}(X) & \eqdef \latert X \\
      r_{\mathtt{reset}}(e, \sigma, \kappa) &= \Some(e, \kappa \cc \sigma) &
                                                                           r_{\mathtt{shift}}(f, \sigma, \kappa) &= \Some(\eapp{f}{\kappa}, \sigma) \\
  r_{\mathtt{pop}}(e, [], \_) &= \Some(e, []) &
                                                r_{\mathtt{pop}}(e, \kappa \cc \sigma, \_) &= \Some(\eapp{\kappa}{e}, \sigma) \\
   r_{\mathtt{appcont}} ((e, \kappa), \sigma, \kappa') &= \Some(\eapp{\kappa}{e}, \kappa' \cc \sigma) & \popAux(\beta) &\eqdef \getval(\beta, \POP) \\
  \RESET(e) &\eqdef \Vis_{\mathtt{reset}}(e, \idfun) &
                                                       \SHIFT(f) &\eqdef \Vis_{\mathtt{shift}}(f, \idfun) \\
  \APPCONT(e, f) &\eqdef \Vis_{\mathtt{appcont}}{((e, f), \idfun)} & \POP(e) &\eqdef \Vis_{\mathtt{pop}}{(e, \Lam x. \elimF x)}
\end{align*}
\caption{Effects for $\delimlang$.}
\label{fig:delim:effects}
\end{figure*}

Our model represents delimited continuations with effects mimicking an
abstract machine, operating on semantic rather than syntactic
components. The effect signature and reifiers
(\Cref{fig:delim:effects}) define a state with a stack of
continuations, manipulated explicitly. The effect signature
$E_{\delimlang}$ includes four operators:
$
 \{\mathtt{reset}, \mathtt{shift}, \mathtt{pop}, \mathtt{appcont}\}.
$
The signature of $\mathtt{reset}$ simply tells us that the
corresponding effect does not directly modify its argument. The
auxiliary effect $\mathtt{pop}$, which does not have an equivalent in
the surface syntax, is used to enforce unwinding of the continuation
stack. As the output arity of $\mathtt{pop}$ signifies, it does not
return. We describe the importance of that below. The rest of the
signature is more straightforward: $\mathtt{shift}$ and
$\mathtt{appcont}$ are defined exactly as $\mathtt{callcc}$ and
$\mathtt{throw}$. The semantics of these effects, in terms of
reification, is more intricate. As we mentioned, the state for
reification is $\State = \List(\latert \IT \to \latert \IT)$.

In comparison with $\callcc{\var}{\expr}$, the control operator $\control{\var}{\expr}$ does not necessarily continue from the same continuation; hence, the corresponding reifier passes the current continuation to the body, but does not return control back.
The reifier for $\mathtt{reset}$ simply saves the current continuation $\kappa$ onto the stack $\sigma$.
It is then the job of the $\mathtt{pop}$ operation to restore the continuation from the stack.
The reifier for $\mathtt{shift}$ is similar to that of $\mathtt{callcc}$, except that it removes the current continuation entirely.
The reifier for $\mathtt{appcont}$, in comparison with $\mathtt{throw}$,
does not simply pass control, but also saves the current continuation
on the stack.
This corresponds to the fact that whenever a delimited continuation is invoked, the result is wrapped in a reset; that is done to prevent the continuation from escaping the delimiter.
\begin{figure}[t]
  \begin{mathpar}
    \inferH{wp-shift}
    {
      \begin{array}[c]{c}
        \hasstate{(\sigma)} \arcr
        \later (\hasstate{(\sigma)} \wand \wpre{\beta}{\Phi}) \and\arcr
        \latert \popAux (f (\latert \kappa)) = \Next(\beta)
      \end{array}
    }
    {\wpre{\kappa(\SHIFT(f))}{\Phi}}
    \and
    \inferH{wp-reset}
    {
      \begin{array}[c]{c}
        \hasstate{(\sigma)} \arcr
        \later (\hasstate{(\latert \kappa \cc \sigma)} \wand \wpre{\popAux(e)}{\Phi})
      \end{array}
    }
    {\wpre{\kappa (\RESET (\Next (e)))}{\Phi}}
    \and
    \inferH{wp-pop}
    {
      \begin{array}[c]{c}
        \hasstate{(\sigma)} \arcr
        \kappa' = \kappa~\text{if $\sigma = \kappa \cc \sigma'$ and $\idfun$ otherwise} \arcr
        \later (\hasstate{(\mathtt{tail}(\sigma))} \wand \wpre{\kappa'(v)}{\Phi})
      \end{array}
    }
    {\wpre{\popAux(v)}{\Phi}}
\and
    \inferH{wp-appcont}
    {
      \begin{array}[c]{c}
        \hasstate{(\sigma)} \arcr
        \later (\hasstate{(\latert \kappa \cc \sigma)} \wand \wpre{\beta}{\Phi}) \arcr
        \latert \kappa' (e) = \Next (\beta)
      \end{array}
    }
    {\wpre{\kappa(\APPCONT(e, \kappa'))}{\Phi}}
  \end{mathpar}
  \caption{Weakest precondition rules for delimited continuations.}
  \label{fig:d:wpre}
\end{figure}
As part of instantiating \gitrees with these effects, we obtain the specialized program logic rules shown in \Cref{fig:d:wpre}.
We will use those rules later for defining a logical relation between the syntax and the semantics of $\delimlang$.
\begin{figure}[t]
  \begin{align*}
    &\denotE{\delim{\expr}}_{\rho} = \RESET (\popAux (\denotE{\expr}_\rho)) \\
    &\denotE{\control{\var}{\expr}}_{\rho} =  \SHIFT (\popAux \circ (\Lam \kappa. \denotE{\expr}_{\rho, \var \mapsto \Fun (\Next (\Lam y. \Tau (\kappa (\Next y))))})) \\
    &\denotE{\ekapp{\expr_1}{\expr_2}}_{\rho} = \getval (\denotE{\expr_2}_{\rho}, \Lam x. \getfun (\denotE{\expr_1}_\rho, \Lam y. \APPCONT (\Next(x), y))) \\
    &\denotV{\cont{K}}_\rho = \Fun(\Next (\Lam x. \Tick(\popAux (\denotK{K}_\rho\;x)))) \\
    &\denotK{\plug{K}{\ekapp{\emptyK}{\val}}}_\rho = \Lam \var. \denotK{K}_\rho (\denotE{\ekapp{\var}{\val}}_\rho) \\
    &\denotM{mk}_{\rho} = \mathrm{map} (\Lam k. \popAux \circ \denotK{k}_{\rho}) mk \\
    &\denotS{\config{\expr}{K}{mk}{eval}}_{\rho} = \left(\popAux (\denotE{\plug{K}{\expr}}_{\rho}), \denotM{mk}_{\rho}\right) \\
    &\denotS{\config{K}{\val}{mk}{cont}}_{\rho} = \left(\popAux (\denotE{\plug{K}{\val}}_{\rho}), \denotM{mk}_{\rho}\right) \\
    &\denotS{\configd{mk}{\val}{mcont}}_{\rho} = \left( \popAux (\denotV{\val}_{\rho}), \denotM{mk}_{\rho} \right) \\
    &\denotS{\configs{\expr}{term}}_{\rho} = \left(\popAux (\denotE{\expr}_{\rho}), [] \right) \\
    &\denotS{\configs{\val}{ret}}_{\rho} = \left( \denotV{\val}_{\rho}, [] \right)
  \end{align*}
  \caption{Denotational semantics for a calculus with delimited control (selected clauses).}
  \label{fig:d:delim_sem}
\end{figure}
As mentioned above, we will use $\POP$ to unwind the continuation stack and restore the continuation after finishing with a $\mathtt{reset}$.
This means that we will need to insert explicit calls to $\POP$ in the interpretation of $\delimlang$.
For these purposes, we use an abbreviation $\popAux(\beta)$, which first evaluates $\beta$ to a value, and then executes the $\mathtt{pop}$ operation.

The interpretation of $\delimlang$ uses this auxiliary function and is given in \Cref{fig:d:delim_sem}.
Similarly to the operational semantics, the interpretation is divided into five categories.
First, we have $\denotE{-}$ and $\denotV{-}$ for the interpretation of expressions and values, which is what we need for the surface syntax.
All of those interpretations return \gitrees{}.
Note that in the interpretation of $\delim{-}$ we insert explicit calls to $\popAux$, and similarly in the interpretation of continuations.

The other group of interpretations, $\denotK{-}$, $\denotM{-}$ and $\denotS{-}$, are for interpreting continuations, metacontinuations, and other configurations;
these are used for showing soundness (preservation of operational semantics by the interpretation).
The interpretation $\denotK{-}$ of continuations returns a semantic continuation (a function $\IT \to  \IT$).
Similarly, the interpretations $\denotM{-}$ (resp. $\denotS{-}$) of metacontinuations (resp. configurations) returns a stack of semantic continuations (resp. a semantic configuration).

We now show that our interpretation is sound w.r.t. the abstract machine semantics.
For this we prove lemmas similar to \Cref{lem:cc:ectx_hom}, and put them to use in the soundness theorem:
\begin{theorem}{Soundness.}
  \label{lem:d:soundness}
  Let $c_0, c_1 \in \Cfg$ and suppose $\reds{c_0}{c_1}$.
  Then $\denotS{c_{0}} \istep^* \denotS{c_{1}}$.\end{theorem}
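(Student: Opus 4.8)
The plan is to proceed by case analysis on the derivation of $\reds{c_0}{c_1}$, i.e.\ on which abstract-machine rule of \Cref{fig:d:dyn_sem_delim_control} was applied (together with the analogous rules for the remaining constructs), exhibiting for each rule a possibly empty sequence of GITree reductions $\denotS{c_0}_\rho \istep^{*} \denotS{c_1}_\rho$. Before the case analysis I would first establish the auxiliary facts that play the role of \Cref{lem:cc:ectx_hom} here: that $\popAux$ and every $\denotK{K}_\rho$ are homomorphisms (hence so are their composites), the plugging lemma $\denotE{\plug{K}{\expr}}_\rho = \denotK{K}_\rho(\denotE{\expr}_\rho)$, the coherence of value interpretation $\denotE{\val}_\rho = \denotV{\val}_\rho$, and the substitution lemma $\denotE{\subst{\expr}{\var}{\val}}_\rho = \denotE{\expr}_{\rho, \var \mapsto \denotV{\val}_\rho}$, all proved by routine induction on contexts and expressions.

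With these in hand, each case follows a common recipe. Unfolding $\denotS{-}$ turns the first component of $\denotS{c_0}_\rho$ into $H(\denotE{\expr}_\rho)$, where $H = \popAux \circ \denotK{K}_\rho$ is a homomorphism and $\expr$ is the redex sitting in the hole. Since $H$ is a homomorphism it commutes with the top-level $\Vis$-node of the denoted effect, bubbling that effect to the top with continuation $\latert H$, and one then reads off the reduction step directly from the corresponding context-dependent reifier of \Cref{fig:delim:effects}. For the $\delim{-}$ rule, $r_{\mathtt{reset}}$ pushes $\latert H$ onto the stack, and a short computation shows the reduced pair equals $\denotS{\config{\expr}{\emptyK}{K \cc mk}{eval}}_\rho$ (modulo the $\latert$ on the head continuation), because $\denotM{K \cc mk}_\rho = (\popAux \circ \denotK{K}_\rho) \cc \denotM{mk}_\rho$. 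The $\ekapp{-}{-}$/$\mathtt{appcont}$ rule is similar, with $r_{\mathtt{appcont}}$ both invoking the captured continuation and re-pushing $\latert H$, which is precisely the ``wrap the invocation in a reset'' behaviour. Several administrative rules (e.g.\ $\config{\val}{K}{mk}{eval} \mapsto \config{K}{\val}{mk}{cont}$, or $\config{\emptyK}{\val}{mk}{cont} \mapsto \configd{mk}{\val}{mcont}$) become literal equalities of denotations after applying value coherence, so they contribute zero reduction steps, which is harmless since we are proving a statement about $\istep^{*}$.

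The delicate case is $\control{k}{\expr}$ (\texttt{shift}): here $r_{\mathtt{shift}}$ discards the current continuation and feeds $\latert H$ to the body, so I would check that the captured $\Fun(\Next(\Lam y.\ \Tau((\latert H)(\Next y))))$ coincides definitionally with $\denotV{\cont{K}}_\rho$, and then close the case using the substitution lemma to rewrite $\denotE{\subst{\expr}{k}{K}}_\rho$ as $\denotE{\expr}_{\rho, k \mapsto \denotV{\cont{K}}_\rho}$. I expect the main obstacle throughout to be not the combinatorics of the cases but the guardedness bookkeeping: keeping track of the $\latert$, $\Next$, and $\Tau$/$\Tick$ annotations so that each reifier output lands as a genuine $\istep$ step, and in particular ensuring that the explicit $\popAux$ insertions in the interpretation of $\delim{-}$ and of continuations line up exactly with the stack pushes and pops performed by $r_{\mathtt{reset}}$, $r_{\mathtt{pop}}$ and $r_{\mathtt{appcont}}$. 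Getting this coordination right, rather than off by one reset or pop, is what makes the soundness statement informative.
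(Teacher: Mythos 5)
Your proposal is correct and follows essentially the same route as the paper: the paper's (terse) proof likewise establishes analogues of \Cref{lem:cc:ectx_hom} --- that $\popAux \circ \denotK{K}_\rho$ is a homomorphism and that interpretation commutes with plugging --- and then proceeds by case analysis on the abstract-machine rules, reading off each step from the context-dependent reifiers, with administrative transitions collapsing to equalities. Your additional observations (value coherence, the substitution lemma, and the definitional coincidence of the reifier-captured continuation with $\denotV{\cont{K}}_\rho$ in the \texttt{shift} case) are exactly the bookkeeping carried out in the accompanying Rocq formalization.
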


\subsection{Logical Relation and Adequacy}
We now show that our denotational semantics is adequate with regards to the abstract machine semantics.
Specifically, we show the following result:
\begin{theorem}{Adequacy.}
  \label{lem:d:adequacy}
  Suppose $\typeD{\emptyset}{\expr}{\tnat}{\tnat}{\tnat}$ is a well-typed term,
  and that $\left(\popAux(\denotE{\expr}_{\emptyset}), []\right) \istep^* \left(\Rret(n), \sigma\right)$ for a natural number $n$ and a metacontinuation $\sigma$.
  Then $\contrStar{\configs{\expr}{term}}{\configs{n}{ret}}$.
\end{theorem}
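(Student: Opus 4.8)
The plan is to mirror the two-step argument used for $\iocclang$ in the proof of \Cref{lem:cc:adequacy}: first prove a \emph{fundamental lemma} stating that every well-typed $\delimlang$ term is logically related to its own denotation, and then combine this with the weakest-precondition adequacy theorem \Cref{thm:wp_adequacy}. Once we know that $\expr$ is semantically well-typed and that $\popAux(\denotE{\expr}_{\emptyset})$ with empty metacontinuation reduces to $\Rret(n)$, the logical relation forces the abstract machine of \Cref{fig:d:dyn_sem_delim_control} to also terminate with the same numeral, giving $\contrStar{\configs{\expr}{term}}{\configs{n}{ret}}$. All the difficulty is thus concentrated in designing a logical relation faithful to the delimited-control machine and in proving the fundamental lemma by induction on the Danvy--Filinski typing derivation.

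The logical relation must be richer than the one in \Cref{fig:cc:logrel} because of two new features: \emph{answer types} and \emph{metacontinuations}. First I would define a value relation $\sem{\tau}$ by induction on types, with $\sem{\tnat}$ relating $\Rret(n)$ to the numeral $n$; the continuation type $\sem{\tcont{\tau, \alpha}}$ relating a semantic continuation $\Fun(\Next(\Lam x.\Tick(\popAux(\kappa\ x))))$ to a syntactic $\cont{K}$ exactly when $\kappa$ and $K$ are related as \emph{delimited} continuations at answer types $(\tau, \alpha)$; and $\sem{\tarr{\sigma/\alpha}{\tau/\beta}}$ relating $\Fun(f)$ to a value when related arguments produce related computations at the recorded answer types. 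On top of this, because $\delim{-}$ and $\control{\var}{\expr}$ push and pop on a stack, the observation relation $\logrelO$ is biorthogonal in two layers: a delimited-continuation relation $\logrelC$ relating $\kappa$ to $K$ at a pair of answer types, and a separate metacontinuation relation relating a semantic stack (an element of the reifier state $\List(\latert \IT \to \latert \IT)$) to a syntactic $mk \in \Mcont$, defined pointwise from $\logrelC$. The expression relation $\logrelE$ then quantifies over all related delimited continuations and all related metacontinuation stacks, and requires $\logrelO$ of the $\popAux$-wrapped plugged \gitree against the configuration $\config{\expr}{K}{mk}{eval}$, where $\logrelO$ demands safety and agreement of the produced numeral with the machine while owning $\hasstate$ of the related semantic stack.

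With the relation fixed, I would prove the fundamental lemma by simultaneous induction on the two judgments $\typeD{\Gamma}{\expr}{\tau}{\alpha}{\beta}$ and $\typeDpure{\Gamma}{\expr}{\tau}$, threading closing substitutions through related environments as in \Cref{fig:cc:logrel}. The structural cases (variables, application, arithmetic, conditionals) reuse a limited bind rule analogous to \Cref{lem:cc:obs_bind}, which in turn rests on the fact, proved as for \Cref{lem:cc:ectx_hom}, that $\denotK{K}_\rho$ is a homomorphism. The two genuinely new cases are $\delim{\expr}$ and $\control{\var}{\expr}$, discharged with the specialized rules of \Cref{fig:d:wpre}. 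For $\delim{\expr}$ I would apply \ruleref{wp-reset}, showing that pushing the current continuation yields a semantic stack still related to $K \cc mk$ and that the inserted $\popAux$ later restores it via \ruleref{wp-pop}; this matches the machine step $\config{\delim{\expr}}{K}{mk}{eval} \mapsto \config{\expr}{\emptyK}{K \cc mk}{eval}$. For $\control{\var}{\expr}$ I would apply \ruleref{wp-shift}, hand the captured continuation to the body, and check that the resulting semantic function inhabits $\sem{\tcont{\tau, \alpha}}$, matching $\config{\control{\var}{\expr}}{K}{mk}{eval} \mapsto \config{\subst{\expr}{\var}{K}}{\emptyK}{mk}{eval}$; applying a captured continuation later goes through \ruleref{wp-appcont}.

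The main obstacle will be the bookkeeping around the metacontinuation and the auxiliary $\mathtt{pop}$ effect. Unlike the undelimited case, the biorthogonal relation must record that every invocation of a delimited continuation re-wraps its result under a fresh delimiter (reflected operationally by $\mathtt{appcont}$ saving the current continuation on the stack, cf. \Cref{fig:delim:effects}), and that each $\popAux$ removes exactly the continuation that the matching $\RESET$ pushed. Making the answer-type indexing commute with these push/pop operations---so that relatedness is preserved across answer-type-changing uses of $\control{\var}{\expr}$ and across the stack discipline---is the delicate part, and is where I expect most of the proof effort to lie. Once the fundamental lemma is established, the conclusion follows exactly as in \Cref{lem:cc:adequacy}: instantiate it at the empty typing context, the identity delimited continuation, and the empty metacontinuation, and then apply \Cref{thm:wp_adequacy} to the assumed reduction $(\popAux(\denotE{\expr}_{\emptyset}), []) \istep^* (\Rret(n), \sigma)$.
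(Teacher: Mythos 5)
Your proposal matches the paper's proof in all essential respects: the paper (\Cref{fig:d:delim_logrel}) defines exactly the biorthogonal, answer-type-indexed relations $\logrelO$, $\logrelC$, $\logrelM$, $\logrelE$ and $\sem{\tau}$ that you describe, proves the fundamental lemma (\Cref{lem:d:fundamental}) for both typing judgments using the specialized rules of \Cref{fig:d:wpre}, and derives adequacy precisely by noting that $\logrelC(P,P)(\idfun,\emptyK)$ and $\logrelM(P)([],[])$ hold, instantiating the fundamental lemma at $\emptyset;\tnat \vDash \expr \col \tnat;\tnat$, and applying \Cref{thm:wp_adequacy} to the assumed reduction. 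The one place you deviate is the metacontinuation relation: you propose to define it \emph{pointwise} from $\logrelC$, whereas the paper defines $\logrelM(P)(\sigma, mk)$ directly as ``plugging $P$-related values into $(\sigma, mk)$ under the identity continuation and empty context yields $\logrelO$-related configurations.'' This difference is not purely cosmetic: since $\logrelC$ quantifies over related metacontinuations, a pointwise definition of stack relatedness in terms of $\logrelC$ makes the two relations mutually recursive with no decreasing measure (answer types do not shrink along the stack, and invoking a captured continuation can grow it), so you would need a guarded fixed point or some other stratification to make your definitions well-formed. The paper's choice of defining $\logrelM$ from the non-recursive $\logrelO$ breaks this cycle for free, and it turns the key property needed in the pop/unwinding case into the definition itself, rather than a lemma proved by induction on the stack with answer-type chaining.
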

\begin{figure*}[t]
  \begin{minipage}{0.45\textwidth}
    \begin{align*}
      \mathrm{SynConf} &\eqdef \Expr \times \Ectx \times \Mcont \\
      \mathrm{SemConf} &\eqdef \IT \times \Hom \times \List(\Hom) \\
      \confrel &\eqdef {\mathrm{SemConf} \times \mathrm{SynConf}} \to \iProp \\
      \mrel &\eqdef {\textdom{list}~\Hom \times \Mcont} \to \iProp \\
      \ectxrel &\eqdef {\Hom \times \Ectx} \to \iProp \\
            \erel &\eqdef {\IT \times \Expr} \to \iProp
    \end{align*}
  \end{minipage}
  \begin{minipage}{0.45\textwidth}
    \begin{align*}
      \vrel &\eqdef {\ITv \times \Val} \to \iProp \\
      \logrelO &\col \confrel \\
      \logrelM &\col \vrel \to \mrel \\
      \logrelC &\col \vrel \to \vrel \to \ectxrel \\
      \logrelE &\col \vrel \to \vrel \to \vrel \to \erel  \\
      \sem{\tau} &\col \vrel
    \end{align*}
  \end{minipage}
  \vspace{-0.1cm}
  \begin{align*}
    \logrelO\big((\alpha,\kappa,\sigma),(e,K,mk)\big) &\eqdef
               \begin{array}[c]{l}
                 \hasstate{(\sigma)} \wand{}\\\quad \textlog{wp}\spac{\popAux (\eapp{\kappa}{\alpha})}\spac\big\{\Ret \beta. \Exists \val. (\contrStar{\config{e}{K}{mk}{eval}}{\configs{\val}{ret}}) \\\quad \ast (\beta, \val) \in \sem{\tnat} \ast \hasstate{([])}\big\}
               \end{array} \\
    \logrelM({P})(\sigma, mk) &\eqdef {\All (\alpha, \val). P(\alpha, \val) \wand \logrelO((\alpha, \imath, \sigma), (\val, \emptyK, mk))} \\
    \logrelC({Q}, {P})(\kappa, K) &\eqdef \begin{array}[c]{l}
                                            \Box~\forall (\alpha, \val). Q(\alpha, \val) \wand \forall (\sigma, mk). \logrelM({P})(\sigma, mk) \wand \\\quad \logrelO((\alpha, \kappa, \sigma), (\val, K, mk))
                                            \end{array} \\
    \logrelE({P, Q, R})(\beta, \expr) &\eqdef \begin{array}[c]{l}
                                                \All (\kappa, K). \logrelC({P}, {Q})(\kappa, K) \wand \All (\sigma, mk). \logrelM({R})(\sigma, mk) \wand \\\quad \logrelO((\beta, \kappa, \sigma), (\expr, K, mk))
                                              \end{array}\\
  \end{align*}
  \vspace{-1.2cm}
  \begin{align*}
    \sem{\tnat}(\beta, \val) & \eqdef \Exists n \in \mathbb{N}. \beta = \Rret(n) \land \val = n \\
    \sem{\tarr{\tau/\alpha}{\sigma/\beta}}(\theta, \val) & \eqdef \begin{array}[c]{l}
                                                                    \Exists F. \theta = \Fun(F) \land \Box~ \All (\eta, w). \sem{\tau}(\eta, w) \wand \\ \quad \logrelE(\sem{\sigma},\sem{\alpha},\sem{\beta})(\APPs{\theta}{\eta},\eapp{\val}{w})
                                                                  \end{array}\\
\sem{\tcont{\tau, \alpha}}(\beta, \val) & \eqdef \begin{array}[c]{l}
                                                       \Exists \kappa\; K. \beta = \Fun (\Next (\Lam x. \Tick (\eapp{(\popAux \circ \kappa)}{x}))) \land \val = \cont{K} \land \\\quad \logrelC({\sem{\tau}}, {\sem{\alpha}})(\kappa, K) \end{array} \\
    \sem{\Gamma}(\rho, \gamma) & \eqdef {\All (\var \col \tau \in \Gamma). \Box~ \All \Phi. \logrelE({\sem{\tau}, \Phi, \Phi})(\rho(\var), \gamma(\var))} \\
    \Gamma \vDash_{\mathsf{pure}} \expr \col \tau & \eqdef \Box~ \All (\rho, \gamma). \sem{\Gamma}(\rho, \gamma) \wand  \All \Phi. \logrelE({\sem{\tau}, \Phi, \Phi})(\denotE{e}_\rho, \plug{e}{\gamma}) \\
    \Gamma; \alpha \vDash \expr \col \tau; \beta & \eqdef \Box~ \All (\rho, \gamma). \sem{\Gamma}(\rho, \gamma) \wand  \logrelE({\sem{\tau}, \sem{\alpha}, \sem{\beta}})(\denotE{e}_\rho, \plug{e}{\gamma})
  \end{align*}
  \caption{Logical relation for $\delimlang$.}
  \label{fig:d:delim_logrel}
\end{figure*}
We prove adequacy using a logical relation.
It relates expressions to their interpretations and also connects syntactic and semantic configurations.
The logical relation is shown in \Cref{fig:d:delim_logrel}.
It is again a form of biorthogonal logical relation, with the main focus being the observational refinement $\logrelO$: two configurations are related if they reduce to the same natural number.
This coincides with what we want to show in \Cref{{lem:d:adequacy}}.
To facilitate this we then lift $\logrelO$ to the levels of metacontinuations, continuations and expressions.
The relation $\logrelM(P)$, where $P : \vrel$ is a relation on values, states that two metacontinuations are related if, whenever we plug in $P$-related values, the resulting configurations become $\logrelO$-related.
Both $\logrelM$ and $\logrelO$ are then used to define the relation between semantic and syntactic continuations.
The relation $\logrelC(Q, P)$, where $P, Q : \vrel$ are relations on values, states that two continuations are related if, whenever we plug them into $Q$-related metacontinuations with $P$-related values, the resulting configurations become $\logrelO$-related.
Finally, we use $\logrelC, \logrelM$, and $\logrelO$ to define the relation between \gitrees and $\delimlang$ terms.
The relation $\logrelE(P, Q, R)$, where $P, Q, R : \vrel$ are relations on values, states that $\beta$ is related to $e$ if, whenever we plug them into $(P,Q)$-related continuations and an $R$-related metacontinuation, the resulting configurations become $\logrelO$-related.

The relations $\logrelE$ and $\logrelC$ are used to give semantics $\sem{\tau}$ to types.
The idea is that $\logrelE{(\sem{\tau}, \sem{\alpha}, \sem{\beta})}$ relates terms $\emptyset ; \alpha \vdash e : \tau; \beta$ to their semantic counterparts.
This is then used, as expected for logical relations, for defining the logical relation for function types and for open terms.
The relation $\sem{\Gamma}(\rho, \gamma)$ relates the semantic environment $\rho : \Var \to \IT$ to the syntactic substitution $\gamma : \Var \to \Expr$; they are related if they map the same variables to related \gitrees/expressions.
Then we say that an expression $e$ is semantically valid, $\Gamma ; \alpha \vdash e : \tau; \beta$, if its interpretation $\denotE{\expr}_{\rho}$ is related to $e[\gamma]$ under related substitutions $\rho, \gamma$.
Note that if we ignore the answer types we can see that the logical relation exhibits a lot of similarities to the logical relation we gave in \Cref{{global:sec:callcc-sem}}, and follows the same roadmap.

For this logical relation we obtain the fundamental property, which we will use for the proof of adequacy.
\begin{lemma}{Fundamental lemma.}
  \label{lem:d:fundamental}
  Let $\typeD{\Gamma}{\expr}{\tau}{\alpha}{\beta}$
  then $\Gamma; \alpha \vDash \expr \col \tau; \beta$;
  and if $\typeDpure{\Gamma}{\expr}{\tau}$
  then $\Gamma \vDash_{\mathsf{pure}} \expr \col \tau$.
\end{lemma}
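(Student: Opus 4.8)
The plan is to proceed by simultaneous induction on the two typing derivations $\typeD{\Gamma}{\expr}{\tau}{\alpha}{\beta}$ and $\typeDpure{\Gamma}{\expr}{\tau}$, establishing for each typing rule a corresponding \emph{compatibility lemma} phrased purely in terms of the semantic judgments $\Gamma; \alpha \vDash \expr \col \tau; \beta$ and $\Gamma \vDash_{\mathsf{pure}} \expr \col \tau$. Unfolding those judgments, every case reduces to the same shape: assuming $\logrelC(\sem{\tau},\sem{\alpha})(\kappa,K)$ and $\logrelM(\sem{\beta})(\sigma,mk)$ together with $\hasstate(\sigma)$, prove a weakest precondition $\wpre{\popAux(\eapp{\kappa}{\denotE{\expr}_\rho})}{\cdots}$ whose postcondition asserts that the syntactic configuration $\config{\expr[\gamma]}{K}{mk}{eval}$ reduces to a numeral matching the resulting value. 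Throughout, the persistent ($\Box$) packaging of $\logrelC$ and of the semantic environment $\sem{\Gamma}(\rho,\gamma)$ lets us reuse hypotheses freely, and the guardedness built into $\sem{\tarr{\sigma/\alpha}{\tau/\beta}}$ (the $\Tick$ introduced by $\APPs$ when a $\Fun$ is applied) lets us close the case of recursive functions $\Rec f \var = \expr$ by Löb induction.

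The \emph{structural} cases --- variables, numerals, $\natop{-}{-}$, conditionals, ordinary application $\eapp{\expr_1}{\expr_2}$, and continuation application $\ekapp{\expr_1}{\expr_2}$ --- all follow the same biorthogonal recipe. To relate a compound expression, one evaluates its subexpressions in the order dictated by the $\getval$/$\getfun$ clauses of the denotation, and at each step extends the ambient continuation $\kappa$ by the remaining work, verifying that the extended continuation is still $\logrelC$-related to the corresponding syntactic evaluation context. This relies on the $\delimlang$ analogue of \Cref{lem:cc:ectx_hom} (that $\denotK{-}_\rho$ yields homomorphisms and commutes with plugging), which I would prove first, exactly as in the $\iocclang$ development. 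The answer-type indices $\alpha,\beta,\gamma,\delta$ are threaded along precisely as the typing premises prescribe; the only genuinely new bookkeeping relative to \Cref{global:sec:callcc-sem} is that each continuation relation now carries a second answer-type parameter, so $\logrelC$ takes two value relations rather than one. The pure-to-impure subsumption rule is immediate: $\Gamma \vDash_{\mathsf{pure}} \expr \col \tau$ quantifies over every postcondition relation $\Phi$, so to obtain $\Gamma; \alpha \vDash \expr \col \tau; \alpha$ we simply instantiate $\Phi \eqdef \sem{\alpha}$.

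The main obstacles are the three control-sensitive constructs $\delim{-}$, $\control{\var}{-}$, and $\ekapp{-}{-}$, where the naive bind structure fails and we must appeal to the specialized rules of \Cref{fig:d:wpre}. For reset, $\denotE{\delim{\expr}}_\rho = \RESET(\popAux(\denotE{\expr}_\rho))$; applying \ruleref{wp-reset} pushes the current semantic continuation $\latert\kappa$ onto the stack $\sigma$, and we must show this matches the machine step $\config{\delim{\expr}}{K}{mk}{eval} \mapsto \config{\expr}{\emptyK}{K \cc mk}{eval}$ by proving $\logrelM(\sem{\tau})(\latert\kappa \cc \sigma, K \cc mk)$ from the $\logrelC$ and $\logrelM$ hypotheses, after which the inductive hypothesis for $\expr$ is invoked with the identity continuation. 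For shift, $\denotE{\control{\var}{\expr}}_\rho$ reifies via \ruleref{wp-shift}, which hands the captured $\latert\kappa$ to the body while discarding it from the ambient context; we reflect the step $\config{\control{k}{\expr}}{K}{mk}{eval} \mapsto \config{\subst{\expr}{k}{K}}{\emptyK}{mk}{eval}$ by repackaging $\kappa$ as a value in $\sem{\tcont{\tau, \alpha}}$ and feeding it to the inductive hypothesis for the body under $\rho[\var \mapsto \cdots]$. Appcont is the dual case handled by \ruleref{wp-appcont}, where invoking a captured continuation reinstalls a delimiter, matching the machine rule that wraps the invocation in a fresh reset. I expect the hardest part to be aligning the $\popAux$ insertions and the metacontinuation stack with the abstract-machine transitions in exactly these three cases: the $\mathtt{pop}$ effect has no syntactic counterpart, so the invariant $\logrelM$ relating the semantic stack $\sigma$ to the syntactic $mk$ must be re-established at each delimiter boundary, and the $\latert$-mismatches between $\kappa$ and $\latert\kappa$ must be discharged using the equational theory of $\APPs$, $\getval$, and $\Tick$.
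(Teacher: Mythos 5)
Your proposal is correct and takes essentially the same route as the paper: the paper states \Cref{lem:d:fundamental} without a textual proof (deferring to the Rocq formalization), and the intended argument is precisely what you describe --- mutual induction on the two typing judgments with per-rule compatibility lemmas over the biorthogonal relation, the specialized rules of \Cref{fig:d:wpre} for the \texttt{shift}/\texttt{reset}/continuation-application cases, L\"ob induction for recursive functions, and instantiation of the universally quantified $\Phi$ with $\sem{\alpha}$ for the purity subsumption rule. The difficulties you single out (re-establishing $\logrelM$ at each delimiter boundary because $\mathtt{pop}$ has no syntactic counterpart, and discharging the $\latert$-mismatches via the equational theory) are exactly where the formal bookkeeping lies, and your treatment of them is consistent with the paper's definitions.
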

\begin{proof}[of \Cref{lem:d:adequacy}]
  Note that the empty (meta)continuation is related to its denotation:
  $\logrelC(P, P)(\idfun, \emptyK)$ and $\logrelM(P)([], [])$ hold for any relation $P$.

  With this, we instantiate $\emptyset; \mathbb{N} \models e : \mathbb{N}; \mathbb{N}$ (that we get from \Cref{lem:d:fundamental})  with the empty continuation/metacontinuation, and get the observational refinement between $e$ and $\denotE{e}$. \qed
\end{proof}

This completes our treatment of denotational semantics of $\delimlang$.
The next section examines interoperability of delimited continuations and other effects.

 }

\section{Modeling Interoperability Between Languages}
\label{global:sec:ffi}
A key advantage of using (G)Itrees for semantics is that they can provide a common framework for multi-language interaction.
This section presents a case study on the interaction between the languages $\embedstlang$ (with higher-order store effects) and $\delimlang$ (with delimited continuations).
Specifically, we allow embedding closed $\delimlang$ programs into $\embedstlang$, and equip $\embedstlang$ with a type system that guarantees safe interoperability.

The embedding we provide is restrictive, preventing programs with delimited continuations from accessing outer-language continuations.
We leave developing a more permissive type system for future work.
At the end of the section we give an example of how to verify a more involved interaction of effects, albeit without the type system.

In this section we reuse the semantics of $\delimlang$ from the previous section and higher-order store effects from \Cref{global:sec:gitrees}.
For $\delimlang$, we reuse the semantics from the previous section and higher-order store reifiers from~\cite{DBLP:journals/pacmpl/FruminTB24} In this section, we use \textcolor{magenta}{magenta} to explicitly highlight programs written in $\delimlang$, and for the interpretation functions of the denotational model of $\delimlang$.

\paragraph{Language $\embedstlang$.}
\begin{figure}[t]
  \begin{mathpar}
      {\begin{grammar}
      \text{types} & \Tys \ni \tau & \tnat \mid \Tunit \mid \tarr{\tau}{\sigma} \mid \Tref(\tau) \\
      \text{expressions} & \Expr \ni \expr & \var \mid () \mid \eapp{\expr_1}{\expr_2} \mid \natop{\expr_1}{\expr_2} \mid  n \mid \Lam \var. \expr \\
      \GrmContinue & \mid \loc \mid \Alloc \expr \mid \deref \expr \mid \expr_1 \gets \expr_2 \mid \embed{} \begingroup\color{magenta} e \endgroup
    \end{grammar}}
    \and
    \inferrule{\typeDpure{\emptyset}{\begingroup\color{magenta} e \endgroup}{\tnat}}{\typeA{\Gamma}{\embed{} \begingroup\color{magenta} e \endgroup}{\tnat}}
    \end{mathpar}
    \caption{Syntax and the new typing rule of the $\embedstlang$.}
    \label{fig:ffi:syn_top_level}
\end{figure}
$\embedstlang$ is a $\lambda$-calculus with base types $\tnat$ and $\Tunit$, references types $\Tref(\tau)$ and
function types $\tarr{\tau}{\sigma}$, with syntax given in \Cref{fig:ffi:syn_top_level}.
Additionally, it includes a construct $\embed{e}$ for embedding $\delimlang$ programs.
The typing rules are all standard, except for the new typing rule for the embedding.

The idea behind the rule is that we can embed an expression from $\delimlang$ if it is a ``pure'' expression that can evaluate to a natural number.
The use of pure typing judgment for the embedded program ensures that it does not to alter the answer type.
This means that we can treat an embedded expression as a ``complete'' program, that does not require outer continuation delimiters, even though it may rely on delimited continuations internally.
Those restrictions are crucial for the type safety of the embedding.
The typing guarantees that $\expr$ does not expect any additional delimiters, but it does not, by itself guarantee that any continuations in $\expr$ escape the embedding boundary.
To prevent that we enforce the continuation delimiter along the embedding boundary in the interpretation of embedded expressions.

\paragraph{Denotational model of $\embedstlang$.}
For denotational semantics of $\embedstlang$, we start by defining reifiers for the effect signature, which includes higher-order store operations (allocating, reading, and storing references) as $E_{\texttt{state}}$, and effects related to delimited continuations ($E_{\delimlang}$).
Then the combined effect signature is $E_{\delimlang} \times E_{\texttt{state}}$, and thus we also let
$\State \eqdef \State_{\texttt{delim}} \times \State_{\texttt{state}}$, and reifiers are defined component-wise.
\begin{figure}[t]
\begin{minipage}{0.45\textwidth}
    \mbox{\fbox{$\denotE{-} \col \Expr \to (\Var \rightarrow \IT) \rightarrow \IT$}}
    \begin{align*}
      &\denotE{\embed{} \begingroup \color{magenta} e \endgroup}_{\rho} = \RESET (\Next (\denotEext{\expr}_\emptyset)) \\
      &\denotE{\var}_\rho = \rho(\var) \\
      &\denotE{\loc}_\rho = \Rret(\loc)
    \end{align*}
  \end{minipage}
  \hfill
  \hspace{-1cm}
  \begin{minipage}{0.45\textwidth}
    \begin{align*}
      &\denotE{\Alloc \expr}_\rho = \getval(\denotE{\expr}_\rho, \lambda x. \ALLOC (x, \Rret))\\
      & \denotE{\deref \expr}_\rho = \getval(\denotE{\expr}_\rho, \lambda x. \READ(x)) \\
  & \denotE{\expr_1 \gets \expr_2}_\rho = \begin{array}[t]{l}
                                                \getval(\denotE{\expr_2}_\rho, \lambda x. \\ \;
                                                \getret(\denotE{\expr_1}_\rho, \lambda y. \WRITE(y, x)))
                                              \end{array}
    \end{align*}
  \end{minipage}
  \begin{minipage}{0.43\textwidth}
\begin{align*}
        \vrel &\eqdef {\ITv} \to \iProp \\
        \erel &\eqdef {\IT} \to \iProp \\
      \logrelO &\col \vrel \to \erel \\
      \logrelO(P)(\beta) &\eqdef \clwpre{\beta}{x \synds P~x \ast
\hasstate([])} \\
      \logrelE &\col \vrel \to \erel \\
        \logrelE({P})({\beta}) &\eqdef \begin{array}[t]{l} \heapctx \wand \hasstate([]) \wand \\
                                         \quad \logrelO(P)(\beta)
                                       \end{array}
    \end{align*}
  \end{minipage}
  \hfill
  \begin{minipage}{0.43\textwidth}
\begin{align*}
      \sem{\tau} &\col \vrel\\
      \sem{\Tunit}(\beta) & \eqdef \beta = () \\
      \sem{\tnat}(\beta) & \eqdef \Exists n. \beta = \Rret(n) \\
      \sem{\tarr{\tau}{\sigma}}(\beta) & \eqdef
                                         \begin{array}[t]{l}
                                           \Exists F. \beta = \Fun(F)~\land \\
                                           \quad \Box~\All \beta. \sem{\tau}(\beta) \wand \\                                                                                                                                 \quad \logrelE(\sem{\sigma})(\APPs{\Fun(F)}{\beta})
                                         \end{array} \\
      \sem{\Tref(\tau)}(\beta) & \eqdef \begin{array}[t]{l}
                                                                           \Exists \ell. \beta = \Rret(\ell)~\land \\\quad \knowInv{}{\Exists \nu. \ell \mapsto \nu \ast \sem{\tau}(\nu)}
                                                                         \end{array}
    \end{align*}
  \end{minipage}
\begin{align*}
\sem{\Gamma}(\rho) & \eqdef {\All (\var \col \tau \in \Gamma). \Box~\logrelE({\sem{\tau}})(\eapp{\gamma}{\var})} &
    \Gamma \vDash \expr \col \tau & \eqdef \All \rho. \sem{\Gamma}(\rho) \wand \logrelE({\sem{\tau}})(\denotE{e}_\rho)
  \end{align*}
\caption{Denotational semantics (selected clauses) and logical relation for $\embedstlang$.}
\label{fig:ffi:semantic}
\end{figure}
~\Cref{fig:ffi:semantic} shows the key parts of the denotational semantics.
For most of the syntactic constructs we give the standard interpretation.
For $\embed{\expr}$ we use the interpretation $\denotEext{-}$ for $\delimlang$ from \Cref{global:sec:delim-sem}, and explicitly wrap the resulting \gitree in a $\RESET$.
This continuation delimiter acts as a sort of \emph{glue code} to protect the rest of the program from being captured by control operators from the embedded $\delimlang$ program.

To show type safety of $\embedstlang$, we construct a logical relation (shown in \Cref{fig:ffi:semantic}), which is similar to the other logical relations we considered in this paper, mainly different in the observation relation $\logrelO$.
Given that the type system for $\embedstlang$ effectively prevents
expressions of $\delimlang$ to access contexts from $\embedstlang$, we
refine the observation relation to get access to a version of the~\ruleref{wp-hom} rule
for expression interpretations of well-typed programs of
$\embedstlang$, which we do not have in general,  as discussed in~\Cref{global:sec:callcc-lang}.

Instead of the standard weakest precondition $\textlog{wp}$, we utilize a \emph{context-local weakest precondition} $\textlog{clwpre}$ which bakes-in the bind rules~\cite{DBLP:journals/pacmpl/TimanyB19}.
\begin{definition}
  Context-local weakest precondition ($\textlog{clwp}$) is defined as follows: $
    \clwpre{\alpha}{\Phi} \triangleq \All \kappa\; (\Psi \col \ITv \to \iProp). (\All v. \Phi~v \wand \wpre{(\eapp{\kappa}{v})}{\Psi}) \to \wpre{(\eapp{\kappa}{\alpha})}{\Psi}
  $
  \label{clwp-def}
\end{definition}
Note that $\textlog{clwp}$ always implies $\textlog{wp}$ and validates a form of the bind rule:
$$\clwpre{\alpha}{\beta \synds \clwpre{(\eapp{\kappa}{\beta})}{\Phi}} \vdash \clwpre{(\eapp{\kappa}{\alpha})}{\Phi}$$
for any homomorphism $\kappa$.
We use the $\textlog{clwp}$ in the definition of observational refinement in the model.
Observational refinement asserts that if the evaluation of a top-level expression of $\embedstlang$ starts with an empty continuation stack, then the evaluation does not introduce new elements into the continuation stack.
By using $\textlog{clwp}$ we get a semantical bind lemma, which can be seen as a version of \ruleref{wp-hom} for semantically valid expressions of $\embedstlang$. As before, we then obtain fundamental lemma and denotational type soundness.
\begin{lemma}{Semantical bind.}
  \label{lem:app:sem-bind}
  $\logrelE(\Lam x \col \ITv. \logrelE(P)(\eapp{\kappa}{x}))(\beta)$ implies
  $\logrelE(P)(\eapp{\kappa}{\beta})$ for any homomorphism $\kappa$.
\end{lemma}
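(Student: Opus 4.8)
The plan is to reduce the statement to the $\textlog{clwp}$ bind rule stated just above, exploiting that $\heapctx$ is persistent and can therefore be handed out repeatedly. First I would fix the homomorphism $\kappa$ and unfold $\logrelE$ and $\logrelO$ on both sides, so that the goal becomes
\[
  \heapctx \wand \hasstate([]) \wand \clwpre{\beta}{x \synds \logrelE(P)(\eapp{\kappa}{x}) \ast \hasstate([])}
  \;\vdash\;
  \heapctx \wand \hasstate([]) \wand \clwpre{\eapp{\kappa}{\beta}}{y \synds P~y \ast \hasstate([])}.
\]
I introduce $\heapctx$ (which, being persistent, stays available for the rest of the argument) and $\hasstate([])$, leaving me to prove, under the ambient persistent assumption $\heapctx$,
\[
  \clwpre{\beta}{x \synds \logrelE(P)(\eapp{\kappa}{x}) \ast \hasstate([])}
  \;\vdash\;
  \clwpre{\eapp{\kappa}{\beta}}{y \synds P~y \ast \hasstate([])}.
\]

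Next I would apply the $\textlog{clwp}$ bind rule with postcondition $\Phi \eqdef \Lam y.\, P~y \ast \hasstate([])$; since $\kappa$ is a homomorphism the rule applies and reduces the right-hand side to proving its premise $\clwpre{\beta}{x \synds \clwpre{\eapp{\kappa}{x}}{y \synds P~y \ast \hasstate([])}}$. By monotonicity of $\textlog{clwp}$ in its postcondition (immediate from its definition, provided the postcondition implication is established using only persistent resources — which is exactly why persistence of $\heapctx$ matters), it then suffices to show, for every value $x$,
\[
  \logrelE(P)(\eapp{\kappa}{x}) \ast \hasstate([])
  \;\vdash\;
  \clwpre{\eapp{\kappa}{x}}{y \synds P~y \ast \hasstate([])},
\]
still with $\heapctx$ available persistently.

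Finally I would discharge this last entailment by unfolding $\logrelE(P)(\eapp{\kappa}{x}) = \heapctx \wand \hasstate([]) \wand \clwpre{\eapp{\kappa}{x}}{y \synds P~y \ast \hasstate([])}$ and feeding it the two resources it demands: the persistently-available $\heapctx$, and the $\hasstate([])$ handed over by the hypothesis's postcondition. This produces exactly the required $\textlog{clwp}$, closing the proof. The hard part — and the only place where an error could realistically slip in — is the threading of the state ownership: the nested $\logrelE$ again requires $\hasstate([])$, and it is precisely the $\hasstate([])$ delivered by the outer postcondition that supplies it, so the empty-continuation-stack invariant is preserved exactly across the two nested observational refinements rather than being duplicated or dropped.
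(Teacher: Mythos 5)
Your proposal is correct and follows essentially the same route as the paper: the whole point of defining $\logrelO$ via $\textlog{clwp}$ is that the semantical bind lemma reduces directly to the $\textlog{clwp}$ bind rule for homomorphisms, exactly as you do. Your handling of the two delicate points — persistence of $\heapctx$ justifying its reuse under the postcondition implication, and the $\hasstate([])$ from the outer postcondition being the one that feeds the inner $\logrelE$ — matches the intended (and formalized) argument.
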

\begin{lemma}{Fundamental lemma.}
  \label{lm:ffi:fund}
  Let $\typeA{\Gamma}{e}{\tau}$ then $\Gamma \vDash e \col \tau$.
\end{lemma}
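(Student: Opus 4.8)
The plan is to prove the fundamental lemma by induction on the typing derivation $\typeA{\Gamma}{e}{\tau}$, establishing for each typing rule a compatibility lemma: assuming the premises hold semantically, the conclusion does too. Since $\logrelE$ bakes in the assumptions $\heapctx$ and $\hasstate([])$ before handing over to the context-local weakest precondition underlying $\logrelO$, each compatibility lemma ultimately reduces to a context-local weakest-precondition obligation. The workhorse for decomposing these obligations along evaluation order is the semantical bind lemma (\Cref{lem:app:sem-bind}): to interpret a composite expression such as $\eapp{e_1}{e_2}$ or $\natop{e_1}{e_2}$, I first use the bind lemma to reduce to the subcomputations and then combine their semantic guarantees, exactly as one uses \ruleref{wp-hom} in the effect-free setting. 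The persistence modality $\Box$ in the value interpretations ensures the induction hypotheses for subterms can be reused under binders and across the duplications introduced by bind.

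The standard cases are routine. Variables are discharged directly from $\sem{\Gamma}(\rho)$; numerals and $()$ match $\sem{\tnat}$ and $\sem{\Tunit}$ after a trivial context-local weakest-precondition step; a $\lambda$-abstraction is shown to inhabit $\sem{\tarr{\tau}{\sigma}}$ by exhibiting the underlying $\Fun$ and proving the $\Box$-guarded body obligation from the induction hypothesis with the argument's semantic assumption in scope. For the store constructs I use the heap invariant $\knowInv{}{\Exists \nu. \ell \mapsto \nu \ast \sem{\tau}(\nu)}$ packaged in $\sem{\Tref(\tau)}$: $\Alloc \expr$ allocates a fresh points-to and then allocates the invariant to produce a $\sem{\Tref(\tau)}$ value, while $\deref \expr$ and $\expr_1 \gets \expr_2$ open the invariant, apply the context-local analogues of \ruleref{wp-write-ctx-dep}, and re-establish it. In each case the ambient $\hasstate([])$ of the delimited-continuation state is merely threaded through, since the store effects never touch the continuation stack.

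The crux is the embedding rule, whose conclusion requires $\logrelE(\sem{\tnat})(\RESET(\Next(\denotEext{e}_\emptyset)))$ from the premise $\typeDpure{\emptyset}{e}{\tnat}$. After unfolding $\logrelE$ and $\logrelO$ into a context-local weakest-precondition goal, I fix an arbitrary ambient homomorphism $\kappa \in \Hom$ and postcondition $\Psi$ together with the continuation hypothesis that feeding a $\sem{\tnat}$-value (with $\hasstate([])$) into $\kappa$ satisfies $\Psi$. I then apply \ruleref{wp-reset}, which installs the delimiter by pushing $\kappa$ onto the initially empty continuation stack, leaving the goal $\wpre{\popAux(\denotEext{e}_\emptyset)}{\Psi}$ under $\hasstate([\latert \kappa])$. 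This is precisely the point at which the embedded program has been turned into a self-contained, fully delimited computation. I discharge it by invoking the fundamental lemma for $\delimlang$ (\Cref{lem:d:fundamental}) on the closed, pure program $e$: since $\typeDpure{\emptyset}{e}{\tnat}$ yields $\typeD{\emptyset}{e}{\tnat}{\alpha}{\alpha}$ for every answer type $\alpha$, I specialize at the all-$\tnat$ answer types and instantiate the biorthogonal relation with the identity (empty) delimited continuation and the singleton metacontinuation $[\kappa]$, taking the answer-type relation to be $\sem{\tnat}$; relating this metacontinuation to the ambient continuation lets the final $\popAux$ pop $\kappa$, feed it the $\sem{\tnat}$-result, and close the goal via the continuation hypothesis.

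The main obstacle is this last bridging step: reconciling the $\delimlang$ logical relation — which is biorthogonal and quantifies over related continuations and metacontinuations with its own fixed observation $\logrelO$ — with the ambient $\embedstlang$ obligation expressed through a context-local weakest precondition and an arbitrary $\Psi$. The delimiter is exactly what makes the two compatible: the $\RESET$ glue guarantees that no continuation captured inside $e$ can escape past the embedding boundary, so the embedded computation interacts with the outer program only by returning a single $\sem{\tnat}$ value to $\kappa$, matching the restriction enforced by the pure typing judgment on the embedded code. Verifying that the chosen metacontinuation relatedness genuinely holds — that popping $\kappa$ on completion lands us back in the ambient continuation with $\hasstate([])$ restored — is where the proof is most delicate, and where the interplay between $\popAux$, \ruleref{wp-reset}, and the $\delimlang$ relation must be pinned down precisely.
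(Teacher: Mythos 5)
Your scaffolding---induction on the typing derivation, compatibility lemmas discharged through the semantical bind lemma (\Cref{lem:app:sem-bind}), and the routine treatment of variables, constants, lambdas, and the store constructs via the invariant packaged in $\sem{\Tref(\tau)}$---matches the intended proof. The problem is the embedding case, the only case genuinely specific to this lemma, which does not go through as you describe. After applying \ruleref{wp-reset} you are left with $\wpre{\popAux(\denotEext{e}_\emptyset)}{\Psi}$ under $\hasstate([\latert\kappa])$, and you propose to discharge it by instantiating the $\delimlang$ relation of \Cref{lem:d:fundamental} with ``the singleton metacontinuation $[\kappa]$''. That instantiation is not available. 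First, the metacontinuation relation has sort $\mrel = ({\textdom{list}~\Hom \times \Mcont}) \to \iProp$: it relates a semantic stack to a \emph{syntactic} $\delimlang$ metacontinuation, and there is no syntactic counterpart to an arbitrary ambient $\embedstlang$ homomorphism $\kappa$ (which may contain heap operations, further embeddings, etc.). Second, and fatally, even for some choice of $mk$, proving $\logrelM(\sem{\tnat})([\latert\kappa], mk)$ requires showing, after the $\POP$ fires and hands the value $\alpha$ to $\kappa$, a goal of the shape $\wpre{\eapp{\kappa}{\alpha}}{\beta.\, \Exists \val. (\text{$\delimlang$ machine reduction}) \ast \sem{\tnat}(\beta,\val) \ast \hasstate([])}$: the observation $\logrelO$ of the $\delimlang$ relation has a \emph{fixed} postcondition (numeral result, reduction of the $\delimlang$ abstract machine, empty final stack). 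All you know about $\kappa$ is the context-local hypothesis $\All v. \sem{\tnat}(v) \ast \hasstate([]) \wand \wpre{\eapp{\kappa}{v}}{\Psi}$ for an \emph{arbitrary} $\Psi$ fixed when you unfolded $\textlog{clwp}$; this yields a $\Psi$-postcondition, and the entailment you need runs in the opposite direction. So the step you yourself flag as ``most delicate'' is not delicate but unprovable with the relations as defined.

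What the definitions do support is keeping the ambient $\kappa$ entirely on the $\embedstlang$/$\textlog{clwp}$ side and invoking \Cref{lem:d:fundamental} only as the $\delimlang$ adequacy proof does: with the identity continuation related to $\emptyK$ and the empty metacontinuation related to $[]$, treating the embedded program as a closed computation that starts \emph{and ends} with an empty delim-stack. The missing ingredient is then a well-bracketedness (stack-parametricity) fact: the denotation of a pure-typed program, run on the stack $[\latert\kappa]$ installed by \ruleref{wp-reset}, behaves as it does on the empty stack and returns that stack intact, so that the final $\POP$ passes a $\sem{\tnat}$-value to $\kappa$ with $\hasstate([])$ restored, at which point the clwp hypothesis closes the goal. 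This fact does not follow from the statement of \Cref{lem:d:fundamental}, whose observation bakes in both the initial stack and the syntactic machine; obtaining it requires either reformulating the $\delimlang$ relation with the observation parameterized over the ambient stack and postcondition, or a separate induction over $\delimlang$ typing derivations proving an FFI-specific compatibility property for embedded terms. Your proposal supplies neither, so the crux case of the lemma remains open.
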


\begin{lemma}{Denotational type soundness.}
  Let $\typeA{\emptyset}{e}{\tau}$ and $(\denotE{e}_{\emptyset}, []) \istep^* (\alpha, \sigma)$, then $(\Exists \beta\; \sigma'. (\alpha, \sigma) \istep (\beta, \sigma')) \lor (\alpha \in \ITv)$.
  \label{lm:ffi:sound}
\end{lemma}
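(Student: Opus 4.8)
The plan is to reduce the statement to the general GITrees adequacy theorem (\Cref{thm:wp_adequacy}) by extracting a weakest precondition for $\denotE{e}_\emptyset$ out of the logical relation. The disjunction we must prove---that $\alpha$ either takes a step or is already a value---is exactly the \emph{safety} conclusion of that theorem, so once we have a weakest precondition for $\denotE{e}_\emptyset$ with respect to the combined reifier for $E_{\delimlang} \times E_{\texttt{state}}$, the result follows immediately, and we need not track the actual postcondition at all.

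First I would apply the fundamental lemma (\Cref{lm:ffi:fund}) to $\typeA{\emptyset}{e}{\tau}$, obtaining $\emptyset \vDash e \col \tau$. Instantiating its closing-substitution quantifier with the empty environment $\rho = \emptyset$---for which $\sem{\emptyset}(\emptyset)$ holds vacuously---yields $\logrelE(\sem{\tau})(\denotE{e}_\emptyset)$. Unfolding $\logrelE$ gives $\heapctx \wand \hasstate([]) \wand \logrelO(\sem{\tau})(\denotE{e}_\emptyset)$, so I must discharge the two hypotheses $\heapctx$ and $\hasstate([])$. These are precisely the initial resources supplied by the adequacy set-up: $\hasstate([])$ records the empty initial state (the empty continuation stack of the delimited-continuation component together with the empty heap of the store component), while $\heapctx$ is obtained by allocating the ghost state for the empty heap. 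Discharging them leaves $\logrelO(\sem{\tau})(\denotE{e}_\emptyset)$, which by definition is $\clwpre{\denotE{e}_\emptyset}{x \synds \sem{\tau}(x) \ast \hasstate([])}$.

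Next, using the fact that $\textlog{clwp}$ always implies $\textlog{wp}$ (cf.\ \Cref{clwp-def}) together with monotonicity of $\textlog{wp}$ in its postcondition, I would weaken this to $\wpre{\denotE{e}_\emptyset}{\Lam \_. \TRUE}$; the trivial postcondition $\Phi \eqdef \Lam \_. \TRUE$ contains only intuitionistic connectives, as required by \Cref{thm:wp_adequacy}. Applying that theorem with the initial state $[]$ and the given reduction $(\denotE{e}_\emptyset, []) \istep^* (\alpha, \sigma)$ then delivers exactly the desired dichotomy: either $\alpha \in \ITv$ (the adequacy branch, where $\Phi(\alpha)$ holds trivially), or there exist $\beta, \sigma'$ with $(\alpha, \sigma) \istep (\beta, \sigma')$ (the safety branch).

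The main obstacle is initial resource management rather than any deep reasoning: I must ensure that the combined state $[]$ decomposes correctly into the empty metacontinuation stack and the empty heap, that $\heapctx$ can indeed be established for the empty heap at the outset, and that \Cref{thm:wp_adequacy} is instantiated for the \emph{combined} reifier rather than for either component in isolation. The one subtlety worth flagging is that the postcondition of the $\textlog{clwp}$ mentions $\hasstate([])$ and $\sem{\tau}$, neither of which is intuitionistic; this is harmless here, since safety is independent of the postcondition, so weakening to $\Lam \_. \TRUE$ before invoking adequacy sidesteps the restriction entirely.
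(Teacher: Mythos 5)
Your proposal matches the paper's approach exactly: the paper obtains this lemma (as in its earlier case studies) by combining the fundamental lemma (\Cref{lm:ffi:fund}) with the adequacy theorem (\Cref{thm:wp_adequacy}), using the fact that $\textlog{clwp}$ implies $\textlog{wp}$ and that safety is insensitive to the postcondition. Your handling of the initial resources ($\heapctx$ allocation, the $\hasstate([])$ for the delimited-continuation component of the combined state) and the weakening to a trivial intuitionistic postcondition are precisely the bookkeeping steps the formalized proof carries out.
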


\paragraph{Unrestricted interaction of delimited continuations and higher-order state.}
Even though the type system we considered here is restrictive, we can still reason about unrestricted interactions of events in the ``untyped'' setting.
Here we show an example of such an unrestricted interaction, and demonstrate how to
reason about context-dependent and context-independent effects at the same time.
While this kind of interactions is forbidden by our type system, we can still write and prove meaningful specifications for such programs.
\begin{figure}[t]
  \begin{multicols}{2}
    \iblock{
      \mhang{
        \texttt{prog} \eqdef \Fun(\Next (\lambda y.
      }{
        \mrow{
          \LET x = \ALLOC (\Rret(1)) IN
        }
        \mhang{
          \mathsf{Let} \spac n = \SHIFT (\lambda k. \Next(
        }{
          \mrow{
            \APPCONT'(\READ(x), k) \SEQ
          }
          \mrow{
            \LET m = \NATOP_+(\READ(x), \Rret(1)) IN \WRITE(x, m) \SEQ
          }
          \mrow{
            \APPCONT'(\READ(x), k)))
          }
          \mrow{
            \mathsf{in} \spac \getret(y, (\lambda l. \LET p = \NATOP_+ (\READ(l), n) IN \WRITE(l, p)))))
          }}}
    }
    \columnbreak
    \iblock{
      \row
      \mrow{\text{Initial offset value}}
      \mrow{\text{Capture continuation as $k$}}
      \mrow{\text{First call to $k$ with the init. value of $x$}}
      \mrow{x := x + 1;}
      \row
      \mrow{\text{Second call to $k$ with updated $x$}}
      \row
      \mrow{y := y + n;}
    }
  \end{multicols}
  \vspace{-0.5cm}
  \[
    \mbox{where } \APPCONT'(x, y) \eqdef \APPCONT(\Next(x), \Next(\Tau \circ y \circ \Next))
  \]
  \begin{mathpar}
  \inferrule{\heapctx \\ \hasstate(\sigma) \\ y \mapsto \Rret(n)}
  {\wpre{(\RESET(\Next (\APPs{\texttt{prog}}{\Rret(y)})))}{y \mapsto \Rret(n + 3) \ast \hasstate (\sigma)}}
  \end{mathpar}
  \caption{Example program with delimited continuations and state and its specification.}
  \label{fig:ffi:prog}
\end{figure}
Consider the program in \Cref{fig:ffi:prog}, written in GITrees directly.
The function \texttt{prog} utilizes both delimited continuations and state.
It takes a reference $y$ as its
argument and begins by allocating the value $1$ in the store at
reference $x$.
Then it captures the continuation
$
\getret(y, (\lambda l. \LET p = \NATOP_+ (\READ(l), \dots) IN \WRITE(l, p)))
$
as $k$.
Invoking the continuation $k$ with a number $n$ increments the current value of $y$ by $n$.
The program then invokes this continuation twice.
First with the original value of $x$.
Then, with an incremented value of $x$.
Since the starting value of $x$ is 1, the reference $y$ is
incremented first by $1$ and then by $2$.
We capture this behavior in the specification for \texttt{prog} stated in \Cref{fig:ffi:prog}.

It is important to note that this program features a bidirectional interaction between state and continuations.
Specifically, the body of $\SHIFT$ involves state operations, while the result of $\SHIFT$ is subsequently used to increment a value in the heap.
As we have seen, while this type of interaction is not allowed in the type system, we can still reason about them in program logic.
We stipulate that our proposed type system could potentially be extended to support embedding ``pure'' functions, allowing for bi-directional interaction between the two languages.
We believe that such an extension would require implementing answer-type polymorphism, following the approach of Asai and Kameyama~\cite{asai:2007}.

\section{Preemptive concurrency}
\label{global:concurrency}
In this section we present yet another extension to \gitrees: support for preemptive concurrency.
This extension is orthogonal to the extension for context-dependent effects presented earlier, and demonstrates the flexibility of the \gitrees framework.
While concurrency does not involve context-dependent reification, it does require modifications to the step relation and the weakest precondition logic.

The key technical challenge when extending \gitrees with concurrency is handling shared mutable state.
In a sequential setting, the state effects $\READ$ and $\WRITE$ can be used independently.
However, in a concurrent setting, separate read and write operations are no longer sufficient: race conditions can occur between a read and a subsequent write.
To address this, we introduce an atomic exchange operation $\XCHG$ that atomically swaps the contents of a location with a new value.
Crucially, the implementation of $\XCHG$ exploits a key feature of \gitrees: effects can take \emph{functions} as inputs.
This allows us to define a generic atomic state modification effect, from which $\XCHG$ and other atomic operations can be derived.

We demonstrate this extension by extending the affine language $\afflang$ of~\citet{DBLP:journals/pacmpl/FruminTB24} with concurrency primitives.
The affine type system ensures that values are used exactly once, which interacts in an interesting way with shared state and concurrency.
In particular, variables in the concurrent setting must be accessed using atomic exchange rather than sequential read/write operations, to maintain the affine discipline.

\subsection{Threadpool Semantics}

To support concurrency, we lift the step relation $\istep$ from individual \gitrees to \emph{threadpools}.
A threadpool is simply a list of \gitrees, representing threads running in parallel.
The step relation for threadpools is defined as:
\begin{align*}
  (\vec{\alpha}, \sigma) \istep_{\mathsf{tp}} (\vec{\alpha}', \sigma') \eqdef{}&
  \Exists i\, \alpha_i\, \alpha_i'. \vec{\alpha}[i] = \alpha_i \wedge (\alpha_i, \sigma) \istep (\alpha_i', \sigma', \vec{l}) \wedge \vec{\alpha}' = \vec{\alpha}[i \mapsto \alpha_i'] + \vec{l}
\end{align*}
That is, one thread in the pool takes a step, updating both the thread and the global state.
The other threads remain unchanged.
Accordingly, we also modify signature of reifiers and allow $\istep$ to take forked threads into the account, as we describe in the next section.

The adequacy theorem (\Cref{thm:wp_adequacy}) extends naturally to threadpool reductions.

\subsection{Fork Effect and Affine Language Extension}

To spawn new threads we introduce a fork effect whose signature is shown in~\Cref{fig:fork_effect}.
The effect takes as input a \gitree $e$ to be executed in a new thread, and returns unit.
The reifier for fork, denoted $r_{\mathtt{fork}}$, takes the input \gitree $e$, leaves the state $\sigma$ unchanged, returns unit to the continuation $\kappa$, and produces a list containing the new thread $e$.
This list is used to extend the threadpool with the newly spawned thread.

More precisely, the reifier signature is modified to return an optional triple $(\beta, \sigma', \vec{\alpha})$ where $\vec{\alpha}$ is a list of new threads to be added to the threadpool.
For effects other than fork this list is empty.

This modification to the reifier signature is analogous to the extension for context-dependent effects presented earlier in this work.
Just as context-dependent effects required reifiers to take the current continuation as an additional parameter (to allow effects like \texttt{call/cc} to capture and manipulate continuations), concurrency requires reifiers to return a list of new threads.
Both extensions maintain backwards compatibility: effects that do not need these capabilities simply ignore the continuation parameter (for context-independent effects), or return an empty thread list (for non-forking effects).
This demonstrates the flexibility of the \gitrees framework: the reifier mechanism can be extended to accommodate different classes of effects while preserving the modularity of effect composition.

We demonstrate the fork effect by extending the affine language $\afflang$ with a $\aFork$ construct.
The syntax and type system are shown in~\Cref{fig:afflang_typing}.
The expression $\aFork(\expr_1, \expr_2)$ first evaluates $\expr_1$ (which typically performs some side effect), spawns it in a new thread, and then proceeds with $\expr_2$.
The type system ensures that $\expr_1$ has type $\tUnit$, so that no affine resources are lost when forking.

\begin{figure}[t]
  \begin{align*}
    & r \col \prod_{\idx \in E} \Ins_{\idx}(\IT_E) \times \stateO \times (\Outs_{\idx}(\IT_E) \rightarrow \latert \IT_E) \to \optionO(\latert \IT_E \times \stateO \times \mathsf{list} (\latert \IT_E))
  \end{align*}
  \begin{align*}
    \Ins_{\mathtt{fork}}(X) & \eqdef \latert X & \Outs_{\mathtt{fork}}(X) & \eqdef 1
  \end{align*}
  \begin{align*}
    r_{\mathtt{fork}}(e, \sigma, \kappa) &= \Some(\kappa~(), \sigma, [e]) \\
    \FORK(e) & \eqdef \Vis_{\mathtt{fork}}(\Next(e), \lambda x. \Next(\Rret(x)))
  \end{align*}
  \caption{Modified reifier signature and fork effect.}
  \label{fig:fork_effect}
\end{figure}

\begin{align*}
\typeAff \in \langkwa{\Type} \bnfdef{}&
\tBool \ALT \tNat \ALT \tUnit
\ALT \typeAff_1 \atensor \typeAff_2 \ALT \typeAff_1 \lolli \typeAff_2
\ALT \tRef{\typeAff}
\\
\expr \in \langkwa{\Expr} \bnfdef{}&
  n \ALT b \ALT \unittt \ALT
  \avar \ALT
\LamA \avar.\expr \ALT
\expr_1\ \expr_2 \ALT
\aPair{\expr_1}{\expr_2} \ALT
\LetA \aPair{\avar_1}{\avar_2}=\expr_1 in \expr_2 \\& \ALT \aAlloc(\expr) \ALT \aDealloc(\expr_1) \ALT
\aReplace(\expr_1,\expr_2) \ALT \aFork(\expr_1, \expr_2)
\end{align*}

\begin{figure}[t]\small
  \begin{mathpar}
    \infer
    {b \in \mathbb{B}}
    {\aCtx \provesAff b : \tBool}
    \and
    \infer
    {n \in \mathbb{N}}
    {\aCtx \provesAff n : \tNat}
    \and
    \axiom
    {\aCtx \provesAff \unittt : \tUnit}
    \and
    \axiom
    {\aCtx_1, \avar : \typeAff, \aCtx_2 \provesAff \avar : \typeAff}
    \and
    \infer
    {\avar : \typeAff_1, \aCtx \provesAff \expr : \typeAff_2}
    {\aCtx \provesAff \LamA \avar. \expr : \typeAff_1 \lolli \typeAff_2}
    \and
    \infer
    {\aCtx_1 \provesAff \expr_1 : \typeAff_1 \lolli \typeAff_2
      \and \aCtx_2 \provesAff  \expr_2 : \typeAff_1
    }
    {\aCtx_1, \aCtx_2 \provesAff \expr_1 \ \expr_2 : \typeAff_2}
    \and
    \infer
    {\aCtx_1 \provesAff \expr_1 : \typeAff_1
      \and \aCtx_2 \provesAff  \expr_2 : \typeAff_2}
    {\aCtx_1,\aCtx_2 \provesAff \aPair{\expr_1}{\expr_2} :  \typeAff_1 \atensor \typeAff_2}
    \and
    \infer
    {\aCtx_1 \provesAff \expr_1 : \typeAff_1 \atensor \typeAff_2
      \and \avar_1:\typeAff_1,\avar_2:\typeAff_2,\aCtx_2\provesAff \expr_2 : \typeAff
    }
    {\aCtx_1, \aCtx_2 \provesAff \LetA \aPair{\avar_1}{\avar_2}=\expr_1 in \expr_2 : \typeAff}
    \and
    \infer
    {\aCtx \provesAff \expr : \typeAff}
    {\aCtx \provesAff \aAlloc(\expr) : \tRef{\typeAff}}
    \and
    \infer
    {\aCtx \provesAff \expr : \tRef{\typeAff}}
    {\aCtx \provesAff \aDealloc(\expr) : \tUnit}
    \and
    \infer
    {\aCtx_1 \provesAff \expr_1 : \tRef{\typeAff_1}
      \and \aCtx_2 \provesAff \expr_2 : \typeAff_2}
    {\aCtx_1,\aCtx_2  \provesAff \aReplace(\expr_1,\expr_2) : \typeAff_1 \atensor\, \tRef{\typeAff_2}}
    \and
    \infer
    {\aCtx_1 \provesAff \expr_1 : \tUnit
    \and \aCtx_2 \provesAff \expr_2 : \typeAff}
    {\aCtx_1, \aCtx_2 \provesAff \aFork(\expr_1, \expr_2) : \typeAff}
  \end{mathpar}
  \caption{Type system for $\afflang$ with concurrency.}
  \label{fig:afflang_typing}
\end{figure}

\subsection{Atomic State Modification via Lambda Functions in Effects}

The interpretation of $\afflang$ shown in~\Cref{fig:afflang_interp} shows an important aspect of concurrent programming with affine types.
Consider the interpretation of variables: $\Sem{\avar}_\rho \eqdef \Force(\rho(\avar))$, where:
\begin{align*}
 \Thunk(\alpha) \eqdef{}& \ALLOC\big(\Rret(0),\Lam \ell.
    \Fun(\Next(\Lam \_. \IF(\XCHG(\ell, \Rret(1)), \Err(Lin), \alpha)))\big)\\
  \Force(\alpha) \eqdef{}& \APPs{\alpha}{\Rret(0)}
\end{align*}
The $\Thunk$ operation allocates a reference initialized to $0$ (representing ``not yet used''), and returns a thunk that, when forced, atomically exchanges the location's value with $1$.
If the exchange returns $0$, the variable has not been used before, and we proceed with $\alpha$.
If it returns $1$, the variable has already been used, violating the affine discipline, and we raise a linearity error.

\paragraph{Generic Atomic State Modification.}
The key insight enabling $\XCHG$ is that \gitrees allow effects to take \emph{lambda functions} as inputs.
We define a generic atomic state modification effect $\mathtt{atomic}$ as follows:
\begin{align*}
  \Ins_{\mathtt{atomic}}(X) &\eqdef \Loc \times \latert (X \to X \times X) \\
  \Outs_{\mathtt{atomic}}(X) &\eqdef \latert X \\ 
  r_{\mathtt{atomic}}((\ell, f), \sigma, \kappa) &= \begin{cases}
    \Some(\kappa~(\latertinj(r)), \sigma[\ell \mapsto \latertinj(v')]), & \sigma[\ell] = \Some(v), f(v) = \latertinj(r, v') \\ 
    \None, & \text{otherwise}
  \end{cases}
\end{align*}
This effect takes a location $\ell$ and a function $f : \latert (X \to X \times X)$.
The reifier atomically reads the current value $v = \sigma(\ell)$, applies $f$ to it to obtain a pair $(r, v')$, stores $v'$ back in $\ell$, and returns $r$ to the continuation.
This design allows the function to compute both what to return and what to write, enabling a wide variety of atomic operations.

From this generic effect, we can derive various atomic operations.
Atomic exchange is defined as:
\begin{align*}
  \XCHG(\ell, w) &\eqdef \Vis_{\mathtt{atomic}}((\ell, \Lam x. (\Next(x), \Next(w))), \idfun)
\end{align*}
The function $\Lam x. (\Next(x), \Next(w))$ returns the old value $x$ as the result and the new value $w$ to be written.
Compare-and-swap (CAS) can be defined as:
\begin{align*}
  \mathtt{CAS}(\ell, \mathit{exp}, \mathit{new}) &\eqdef \Vis_{\mathtt{atomic}}((\ell, \Lam x. \IF(x = \mathit{exp}, (\Rret(\mathit{true}), \Next(\mathit{new})), (\Rret(\mathit{false}), \Next(x)))), \idfun)
\end{align*}
which returns true and updates if the expected value matches, otherwise returns false and leaves the value unchanged.
Fetch-and-add (FAA) is defined as:
\begin{align*}
  \mathtt{FAA}(\ell, n) &\eqdef \Vis_{\mathtt{atomic}}((\ell, \Lam x. (\Next(x), \Next(\NATOP_+(\Next(x), \Next(n))))), \idfun)
\end{align*}
which returns the old value and stores the incremented value.

\paragraph{Reasoning about Atomic Operations.}
The program logic rule for the atomic effect captures its semantics of both reading from, and writing to the state at the same time:
\begin{mathpar}
  \inferH{wp-atomic}
  {
    \begin{array}[c]{c}
      \heapctx \arcr
      \later \ell \mapsto \alpha \arcr
      \later (f~\alpha \equiv \Next(w_1, w_2)) \arcr
      \later (\ell \mapsto w_2 \wand {\Phi(w_1)})
    \end{array}
  }
  {\wpre{\Vis_{\mathtt{atomic}}((\ell, f), \idfun)}{\Phi}}
\end{mathpar}
Given ownership of location $\ell$ pointing to value $\alpha$ (written $\ell \mapsto \alpha$), and knowing that applying function $f$ to $\alpha$ yields the pair $(w_1, w_2)$, the rule allows us to conclude that after executing the atomic operation, we regain ownership of $\ell$ now pointing to $w_2$, and the result $w_1$ satisfies the postcondition.
This rule is atomic in the sense that the read-modify-write sequence happens in a single logical step, which is essential for reasoning about concurrent programs where intermediate states must not be observable by other threads.

This demonstrates a unique capability of \gitrees: by allowing effects to take functions as inputs, we can express atomic state modification generically, and derive various atomic primitives as needed.
This is more flexible than hardcoding specific atomic operations into the effect signature.

\subsection{Interpretation of $\afflang$ with Concurrency}

The interpretation of $\afflang$ expressions is shown in~\Cref{fig:afflang_interp}.
Most clauses are standard, but there are two important differences from the sequential version:
First, the interpretation of $\aReplace(\expr_1, \expr_2)$ now uses $\XCHG$ instead of sequential $\READ$ and $\WRITE$ operations, reflecting the new implementation of $\Thunk$.
Second, the interpretation of $\aFork(\expr_1, \expr_2)$ uses the $\FORK$ effect to spawn $\expr_1$ in a new thread before proceeding with $\expr_2$.

\begin{figure}[t]
  \begin{align*}
    \Sem{b}_\rho \eqdef{}&
                           \begin{cases}
                             \Rret(1)&\mbox{if b = $\langkwa{true}$}\\
                             \Rret(0)&\mbox{otherwise}
                           \end{cases}
    & \Sem{\unittt}_\rho \eqdef{} & \Rret()\\
    \Sem{n} \eqdef&{} \Rret(n)
                                    &  \Sem{\LamA \avar.\expr}_\rho  \eqdef{}&
                                                                               \Fun(\Next(\Lam \alpha.\Sem{\expr}_{\rho[\avar \mapsto \alpha]})) \\
    \Sem{\avar}_\rho \eqdef{}& \Force(\rho(\avar))
                                    &  \Sem{\expr_1\ \expr_2}_\rho \eqdef{}&
                                                                             \begin{aligned}[t]
                                                                               &\LET x = \Sem{\expr_2}_{\rho_2} IN \\
                                                                               &\APPs{\Sem{\expr_1}_{\rho_1}}{\Thunk(x)}
                                                                             \end{aligned} \\
    \Sem{\LetA \aPair{\avar_1}{\avar_2}=\expr_1 in \expr_2}_{\rho} \eqdef{}&
                                                                             \begin{aligned}[t]
                                                                               &\LET x=\Sem{\expr_1}_{\rho_1} IN\\
                                                                               &\LET y=\Thunk(\Proj{1}(x)) IN\\
                                                                               &\LET z=\Thunk(\Proj{2}(x)) IN\\
                                                                               &\Sem{\expr_2}_{\rho_2[\avar_1\mapsto y,\avar_2\mapsto z]}
                                                                             \end{aligned}
    &  \Sem{\aPair{\expr_1}{\expr_2}}_{\rho} \eqdef{}&
                                                       (\Sem{\expr_1}_{\rho_2},\Sem{\expr_2}_{\rho_2}) \\
    \Sem{\aAlloc(\expr)}_{\rho} \eqdef{}&
                                          \LET x=\Sem{\expr}_\rho IN \ALLOC(x,\Rret)
                                    &  \Sem{\aDealloc(\expr)}_\rho \eqdef{}&
                                                                             \begin{aligned}[t]
                                                                               &\getret(\Sem{\expr}_\rho,\DEALLOC)
                                                                             \end{aligned}\\
    \Sem{\aReplace(\expr_1,\expr_2)}_{\rho} \eqdef{}& \multispan{3}{$\begin{aligned}[t] &\LET y = \Sem{\expr_2}_{\rho_2} IN \getret(\Sem{\expr_1}_{\rho_1}, \Lam \ell.\LET x = \XCHG(\ell,y) IN\\ & \quad (x, \Rret(\ell)))\end{aligned}$}\\
    \Sem{\aFork(\expr_1, \expr_2)}_{\rho} \eqdef{}&\FORK(\Sem{\expr_1}_{\rho_1}) \SEQ \Sem{\expr_2}_{\rho_2}
  \end{align*}

  \caption{Interpretation of $\afflang$ with concurrency.}
  \label{fig:afflang_interp}
\end{figure}

\begin{figure}[t]\small
  \begin{align*}
    \AVV{\tUnit}(\beta_v) \eqdef{}& \beta_v = \Rret()
&  \AVV{\tNat}(\beta_v) \eqdef{}& \Exists n \in \mathbb{N}. \beta_v = \Rret(n)
\\
  \AVV{\type_1 \lolli \type_2}(\beta_v) \eqdef{}&
                                                     \All \alpha_w. \AVV{\type_1}(\alpha_w)
                                                     \wand \AEE{\type_2}(\APPs{\beta_v}{\alpha_w})
& \AVV{\tBool}(\beta_v) \eqdef{}& \beta_v = \Rret(0) \vee \beta_v = \Rret(1)
    \\ \AVV{\tRef{\type}}(\beta_v) \eqdef{}&
      \begin{aligned}[t]
      &\Exists \ell \in \Loc,\, \alpha_v. \big(\beta_v = \Rret(\ell))\big) \ast{}\\
      & \ell\mapsto \alpha_v \ast \AVV{\type}(\alpha_v)
      \end{aligned}
& \AVV{\type_1 \atensor \type_2}(\beta_v) \eqdef{}&
  \begin{aligned}[t]
    &\Exists \gamma_v,\delta_v. \beta_v = (\gamma_v,\delta_v) \ast{}\\
    &\AVV{\type_1}(\gamma_v) \ast \AVV{\type_2}(\delta_v)
  \end{aligned}
\\\AEE{\type}(\alpha) \eqdef{}&
      \heapctx \wand \wpre{\alpha}{\Ret \beta_v. \AVV{\type}(\beta_v)}
& \protec(\Phi)(\beta_v) \eqdef{}& \wpre{\Force(\beta_v)}{\Phi}
\\ \AVVCtx{\aCtx}(\rho) \eqdef{}&\All (\avar :\type) \in \aCtx.
   \protec(\AVV{\type})(\rho(\avar))
&  \aCtx \modelsAff \alpha: \typeAff \eqdef{}
      \All \rho.{}& \AVVCtx{\aCtx}(\rho) \implies
          \AEE{\type}(\alpha(\rho))
  \end{align*}
\caption{Logical relation for $\afflang$ with concurrency.}
\label{fig:afflang_logrel}
\end{figure}

\subsection{Program Logic and Type Safety}

To reason about concurrent programs using \gitrees, we extend the weakest precondition logic with rules for the new effects.
The key insight is that concurrent effects require \emph{invariants} to reason about shared state.

We define two invariants:
\begin{align*}
\mathsf{heapctx} &\eqdef \knowInv{\namesp_1}{\Exists \sigma. \hasstate_{\mathtt{heap}}(\sigma) \ast \ownGhost{}{\authfull \sigma}} \\
\mathsf{forkctx} &\eqdef \knowInv{\namesp_2}{\hasstate_{\mathtt{fork}}(())}
\end{align*}
These invariants ensure that the state is always owned by exactly one thread at a time, and can be temporarily acquired when performing an effectful operation.

The weakest precondition rules for the concurrent effects are shown below.
The rule for $\XCHG$ requires ownership of the location $\ell \mapsto \alpha$, and after the exchange, the ownership is updated to $\ell \mapsto w$.
Crucially, this rule is atomic: the read and write happen in a single step, preventing race conditions.

The rule for $\FORK$ requires the invariant $\mathsf{forkctx}$ and ensures that the forked thread $\alpha$ is safe (i.e., $\wpre{\alpha}{\top}$).
After forking, the parent thread continues with $\Phi(\Rret())$.

\begin{mathpar}\small
\infer
  {\mathsf{heapctx} \and
    \later \ell \mapsto \alpha
    \and
    \later (\ell \mapsto w \wand \wpre{\alpha}{\Phi})}
  {\wpre{(\XCHG(\ell, w))}{\Phi}}
  \and
  \infer
  {\mathsf{forkctx} \and
    \later \wpre{\alpha}{\top}
    \and
    \later \Phi(\Rret())}
  {\wpre{(\FORK(\alpha))}{\Phi}}
\end{mathpar}

Using these rules, we verify the interpretation of $\afflang$ shown in~\Cref{fig:afflang_interp}.
The logical relation in~\Cref{fig:afflang_logrel} extends the standard logical relation for affine types with support for references and concurrency.
In particular, the interpretation of reference types $\AVV{\tRef{\type}}(\beta_v)$ requires that the value is a location $\ell$ pointing to a value of type $\type$.

The semantic typing judgment $\aCtx \modelsAff \alpha: \typeAff$ states that the interpretation $\alpha$ of an expression is safe and produces a value of type $\typeAff$ when executed in an environment satisfying $\aCtx$.
Using this logical relation, we prove type soundness for $\afflang$ with concurrency:

\begin{lemma}[Fundamental Lemma for $\afflang$]
  If $\aCtx \provesAff \expr : \typeAff$, then $\aCtx \modelsAff \Sem{\expr} : \typeAff$.
\end{lemma}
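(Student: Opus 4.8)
The plan is to prove the Fundamental Lemma by induction on the affine typing derivation $\aCtx \provesAff \expr : \typeAff$. For each typing rule, I would show that the semantic typing judgment $\aCtx \modelsAff \Sem{\expr} : \typeAff$ holds, unfolding the definitions from \Cref{fig:afflang_logrel}: given an environment $\rho$ satisfying $\AVVCtx{\aCtx}(\rho)$, I must establish $\AEE{\typeAff}(\Sem{\expr}_\rho)$, which amounts to proving the weakest precondition $\wpre{\Sem{\expr}_\rho}{\Ret \beta_v. \AVV{\typeAff}(\beta_v)}$ under the $\heapctx$ invariant. The overall structure mirrors the standard affine-language soundness proof of \citet{DBLP:journals/pacmpl/FruminTB24}; the new content is entirely concentrated in the cases for the concurrency-related constructs and in the treatment of variables via $\Thunk$ and $\Force$.

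\textbf{First} I would dispatch the structural cases (variables, abstraction, application, pairs, let-binding, and the reference operations $\aAlloc$, $\aDealloc$, $\aReplace$) that are essentially inherited from the sequential development, paying attention to the affine bookkeeping: the context-splitting rules require splitting $\AVVCtx{\aCtx_1, \aCtx_2}(\rho)$ into $\AVVCtx{\aCtx_1}(\rho_1) \ast \AVVCtx{\aCtx_2}(\rho_2)$, which is sound because the logical relation is built from separating conjunction. The crucial variable case uses $\Sem{\avar}_\rho = \Force(\rho(\avar))$ together with $\protec(\AVV{\typeAff})(\rho(\avar)) = \wpre{\Force(\rho(\avar))}{\AVV{\typeAff}}$, so the $\protec$ wrapper in $\AVVCtx{}$ is precisely what lets a forced thunk satisfy the expression relation. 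For $\aReplace$, instead of sequencing $\READ$ and $\WRITE$ I would invoke the \ruleref{wp-atomic} rule (specialized to $\XCHG$): owning $\ell \mapsto \alpha_v$ with $\AVV{\typeAff_1}(\alpha_v)$, the atomic exchange returns the old value while installing the new one, yielding the required pair in $\AVV{\typeAff_1 \atensor \tRef{\typeAff_2}}$.

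\textbf{The genuinely new case} is $\aFork(\expr_1, \expr_2)$, interpreted as $\FORK(\Sem{\expr_1}_{\rho_1}) \SEQ \Sem{\expr_2}_{\rho_2}$. Here I would apply the $\FORK$ weakest-precondition rule: its first premise $\later \wpre{\Sem{\expr_1}_{\rho_1}}{\top}$ is discharged by the induction hypothesis for $\expr_1$ (weakening its postcondition $\AVV{\tUnit}$ to $\top$), and the continuation obligation $\later \Phi(\Rret())$ reduces, after the $\SEQ$ unfolds to the unit-returning continuation, to proving $\AEE{\typeAff}(\Sem{\expr_2}_{\rho_2})$, which is exactly the induction hypothesis for $\expr_2$. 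The typing constraint that $\expr_1 : \tUnit$ guarantees no affine resources are stranded in the forked thread, so the separation-logic resources split cleanly between the spawned thread and the parent.

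\textbf{The main obstacle} I anticipate is the correct handling of shared state across forked threads, specifically reconciling the $\heapctx$ invariant with the affine points-to assertions $\ell \mapsto \alpha_v$ that appear inside $\AVV{\tRef{\typeAff}}$. Because forking duplicates access to the global heap, I must ensure the invariant discipline allows a location's ownership to be transferred into the invariant and reacquired atomically during $\XCHG$, without the forked thread and parent simultaneously claiming the same location. This is precisely why the atomic effect is indispensable and why $\Thunk$ uses $\XCHG$ rather than a plain read: the single-step atomicity of \ruleref{wp-atomic} is what makes the linearity check (exchanging $0$ for $1$, erroring on a second use) race-free. Verifying that the ghost-state and invariant machinery from $\heapctx$ and $\forkctx$ compose correctly under the threadpool step relation $\istep_{\mathsf{tp}}$ is the delicate technical heart of the proof.
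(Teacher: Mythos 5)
Your proposal is correct and takes essentially the same route as the paper: the paper states this lemma without an explicit proof text (deferring to the Rocq formalization), and the intended argument is precisely the induction on the affine typing derivation that you describe, with the sequential cases inherited from \citet{DBLP:journals/pacmpl/FruminTB24} and the new content concentrated in the $\aReplace$ case (discharged by \ruleref{wp-atomic} specialized to $\XCHG$) and the $\aFork$ case (discharged by the $\FORK$ rule, weakening the spawned thread's postcondition $\AVV{\tUnit}$ to $\top$ and using the induction hypothesis for the continuation). Your treatment of variables via $\protec$, $\Thunk$, and $\Force$, and of context splitting for the affine judgments, matches the paper's logical relation exactly.
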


The type safety theorem for concurrent \gitrees ensures that well-typed programs are safe and make progress:

\begin{theorem}[Type Safety for Concurrent $\afflang$]
  Let $\provesAff \expr : \typeAff$. Then for any threadpool $\vec{\alpha}$ and state $\sigma$ such that $(\Sem{\expr}, \sigma) \istep_{\mathsf{tp}}^{\ast} (\vec{\alpha},\sigma')$, all threads in $\vec{\alpha}$ are either values or can take a step.
\end{theorem}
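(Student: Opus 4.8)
The plan is to derive the theorem from the Fundamental Lemma together with the threadpool-lifted adequacy theorem, following the standard semantic-type-safety recipe. First I would instantiate the Fundamental Lemma for $\afflang$ at the empty context to obtain $\emptyset \modelsAff \Sem{\expr} : \typeAff$. By definition this judgment reads $\All \rho.\, \AVVCtx{\emptyset}(\rho) \implies \AEE{\typeAff}(\Sem{\expr}_\rho)$; the environment hypothesis $\AVVCtx{\emptyset}(\rho)$ holds vacuously, and since $\expr$ is closed its interpretation does not depend on $\rho$, so I extract $\AEE{\typeAff}(\Sem{\expr})$. Unfolding $\AEE$, this is $\heapctx \wand \wpre{\Sem{\expr}}{\Ret \beta_v.\, \AVV{\typeAff}(\beta_v)}$.

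Second, I would discharge the ambient invariants. The adequacy statement is phrased in terms of raw state ownership $\hasstate(\sigma_0)$ for the initial empty state, whereas $\AEE$ assumes the persistent invariant $\heapctx$, and the rule for $\FORK$ additionally needs $\mathsf{forkctx}$. I would therefore allocate these invariants from $\hasstate(\sigma_0)$ together with freshly allocated ghost resources $\ownGhost{}{\authfull \sigma_0}$, using the standard Iris invariant-allocation rule. Feeding $\heapctx$ into the implication above yields an unconditional $\wpre{\Sem{\expr}}{\Ret \beta_v.\, \AVV{\typeAff}(\beta_v)}$ for the main thread.

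Third, I would appeal to the threadpool version of \Cref{thm:wp_adequacy}. The crucial observation is that the concurrent weakest precondition already certifies the safety of every spawned thread: the proof rule for $\FORK$ demands $\later \wpre{\alpha}{\top}$ for each forked GITree $\alpha$, so establishing the WP for $\Sem{\expr}$ simultaneously discharges safety for all threads ever added to the pool. Threadpool adequacy then states precisely that for any reachable $(\vec{\alpha}, \sigma')$ with $(\Sem{\expr}, \sigma_0) \istep_{\mathsf{tp}}^{\ast} (\vec{\alpha}, \sigma')$, each thread is either a GITree value or can take an $\istep$ step, which is the desired conclusion.

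I expect the main obstacle to lie in the adequacy step rather than the logical-relations bookkeeping. One must verify that the threadpool generalization of \Cref{thm:wp_adequacy} correctly threads per-thread safety through the concurrent relation $\istep_{\mathsf{tp}}$: that the state interpretation guarded by $\heapctx$ is re-established after each step, and that forked threads genuinely inherit the trivial postcondition $\top$ in a sound manner. The most delicate point is the atomicity of $\XCHG$, whose read-modify-write must occur in a single logical step (as encoded by \ruleref{wp-atomic}), so that no intermediate state is observable by concurrent threads; this is exactly what rules out the read/write races motivating the atomic effect. Checking that this single-step discipline is preserved by the model, and that $\istep_{\mathsf{tp}}$ together with the invariant reasoning soundly supports it, is where the real work sits.
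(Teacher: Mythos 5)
Your proposal is correct and follows essentially the same route as the paper: instantiate the Fundamental Lemma for $\afflang$ at the empty context, unfold the semantic judgment $\AEE{\typeAff}(\Sem{\expr})$ to a weakest precondition under the allocated invariants $\heapctx$ and $\mathsf{forkctx}$, and conclude via the threadpool extension of \Cref{thm:wp_adequacy}, whose soundness rests precisely on the $\later \wpre{\alpha}{\top}$ premise of the fork rule certifying every spawned thread. Your identification of the delicate points — re-establishing the state interpretation across $\istep_{\mathsf{tp}}$ steps and the single-logical-step atomicity of $\XCHG$ — matches where the actual work lies in the formalization.
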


This demonstrates that the concurrency extension to \gitrees is sound and can be used to verify language interactions of a language with concurrency, shared mutable state and affine types with other languages.

\section{Discussion and Related Work}
\label{global:sec:conc}
We conclude the paper by discussing related work and future directions.

This paper extends GITrees to handle context-dependent effects, which allows us to model higher-order languages with control operators like $\callcc{\var}{\expr}$ and $\control{\var}{\expr}$.
We provided step-by-step guidance on designing such effects through the example of exceptions.
Additionally, we presented an orthogonal extension for preemptive concurrency, demonstrating how higher-order effects enable generic atomic operations.
We showed this extension supports interoperability between languages with different context-dependent effects, while preserving reasoning about context-independent effects.
Our approach leverages the native support for higher-order functions and effects in \gitrees.
This differs from the first-order effect representation of effects in ITrees~\cite{DBLP:journals/pacmpl/XiaZHHMPZ20}, which would require explicitly first-order representation of functions and continuations, if we want to model first-class continuation.
Such model would mix syntactic and semantic concerns, which is part of what we are trying to avoid by working with (G)ITrees.

Another difference with ITrees is our approach to reasoning.
While ITrees use bisimulation-based equational theory, we follow GITrees in using tailored program logics and defining refinements.
Our logic are expressive enough to define logical relations and carry out computational adequacy proofs.
In future work, it would be interesting to develop techniques for reasoning about weaker notions of equality than the basic equational theory that GITrees comes equipped with, see the more extensive discussion of this point in \cite{DBLP:journals/pacmpl/FruminTB24}.

The ``classical'' domain theory remains an important source of inspiration and ideas for our development, and we want to mention some of the related work along those lines.
Cartwright and Felleisen~\cite{cartwright:1994} introduced a framework of extensible direct models for constructing modular denotational semantics of programming languages.
Their framework centers on an abstract notion of \emph{resources} for representing effects (such as store or continuations) and a central \texttt{admin} function that manages these resources.
Each language extension defines both the types (domains) of additional values and resources, and specifies the \emph{actions} that the \texttt{admin} function can perform on these resources.
Building on this framework, Sitaram and Felleisen~\cite{sitaram:1991} demonstrated that such models can provide direct-style fully abstract semantics for control operators. Their approach interprets effects, including continuations, by delegating them to a top-level handler.
Our work adopts several key ideas from this line of research but reformulates them in the context of \gitrees rather than classical domain theory.
In our framework, effect signatures define resources, reifiers specify actions, and the reduction relation serves as the central authority dispatching the effects.
The transition to \gitrees enables us to formalize the extensibility of this approach in a practical manner, and it allows us to develop program logics where ``resources'' (as above) become resources in separation logic.

Compared to other programming languages paradigms and effects, type systems and logical relations for delimited continuations have not been studied as comprehensively.
The original type system for \texttt{shift}/\texttt{reset} is due to Danvy and Filinski \cite{danvy.filinski.1989}, where they employ \emph{answer-type modification}.
Materzok and Biernacki \cite{materzok:2011} generalized this type system to account for more involved control operators \texttt{shift}$_{0}$/\texttt{reset}$_0$; an alternative substructural type system for these operators was designed by Kiselyov and Shan \cite{kiselyov:2007}.
Dyvbig, Peyton Jones, and Sabry \cite{dyvbig:2007} provide a typed monadic account of CPS for delimited continuations with dynamic prompt generation.
Asai and Kameyama~\cite{asai:2007} present a polymorphic variant of the Danvy and Filinski's type system.

Biernacka and Biernacki~\cite{DBLP:conf/ppdp/BiernackaB09} prove termination for a language with $\control{\var}{\expr}$ and $\delim{\expr}$ (but without recursion) using logical relations based on abstract machine semantics.
The shape of their logical relation is similar to our logical relation
used for showing adequacy in that they also have relations for configurations, metacontinuations, \etc{}
Asai \cite{DBLP:conf/sfp/Asai05} uses type-directed logical relations to verify a direct-style specializer (partial evaluator) for a language with $\control{\var}{\expr}$ and $\delim{\expr}$, proving correctness against evaluation-context based operational semantics.
In contrast with those works, we define our logical relations on denotations, using the semantics of \gitrees and the derived program logics.

In our interoperability example we showed type safety of a combined language, with respect to denotational semantics.
In future work we would like to examine other properties: for example, that $\delimlang$ expressions cannot disrupt $\embedstlang$'s control flow, perhaps establishing some form of well-bracketedness as in \cite{DBLP:journals/pacmpl/TimanyGB24}.

We would also like to study other context-dependent effects like handlers and algebraic effects \cite{PlotkinPretnar:2013,BauerPretnar:2015,WuEtAl:2014,vandenBergEtAl:2021,BachPoulsenvanderRest:2023}.
In particular, it would be interesting to give a denotational semantics to a language with handlers, derive program logic rules for it, and compare the resulting program logic to the one in \cite{vilhena-pottier}.
Additionally, the concurrency extension could be developed further to handle more sophisticated synchronization primitives and scheduling policies.

\begin{acks}
  This work was co-funded by the European Union (ERC, CHORDS, 101096090).
  Views and opinions expressed are however those of the author(s) only and
  do not necessarily reflect those of the European Union or the European
  Research Council. Neither the European Union nor the granting authority
  can be held responsible for them.
  This work was supported in part by a Villum Investigator grant (no.
  25804), Center for Basic Research in Program Verification (CPV), from
  the VILLUM Foundation.
\end{acks}

\bibliographystyle{ACM-Reference-Format}

\end{document}